% graphics packages
% standard math packages

\documentclass[12pt]{article}
\usepackage{microtype}

\usepackage{setspace} 
\usepackage{pstricks}
\usepackage{pst-node}
\usepackage{titlesec}
\usepackage[all]{xy}
%\parskip -1ex

%%%%%%%%%%%%%%%%%%%%%%%%%%%%%%%%%%%%%%%%%%%%%%%%%%%%%%%%%%%%%%%%%%%%%%%%%%%%%%%%%%%%%%%%%%%%%%%%%%%%%%%%%%%%%%%%%%%%%%%%%%%%%%%%%%%%%%%%%%%%%%%%%%%%%%%%%%%%%%%%%%%%%%%%%%%%%%%%%%%%%%%%%%%%%%%%%%%%%%%%%%%%%%%%%%%%%%%%%%%%%%%%%%%%%%%%%%%%%%%%%%%%%%%%%%%%
%\usepackage{authblk}
%\setcounter{MaxMatrixCols}{10}
%\usepackage{titlesec}
\usepackage{enumerate}
\usepackage{graphicx}
\usepackage[margin=1in]{geometry}
\usepackage{pgf}
\usepackage{tikz}
\usepackage{pgfplots}
\usepackage{caption}
\usepackage{subcaption}
\usepackage{amsmath}
\usepackage{amsfonts}
\usepackage{amssymb}
\usepackage{amsthm}
\usepackage{mathrsfs}
\usepackage{natbib} 
\usepackage{pgfplots}
\pgfplotsset{compat=1.18}
\usepackage{tocbibind}
\usepackage{subcaption}
\usepackage[toc,page]{appendix}
\usepackage[colorlinks = true,
linkcolor = blue,
urlcolor  = blue,
citecolor = blue,
anchorcolor = blue]{hyperref}

%TCIDATA{OutputFilter=Latex.dll}
%TCIDATA{Version=5.50.0.2890}
%TCIDATA{<META NAME="SaveForMode" CONTENT="1">}
%TCIDATA{BibliographyScheme=Manual}
%TCIDATA{LastRevised=Monday, November 09, 2015 09:13:43}
%TCIDATA{<META NAME="GraphicsSave" CONTENT="32">}

% \usepackage{microtype}

\usetikzlibrary{decorations.markings}

\newtheorem{prop}{Proposition}

\newtheorem{observation}{Observation}
\newtheorem{example}{Example}
\newtheorem{definition}{Definition}

\newcommand{\supp}{\text{supp}}
\newtheorem{appxlem}{Lemma}[section]
\newtheorem{appxprop}{Proposition}[section]
\newtheorem{appxdef}{Definition}[section]

\newlength\tindent
\setlength{\tindent}{\parindent}
\setlength{\parindent}{0pt}

 \newcommand{%  
     \scalebox{}{\input{}}  
}[2]{%  
     \scalebox{#1}{\input{#2}}  
}

\begin{document}
	
	\title{Mechanism Design with Endogenous Perception\thanks{We are grateful to the editor, Eduardo Azevedo, and two referees whose insightful comments greatly improved the paper. We also thank Rani Spiegler, Stephen Morris, Juan Carlos Carbajal, Kentaro Tomoeda, Lu{\'i}s Pontes de Vasconcelos, Alexander Jakobsen, Isa Hafalir, Jun Zhang, Antonio Rosato, participants of the UNSW-UQ Economic Theory Festival 2022, participants at the Australasian Meeting of the Econometric Society 2023 and participants at the Behavioral Mechanism Design Workshop 2024 at UNSW for insightful comments.} }
	
	\author{Benjamin Balzer\thanks{University of Technology Sydney (benjamin.balzer@uts.edu.au)} \quad \& \quad Benjamin Young\thanks{University of Technology Sydney (benjamin.young@uts.edu.au)}}
	\date{\today}  
	
	\maketitle
	
	\begin{abstract}
	We model endogenous perception of private information in single-agent
screening problems, with potential evaluation errors. The agent’s
evaluation of their type depends on their cognitive state: either attentive (i.e.,
they correctly perceive their type) or inattentive (i.e., they might misperceive their
type). The mechanism’s incentives structure determines the agent's cognitive state via costly
investment in cognition. We derive a general representation of attention incentives, show how they vary with the mechanism’s allocation rule, and define a notion of accuracy of perception. In applications we showcase how perception both shapes and is shaped by the design of mechanisms.

	\end{abstract}

	\section{Introduction}
	
	There is a long tradition in economics of assuming that individuals hold correct perceptions of their preferences and beliefs. Recently, the study of behavioral economics has suggested that the processing of information about one's own preferences may be subject to bias.\footnote{Examples of processing errors include distorting probabilities when performing expected-utility calculations (see, e.g., \cite{kt1979}, \cite{prelec1998}, and \cite{quiggin2012}), responses to salience effects \citep{gs2010}, or the under- or over-reaction to information \citep{Phillips1966,Barberis1998}.} Moreover, such misperceptions can adversely affect the functioning of markets.\footnote{For example, misperception impacts on gender imbalances in career progression \citep{niederle2007,exley2022},
		costly usage of the court system \citep{goodman2010}, investment in human capital based on perceptions of motherhood \citep{kuziemko2018}, job search \citep{adams2023}, and participation in markets for extended warranties \citep{abito2019}.} Consequently, investigating the impact of such behavioral or misperceived preferences on the performance of institutions has received significant interest. There is evidence, however, that such misperceptions may not constitute a fixed characteristic of individuals but, rather, depend on an individual's cognitive state which itself responds to the environment.\footnote{For example, \cite{kahneman2011} describes a dual-process system of intuitive versus contemplative decision-making that individuals can transition within, depending on the context. Moreover, there is evidence that biases may constitute a `rational irrationality' \citep{tirole2002}, attention responds to incentives \citep{gaglianone2022}, stakes play a role in the persistence of biases 
		\citep{Zimmermann2020,Fehr2022}, and depth of reasoning may depend on incentives \citep{ap2016}.}
	In this paper, we provide a portable theory of how preference perception endogenously changes with the incentives prevalent in the environment. As such, our theory speaks to how the perceptions are shaped by, rather than simply affecting, the design of institutions.\\   
	
	Consider a consumer who is in the market for a car. The consumer needs to inspect the car to perceive their valuation of the car and then, based on this perception, decide whether to purchase the car or not. The process of inspecting the car and forming a valuation depends on how prepared they are to do so. Specifically, to properly assess the car, they need to invest cognitive resources to work out the relevant aspects of the car to inspect and the details to pay attention to given their own needs. Invoking such a mindful process is costly. Instead, the consumer can inspect the car without this cognitive investment, in which case they will only pay attention to salient aspects of the car (i.e., its appearance, its size, etc) while ignoring details of less salient aspects (e.g., mileage, engine quality, etc). These neglected details no longer come to mind \citep{gs2010}, leading the consumer to a biased evaluation of the car's value.\footnote{Effectively, the consumer treats these neglected aspects as \emph{unknown unknowns} as described in \cite{kahneman2011}. As such, they do not question where their perception of the car's value comes from when making their purchase decision.} The consumer's incentives to exert cognitive effort to ensure an accurate evaluation process depend on how costly inattentive information processing is in terms of making decisions based on a biased valuation. In particular, the mechanism that the seller uses (e.g., the posted price of the car) plays a role in determining whether such cognitive investment is warranted. Indeed, if salient aspects of the car serve to increase its value in the consumer's mind, then higher prices might incentivize greater investment in cognition to avoid costly over-purchase. \\
	 
	In Section \ref{section:model} we introduce our model of endogenous perception in screening problems with private information. We focus on a single agent with quasi-linear utility. Such settings include, but are not limited to, efficient provision of goods, maximizing revenue through price discrimination, and
	optimal incentive contracts with adverse selection. In these settings, a designer aims to allocate resources optimally given their objective. Specifically, the designer chooses a rule that determines the agent's allocation as a function of their type report. This allocation rule has a different interpretation depending on the context. For example, it might specify whether the agent receives a good, the quality of provision of a good, or the level of effort the agent should exert. The designer faces the friction that they lack complete knowledge of the agent's privately-known type. As such, the first-best allocation rule is typically not feasible and the designer has to screen the agent. That is, the designer trades off providing rents to the agent to reveal their private information against the implied costs on the objective.  \\

    We consider this simple environment in order to focus on the implications of endogenous perception formation in mechanism design problems. Indeed, our main contribution is to introduce the idea that the agent forms their type perception in different cognitive states, where the state the agent operates in responds to the mechanism's incentives. Specifically, we separate the agent's decision-making problem into an evaluation phase (where the agent forms a perception of their type) and a mechanism phase (where the agent takes optimal actions given this perception). The agent is in one of two cognitive states: either attentive (i.e., always perceive their type correctly) or inattentive (i.e., potentially misperceive their type). In the evaluation phase, when inattentive, the agent forms their type via a \emph{perception-generating process} (PGP), mapping true types into perceptions of these types. Our framework allows for arbitrary PGPs, including both unbiased ones (i.e., generated via information) and biased ones. For example, we allow for both optimism (i.e., perception always being higher than the true type) and pessimism (i.e., perception always falling below the true type), which are inconsistent with information processing. In the mechanism phase, the agent takes their formed perception as given (i.e., treats their perception as their true type) and chooses an optimal action accordingly. Whether the agent is attentive or inattentive when evaluating their type depends on whether they exert sufficient cognitive effort during this process. In particular, the agent is attentive if the expected payoff difference between what the agent obtains from being attentive versus operating under inattention exceeds some cognitive cost. We call this payoff difference the \emph{value of attention}. Incentives to make a cognitive investment are tied to the mechanism's allocation rule, as this crucially shapes the value of attention. Hence, when choosing a mechanism, the designer needs to consider not only incentives for truthful reporting of types but also incentives for the agent's attention decision.\\

   Our preferred interpretation of the model is one in which the agent is not aware of any biases that realize but is \emph{meta-aware} about their potential to form biases and, thus, can avoid doing so at a cognitive cost.\footnote{We borrow the term \emph{meta awareness} from \cite{Loewenstein2003}. The idea that individuals are only meta-aware rather than aware of their biases is a common theme in the behavioral economics literature. In Section~\ref{sec:discussion}, we further motivate this interpretation of our model.} We employ a `dual-selves' approach to micro-found this meta-awareness: the agent consists of a planner and an inattentive doer. Before the decision-making process, the planner decides whether to be attentive or inattentive. If attentive, the planner takes control of the decision-making process and perceives their type correctly. If inattentive, however, the planner delegates decision making to the inattentive doer, who may arrive at an inaccurate perception of the true type.\footnote{There are alternative interpretations of our model. For example, one can think of the planner as an organization that is aware that their employees are subject to biases in evaluation. De-biasing the employees, through, for example, mandatory Unconscious Bias Training, will be offered only if the benefits of unbiased decision-making outweigh the costs. Similarly, a benevolent and rational third party (a friend, parent, sibling, teacher or advisor) can intervene in the agent's process by providing advice. Costs then stem from either the time involved or the emotional costs of bearing bad news to a person one likes. Finally, one can interpret our model as evolutionary. If the agent faces similar decisions over time, they only become aware of their biases if their past outcomes were sufficiently adverse.} When determining the cognitive state, the planner takes the inattentive doer's behavior, and thus the consequences of arriving at a potentially biased perception, into account. They then choose their cognitive state by trading off the utility losses from inattention against cognitive costs of paying attention.\footnote{Instead of being purely cognitive, these costs could also stem from missed opportunities. For example, a consumer might need to engage in time-consuming research about important aspects of cars before inspection, or they need to satiate their hunger before going shopping. Additionally, these costs could come from monetary payments. For instance, prospective buyers might hire experts to inspect the house structure before making bids, or a financial investor might subscribe to an investment research platform to better evaluate opportunities.} \\

    Since the agent's value of attention determines their cognitive state, we explore how this value depends on the mechanism's features in Section~\ref{sec:VOS}. First, we show that the value of attention admits a representation as a linear functional of the allocation rule. This representation makes it tractable to incorporate endogenous perception via attention constraints into mechanism design problems. Next, we establish that screening is necessary to create attention incentives and that all maximizers of the value of attention share the property that they pool sufficiently high types at maximal allocation and sufficiently low types at minimal allocation. 
	Such coarse screening maximizes the value of attention independently of the agent's PGP as it exacerbates the difference in the behavior of an inattentive agent to that of their attentive counterpart. As such, we highlight a novel cost of screening: providing strong incentives for attention demands coarseness in the allocation rule.\\
	
	Finally, we use the value of attention to define a notion of \emph{accuracy} of perception. Specifically, we say that one PGP is more accurate than another if the former grants the agent higher welfare in every mechanism for any cognitive cost than the latter. We provide a characterization of accuracy based on the statistical properties of the agent's PGP only. This sufficient statistic has a natural interpretation as a measure of (i) how much information is contained in the agent's PGP and, (ii) the degree to which this PGP is biased. We showcase a number of PGPs that are ordered by accuracy. In particular, we show that informativeness (in the sense of mean-preserving spreads) is equivalent to accuracy for unbiased perceptions, that this order captures natural notions of increasing bias in perception, and that some PGPs are ordered even when they conflict in terms of their informational content and degree of bias. \\
	
	We apply our framework in three settings to explore the role that both biased and endogenous perception plays in shaping optimal mechanisms. First, we revisit the problem of designing costly provisions to maximize efficiency. In this setting, it is well-known that ``selling the firm to the agent" (i.e., having the agent internalize the costs of production) is optimal if perception is unbiased. Instead, a biased agent requires the production process to be managed as they do not correctly internalize their own contribution to welfare. Moreover, we establish that even if efficiency is achievable when the agent's cognitive state is fixed, this may not be the case when this state is endogenous. In particular, we identify the novel issue of \emph{over-attentiveness:} if managing the process to account for the agent's biases requires screening, it may provide the agent with too strong incentives to pay attention, thus rendering efficiency unachievable. \\
	
	Second, we explore the role of the PGP in shaping how a revenue-maximizing designer optimally screens as in \cite{Mussa1978}. There, we consider two PGPs that generate the same distributions of both inattentive and attentive perceptions, however, via distinct processes. In contrast to the case of information where, in general, only the final distribution of beliefs matters, we highlight that the entire process via which perception is formed is integral to the design of optimal mechanisms. Specifically, we show a designer who desires the agent to pay attention does so, depending on the PGP, by either using a \emph{carrot} (i.e., increasing the benefits of attention relative to inattention) or a \emph{stick} (i.e., decreasing the benefits of inattention relative to attention). \\
	
	Finally, we investigate the impact of hype in product markets. We consider a buyer who may fall prey to hype if inattentive, in which case they purchase a product independently of their type. Hype is beneficial for welfare as it encourages efficient trade. However, if the agent's cognitive state is fixed, the seller has incentives to extract too much of these welfare gains as rent and, consequently, the buyer prefers not to succumb to hype. Instead, when the cognitive state is endogenous, we show that the buyer may prefer a strictly positive degree of hype. This is due to a novel commitment-flexibility trade-off afforded by endogenous but biased perception: succumbing to hype 
	affords commitment to purchase the product, thus increasing welfare, while flexibility to potentially pay attention and avoid hype limits the seller's possibility to extract rents, allowing the buyer to reap some of these welfare gains. 
	
	\section{Literature Review}
	
	There is a large literature on mechanism-design problems with behavioral agents. Examples include agents who hold dynamically inconsistent preferences \citep{es2006,galperti2015,yu2020}, who are reference-dependent and loss-averse \citep{ce2016,brv2022, gershkov2022}, who are optimistic or over-confident \citep{es2008,sautmann2013}, who exhibit social preferences \citep{kucuksenel2012,imas2022}, who are unaware \citep{vonThadden2012}, who are ambiguity averse \citep{tillio2016}, who struggle to understand the mechanism \citep{gr2014,jakobsen2020}, who exhibit taste projection \citep{antonio2021}, or who are not strategically sophisticated \citep{li2017,pt2019,dl2021}. These papers take the agents' behavioral preferences or bounds on rationality as given and ask how institutions are then shaped by them. Instead, we allow the mechanism to affect the propensity of agents to misperceive their preferences. Thus, our work sheds light on how institutions should be designed to ameliorate (rather than simply account for) the biased decision-making of participants. \\

	There are a number of papers that investigate endogenous perception of preferences and beliefs in strategic environments. These papers generally model perceptions as garblings and thus are models of endogenous information. Within that class we distinguish between two literature strands. The first strand focuses on how much information is made available in given strategic environments  (e.g., see \cite{roesler2017, bergemann2017,bergemann2019}). The second strand concerns how much information is optimally acquired or processed (e.g., see seminal work by \cite{bergemann2002,shi2012} and more recent work by \cite{mensch2022,ravid2020,ravid2022}). 
	In contrast, our model is not one in which the agent acquires information but, rather, investigates the extent to which they can avoid (potentially biased) evaluation of available private information. In this sense, our model shares a similar interpretation to those of rational inattention. In rational inattention, beliefs are generated via garblings (so that perception is unbiased) and these models usually allow for arbitrary garblings. While our model admits unbiased perception as a special case, our focus is on the behavioral notion of misperception and, thus, we allow for more general forms of perception than are admitted by garblings. This comes at the cost of less generality in terms of our restriction to binary cognitive states. For example, we allow for optimism or pessimism at all types; a systematic form of misperception that rational models of information acquisition cannot capture due to the requirement that posterior beliefs under garblings average to the prior. Finally, a few recent contributions explored how biased perceptions can form in market settings \citep{bridet2024, schwardmann2019, immordino2011, immordino2015}. In these papers the agent demands biases, while our agent incurs costs avoiding biases. Instead, the closest paper to ours is \cite{young2022}, who explores endogenous overoptimism in a market setting without private information. We generalize his framework by incorporating private information and a broader range of perceptions, nesting the agent's decision problem in \cite{young2022}. This allows us to more thoroughly examine the interaction between attention incentives and (the design of) mechanisms.  
	\section{The Model}\label{section:model}
	
	\subsection{Formal Framework}

	\textbf{Basic Setting:} We consider a class of canonical screening problems with a single risk-neutral agent. The agent faces a mechanism, $\mathcal{M}$, and takes an action which, according to the rule of the mechanism, generates an outcome. The outcome of the mechanism is a tuple $(q, t)$ where $q \in [0,1]$ is the probability or amount of allocation and $t \in \mathbb{R}$ is a transfer. The agent has private type $\theta \in [0,1]$ drawn according to distribution function $F$. An agent with type $\theta$ receives utility $q \theta - t$ from outcome $(q, t)$.\footnote{The restrictions that $q, \theta\in [0, 1]$ are for ease of exposition. These can be relaxed to fall in arbitrary compact subsets of $\mathbb{R}$ without affecting the main findings.} \\
	
	\textbf{Model Overview:} Our model is one in which the agent may not correctly perceive their true type, $\theta$, but has control over the propensity to do so via their cognitive state. We model the agent's decision-making process in two phases, followed by the initial determination of the cognitive state: (i) an evaluation phase, and (ii) a mechanism phase. Depending on their determined cognitive state, the agent arrives at a perception of their type, $\pi \in [0, 1]$, in the evaluation phase, which is a function of their true type, $\theta$. In the resulting mechanism phase, the agent takes their perception as given (i.e., treats $\pi$ as their true type) and chooses an optimal action given mechanism $\mathcal{M}$. Outcomes are then determined according to the rules of the mechanism as well as how the agent plays $\mathcal{M}$.\\

    Figure~\ref{fig:T} summarizes the model. The interpretation of this model is that of an agent who is made up of two selves: a planner and an inattentive doer. The planner chooses their cognitive state; either attentive (denoted by $A$) or inattentive (denoted by $I$). If attentive the planner takes control of the decision-making process. In particular, during the evaluation phase, they correctly perceive their private information to be $\theta$. If inattentive, however, the planner delegates control to the inattentive doer (e.g., the agent operates on `autopilot' or has distinct objectives that depend on, say, mood; e.g., being hungry) who forms perception of private information that does not necessarily align with the truth (i.e., $\pi$ need not equal $\theta$). Hence, the planner can guarantee optimal behavior in the mechanism only if they are in full control of the decision-making process. Paying attention in this way is costly, and thus creates a trade-off for cognitive state determination.\\

    \begin{figure}  
\centering
     \scalebox{0.7}{ 
	\begin{tikzpicture}[node distance=2cm]
		
 		\node[draw, circle, minimum width=0.025cm, minimum height=0.025cm, fill] (start) at (-0.5,0) {};
		\node[draw, minimum width=2cm, minimum height=1.5cm,blue] (box1) at (3.875,1.6) {
			\begin{tabular}{c}
				\small{{$\theta$ realises}}\\
				\small{{and $\pi=\theta$}}
			\end{tabular}
		};
		\node[draw, minimum width=2cm, minimum height=1.5cm,red] (box2) at (3.875,-1.9) {
			\begin{tabular}{c}
				\small{$\theta$ realizes and} \\
				\small{ $\pi$ drawn} \\
				 \small{with CDF $\rho(\pi|\theta)$}
			\end{tabular}
		};
		\node[draw, minimum width=2cm, minimum height=1.5cm,blue] (box3) at (8.1,1.75) {
			\begin{tabular}{c}
			\small{Agent plays $\mathcal{M}$   }\\ \small{ optimally,}\\
		\small{ taking $\pi$}\\ \small{as true type.}   
			\end{tabular}
		};
		\node[draw, minimum width=2cm, minimum height=1.5cm,red] (box4) at (8.1,-1.75) {
			\begin{tabular}{c}
				\small{Agent plays $\mathcal{M}$   }\\ \small{ optimally,}\\
			\small{ taking $\pi$}\\ \small{as true type.}  
			\end{tabular}
		};
	
% \draw[dashed,blue] (6.7,1.7) ellipse (6 cm and 1.5cm);
%  \draw[dashed,red] (6.7,-1.8) ellipse (6 cm and 1.5cm);
 \draw[thick,dashed] (1.75,0)--(12,0);
		
		\draw[thick,->] (start) -- (box1);
		\draw[thick,->] (start) -- (box2);
		\draw[thick,->] (box1) -- (box3);
		\draw[thick,->] (box2) -- (box4);
		\draw[dashed,thick] (1.75,-3.5)--(1.75,3);
			\draw[dashed,thick] (10.2,-3.5)--(10.2,3);
			\draw[dashed,thick] (6,-3.5)--(6,3);
		 \node at (0,-3.75)   {Cognitive State};
          \node at (0,-4.25)   {Determination};
          \node at (3.875,-3.75)   {Evaluation};
          \node at (3.875,-4.25)   {Phase};
         \node at (8.1,-3.75)   {Mechanism};
          \node at (8.1,-4.25)   {Phase};
		  \node[blue] at (11.5,1.5)   {Planner};
        %  \node[blue] at (14,1.5)   {Doer};
            \node[red] at (11.5,-1.5)  {Inattentive};
          \node[red] at (11.5,-2)   {Doer};
            \node at (0.5,0.75)   {\textcolor{blue}{$A$}};
             \node at (0.5,-0.75)   {\textcolor{red}{$I$}};
             \node at (-0.75, 0) [left,blue] {Planner};
	\end{tikzpicture}}

		\caption{\small{The Structure of the Model.}}
		\label{fig:T}
\end{figure}  
	
	\textbf{Evaluation Phase:} The evaluation phase determines how the agent perceives their type. This realized perception, $\pi \in [0, 1]$, depends both on the true type, $\theta$,
	and the agent's cognitive state at the point in which they evaluate this type. If in cognitive state $A$ (i.e., the planner controls the decision-making process), the agent correctly perceives their type; that is, $\pi = \theta$. In contrast, in state $I$ (where the inattentive doer is in charge), $\pi$ is drawn according to CDF $\rho(\cdot | \theta)$, which depends on true type $\theta$. We call $\rho \equiv \{\rho(\cdot|\theta): \theta \in [0, 1]\}$ the inattentive doer's \emph{perception-generating process (PGP)}. We place no \emph{a priori} structure on $\rho(\cdot | \theta)$, thus allowing for arbitrary misperception of $\theta$.\footnote{We take an, admittedly, reduced-form approach to modeling the formation of perception. That is, we abstract from the process that determines perception and focus instead on the outcome of this process (i.e., $\rho$). By doing so, our model incorporates both deterministic and stochastic perceptions, with unbiased perception (i.e., information) as a special case. See Section~\ref{sec:examples} for examples of prominent models of misperception our model nests.} Upon forming their perception, each agent behaves as if this perception is their true type.\\

	\textbf{Mechanism Phase:} We focus on direct-revelation mechanisms that incentivize the agent to truthfully report their perception, $\pi$.\footnote{Note that, in our setting, the revelation principle holds so that this restriction is without loss of generality. This is because the agent is an expected-utility maximizer and their perceived type, which is in $[0, 1]$, is a sufficient statistic for their behavior, regardless of the cognitive state they are in, as they never question it.} 
	With a slight abuse of notation, for each report $\hat{\pi}\in [0, 1]$, mechanism $\mathcal{M}$
	specifies outcome $(q(\hat{\pi}), t(\hat{\pi}))$, where $q:[0, 1] \to [0, 1]$ is an allocation rule
	and $t:[0, 1] \to \mathbb{R}$ is a transfer rule. Direct-revelation mechanism $\mathcal{M}$ is incentive compatible. Let $U(\hat{\pi} | \pi) \equiv \pi q(\hat{\pi}) - t(\hat{\pi})$ denote the agent's perceived utility from reporting $\hat{\pi}$ with perception $\pi$. Then, incentive compatibility is equivalent to $\pi \in \arg \max \limits_{\hat{\pi}\in [0, 1]} U(\hat{\pi} | \pi)$ for all $\pi \in [0, 1]$. It is well-known that incentive compatibility in this setting
	requires that $q(\cdot)$ be non-decreasing and 
	$$
	t(\pi) = \pi q(\pi) - \int \limits_{0}^\pi q(x) dx - U(0|0),
	$$
	for all $\pi \in [0, 1]$. Let $Q$ denote the set of feasible allocation rules.\\
	
	\textbf{Determination of the Cognitive State:} Whether the agent is inattentive or attentive depends on the mechanism. Specifically, the agent's planning self compares the ex-ante expected utility (taken with respect to their true type distribution) from attentive and inattentive behavior. However, attention comes at cognitive cost: the agent incurs cost $\kappa \ge 0$ to be attentive by having their planning self take control of the decision-making process. Instead, delegating decision making to their inattentive doing self comes at no cost. \\
	
	Recall that, in an incentive-compatible mechanism, the agent truthfully reports their perception, $\pi$. Moreover, if attentive $\pi = \theta$, and if inattentive $\pi$ is drawn, conditional on $\theta$, according to CDF $\rho(\cdot|\theta)$. Let $V_A(\mathcal{M}) \equiv \int \limits_0^1 U(\theta | \theta) dF(\theta)$ denotes the agent's utility if attentive (net of cognitive cost $\kappa$) and $V_I(\mathcal{M}) \equiv \int \limits_0^1 \int \limits_0^1 U(\pi | \theta)d\rho(\pi|\theta) dF(\theta)$ denote that under inattention. Then, the agent is willing to be attentive if and only if $V_A(\mathcal{M}) - \kappa \ge V_I(\mathcal{M})$. Equivalently, the agent is willing to be attentive if and only if
	\begin{equation}\label{eqn:vos}
		\nu(\mathcal{M}) \equiv \int \limits_{0}^{1}\int \limits_0^1 [U(\theta | \theta) - U(\pi | \theta)]d\rho(\pi|\theta) dF(\theta) \ge \kappa.
	\end{equation}
	We call $\nu(\mathcal{M})$ the \emph{value of attention}, which is weakly positive in any incentive-compatible mechanism.\footnote{This follows trivially from the fact that $U(\pi | \theta)$ is maximized at $\pi = \theta$ in an incentive-compatible mechanism.} Note that $\nu(\mathcal{M})$ is a sufficient statistic for the role the mechanism plays in determining the agent's cognitive state.\\
	
	Our formulation is rich enough to incorporate both exogenous inattention (when $\kappa = +\infty$) and exogenous attention (when $\kappa = 0$). If, however, $\kappa \in (0,+\infty)$, then the agent's cognitive state depends on the value of attention.\footnote{This is similar in spirit to $(u, v)$-procedures as introduced in \cite{kalaietal2002}. There, an individual maximizes $u$ as long as it yields sufficiently high $v$ value, else they maximize $v$. Here, an individual stays inattentive if it yields sufficiently high utility, measured using their true distribution of types.} As such, an increasing $\kappa$ forms a bridge between exogenous attention and exogenous inattention.

	\subsection{The Informational Content of Perception}
	
	In our model, the inattentive agent's behavior is determined only by their realized perception. However, the attention decision depends on the agent's true distribution of types and their PGP. As such, attention incentives depend not only on perceptions
	when inattentive (i.e., how the agent will behave) but, also, implicitly on the information contained within these perceptions about their true type (i.e., how the agent should behave). We now proceed to define what is meant by this informational content.\\ 
	
	Let $F_I(\pi) \equiv \int \limits_0^1 \rho(\pi | \theta)dF(\theta)$ denote the distribution
	of the agent's perception when inattentive. We let $\supp(F_I)$ denote the support of distribution function $F_I$.\footnote{Formally, 
		$$
		\supp(F_I) \equiv \left\{\pi \in [0, 1]: \text{$\int \limits_{\pi-\epsilon}^{\pi + \epsilon} dF_I(\pi) > 0$ for all $\epsilon > 0$}\right\}.
		$$} Let $\rho(\cdot | \pi)$ denote \emph{some} conditional
	distribution of true type $\theta$ given perception $\pi$ generated under inattention. Note that, in our setting, some conditional distribution function $\rho(\cdot | \pi)$ always exists for all $\pi \in [0, 1]$ and is uniquely determined almost everywhere on $\supp(F_I)$. Then, let 
	$$
	e_I(\pi) \equiv \int \limits_0^1 \theta d\rho(\theta | \pi)
	$$
	denote the conditional expectation of the agent's true type, $\theta$, given that perception $\pi$ is generated using PGP $\rho$. Since the agent's utility is linear in their type, $e_I(\cdot)$ is a sufficient statistic for the informational content of the agent's perception in that it captures how the agent would behave if they took $\pi$ as a signal of their true type. However, our agent takes their perception as the best estimate of their true type and, as such, behaves according to $\pi$. Thus, when $e_I(\pi) \ne \pi$, there is a wedge between how the agent \emph{should} behave and how they \emph{actually} behave. We define the notions of unbiased and biased perceptions as follows: 
	
	\begin{definition}\label{defn:unbiased}
		Let $\rho$ denote the agent's PGP. We say $\rho$ is \textbf{unbiased} if $e_I(\pi) = \pi$ almost everywhere on $\supp(F_I)$. Else, we say that $\rho$ is \textbf{biased}.
	\end{definition}

	\subsection{Discussion of Key Modeling Assumptions}\label{sec:discussion}
We consider an agent who is meta-aware of their potential to be biased in their evaluation of information. We micro-found the agent's meta-awareness via a dual-selves approach: there is a planner who decides whether to avoid misperception at cognitive cost $\kappa$, or delegate to an inattentive doer who may misperceive their type. We now discuss this modeling approach.\\

Both for the planner and the doer, we separate the decision-making process in two phases: an evaluation phase followed by a mechanism phase. Our main focus in this paper is to understand how an agent's perception of their private information depends on the design of mechanisms. As such, we assume that the agent understands how the mechanism generates outcomes as a function of their behavior, independent of their cognitive state. Thus, while our agent may misperceive their own type, they correctly perceive the mechanism's rules. The potential for the agent to misperceive the rules of the mechanism due to complexity is well-studied (see \cite{gr2014,li2017,pt2019,jakobsen2020,dl2021}). While we abstract from such complexity issues in order to isolate the impact of a mechanism's incentive structure on how private information is perceived, behavior may still depend on misperceived private information which arises during the evaluation phase if the inattentive doer is in charge. 

\subsubsection{The Inattentive Doer}

 When the planner enters the inattentive cognitive state, the agent's beliefs form according to a fixed process, beyond the planner's control, which we consider a model primitive. Consequently, our notion of perception is in reduced form; i.e., we do not explicitly model the process by which the inattentive doer forms their perception. This approach allows for various notions of perception of private information to be embedded into mechanism design. Examples, some of which are formalized in the next section, include probability weighting (e.g., \cite{prelec1998}), over-confidence (see \cite{Malmendier2015} for a survey) and other notions of motivated beliefs \citep{bt2002,Brunnermeier2005,Caplin2019},\footnote{In \cite{Brunnermeier2005} and \cite{Caplin2019}, the agent's beliefs form endogenously as the resolution of the trade-off between felicity (benefits) and accuracy (costs) in decision making. Our model of binary cognitive states takes the PGP when inattentive as primitive (i.e., we do not explicitly model the benefits of motivated beliefs). Yet, our model is similar in spirit in that the agent's perception forms via a biased process only if the (opportunity) cost of being inattentive (the value of attention) is sufficiently low.} categorical thinking \citep{mullainathan2002} and coarse Bayesian updating \citep{Jakobsen2021}, conservatism bias \citep{Phillips1966,Kovach2021}, under- or over-reaction to news \citep{Barberis1998}, base-rate neglect (see \cite{Kahneman1972} and \cite{Benjamin2019} for a recent survey), responses to uninformative framing \citep{mss2008}, and salience effects \citep{gs2010,bgs2012,bgs2013}. Thus, our framework is portable to a variety of contexts where different aspects of misperception may be more relevant than others.\\

 In the mechanism phase, the inattentive doer takes as given their perception after it has formed, and never questions it.\footnote{Of course, there may be situations in which the agent would be able to question their realized perception. In this case, the agent may go back and evaluate further, arriving at a more refined perception. Such a process would still be represented by some PGP capturing residual misperception (potentially biased) that is unable to be questioned.} One interpretation of this is that the agent only pays attention to certain aspects in their evaluation process, while neglecting others (due to salience effects, for example). Their perception of private information is then based on their recount of what was observed via evaluation. Depending on the agent's PGP, ignored aspects are either \emph{known unknowns} (so that perception is unbiased as in rational inattention) or \emph{unknown unknowns} in that they never come to the agent's mind so that their perception is biased.\footnote{These terms were introduced by \cite{kahneman2011} to describe how an individual updates beliefs on the basis of what they observe.} The agent's perception is then internally consistent with what they recall, even if generated in a biased manner. Indeed, there is recent empirical evidence by \cite{Enke2020} that unobserved factors operate as unknown unknowns, whereby what an agent sees is all they believe there is. Alternatively, the inattentive doer's evaluation procedure might be determined by mood (e.g., being hungry) leading them to evaluate private information differently from how the planner would. This could either be the result of mistakes-- i.e., given the agent's mood, they are unable to exert the necessary cognitive resources to arrive at the truth (and rather employ heuristics)-- or via a misalignment of preferences between the planner and the doer. In the latter case, biases in perception are relative to the planner's preferences.\\

   Of course, in order to endogenize the inattentive PGP, one could extend our framework to allow for more than two cognitive states, ordered via some cognitive hierarchy with increasing cognitive costs as the agent operates in a `higher' cognitive state. In such a case, one could interpret higher cognitive states as the planner more intensively monitoring the inattentive doer. This is easier to formalize for certain classes of misperceptions where a natural cognitive hierarchy exists compared to others where the presence of such an order is less obvious. To maintain portability of our framework, we abstract from modeling specific cognitive hierarchies which are more natural to embed on a contextual basis.

\subsubsection{Determination of the Cognitive State}

    According to our preferred interpretation, before entering the evaluation phase the planner decides whether to control the decision-making process at cognitive cost $\kappa$ or delegate it to the inattentive doer. Thus, our agent appears like a person who is meta-aware of their potential for biases (occurring if the inattentive doer controls the decision-making process) at the time when determining their cognitive state (i.e., when the planner decides whether to control the process). The concept of meta-awareness—i.e., the agent determines their cognitive state but is not aware of biases in the underlying process once the perception has formed—allows us to jointly model that (i) biases in perception potentially play a role in decision making, and (ii) the propensity for such biases responds to the mechanism's incentive structure.\footnote{Whether agents are aware or only meta-aware of their biases is generally an important consideration in behavioral models, as discussed in the conclusion of \cite{Loewenstein2003}. There is, however, anecdotal evidence that people are somewhat meta-aware of their potential biases. For example, with the rise of behavioral science there is an abundance of resources directed towards how to avoid biases in evaluation that people have likely been exposed to. As such, we think it is reasonable that, if the stakes are high enough, people will employ these techniques to avoid bias.} Indeed, while there is evidence of biases in perception and decision making, there is also evidence that their occurrence responds to the incentives present in the choice environment (e.g., see \citet{Zimmermann2020, Fehr2022}).\\  

    Finally, note that our notion of meta-awareness utilizes rational expectations: the planner takes the future behavior of the inattentive doer as given and determines the cognitive state by weighing the objective benefits of cognitively demanding behavior (in our case, avoiding errors when inattentive) against cognitive costs. Whether such a trade-off would be resolved rationally is inherently empirical in nature. However, using rational expectations as a starting point for modeling endogenous perception in mechanism design has advantages. First, it places significant discipline on our model, ensuring both tractability and portability across different contexts. Indeed, it enables us to embed (admittedly coarse) notions of rational inattention (via unbiased PGPs). Second, the idea that an agent's beliefs form in response to incentives via a cost-benefit trade-off is commonly utilized in the behavioral literature, for example, in level-k reasoning \citep{larbi2022}, rational inattention \citep{gaglianone2022}, ambiguity \citep{maccheroni2006}, and optimal expectations \citep{Brunnermeier2005}.

	\subsection{Examples of Perception-Generating Processes}\label{sec:examples}
	
	\textbf{Probability Weighting:} Consider a situation in which the agent's true type, $\theta$,
	represents the probability that they value allocation $q$. Thus, an attentive
	agent values outcome $(q, t)$ according to $\theta q - t$. Instead, the inattentive agent weights probability $\theta$ according to $\pi(\theta)$. Note that this is a deterministic perception where distribution $\rho(\pi | \theta)$ has a Dirac measure with point-mass one at $\pi = \pi(\theta)$. Our model flexibly incorporates any form of probability weighting, including over-optimism (i.e., $\pi(\theta) > \theta$ for all $\theta \in (0, 1)$), under-optimism (i.e., $\pi(\theta) < \theta$ for all $\theta \in (0, 1)$), and mixtures of the two such as that proposed by \cite{prelec1998} (i.e., $\pi(\theta) = e^{-\beta(-\log \theta)^\alpha}$ with $\alpha,\beta>0$ for all $\theta \in (0, 1)$).\\
	
	\textbf{Conservatism Bias:} Suppose that the mean of the agent's true type under $F$ is given by $\mu$. The inattentive agent updates conservatively in the sense that they under-update their type when it realizes by continuing to place weight on prior $\mu$. This can be modeled by assuming that there is an $\alpha \in (0, 1)$ such that $\pi(\theta) = \alpha \theta + (1 - \alpha) \mu$. Note that, as in the case of probability weighting, this is a deterministic perception where distribution $\rho(\pi | \theta)$ has Dirac measure with point-mass one at $\pi = \pi(\theta)$. As $\alpha$ increases, the agent places more weight on their true type, and the degree of conservatism decreases.\\
	
	\textbf{Hype in Product Markets:} There is a product that a buyer has true valuation $\theta \sim F$ for. There is, however, hype surrounding the product. If the buyer is inattentive when evaluating the product, then they succumb to hype with probability $h \in [0, 1]$. If they succumb to hype, they incorrectly perceive their valuation for the product to be maximal (i.e., $\pi = 1$) and, as such, will always purchase the product. If the individual does not succumb to hype, then they correctly evaluate their type (i.e., $\pi = \theta$). It is natural to think that $h$ is increasing in how much hype there is surrounding the product and, as such, can be thought of as the \emph{degree of hype}. \\
	
	\textbf{Information:} In a model of information acquisition or information processing (such as rational inattention), agents are generally assumed to be Bayesian in the sense that they use their perception
	as a signal of their true type. Our model incorporates information as the class of unbiased perceptions (see Definition~\ref{defn:unbiased}). In such situations, the true type distribution, $F$, is a mean-preserving spread of the distribution of
	the agent's perception when inattentive, $F_I$ (\cite{MWG}, Proposition 6.D.2). Thus, our framework incorporates binary models of information in which the agent either acquires maximal information (i.e., is attentive) or minimal information (i.e., is inattentive). \\
	
	\textbf{Fictitious Information:} While our model allows for any notion
	of perception based on information loss, it is sufficiently general to also allow the agent to incorrectly perceive that information has been generated when no such information exists. In particular, suppose that $F(\theta) = 1$ if and only if $\theta \ge 1/2$; that is, the agent's true type is always $1/2$. There is another variable, $\pi$, distributed uniformly on $[0, 1]$ and independent of $F$ (i.e., contains no information regarding $\theta$). The inattentive agent, however, incorrectly perceives that $\pi$ is an unbiased estimate of $\theta$ (i.e., $\rho(\pi | 1/2) = \pi$ for all $\pi \in [0, 1]$). As such, the agent behaves as if they have received information: their perception is uniformly distributed on $[0, 1]$ which is a mean-preserving spread of the true type distribution, $F$. This fits with ideas such as consumer responding to uninformative framing, as in \cite{mss2008}.

	\section{The Value of Attention}\label{sec:VOS}
	
	In this section, we explore the attention incentives of the agent in more detail. Our objective is to understand how the agent's value of attention depends on both the mechanism and their own PGP. We begin by providing a useful representation of the value of attention and describe some necessary aspects of both the mechanism and the agent's PGP for its value to be positive.
	
	\begin{prop}\label{prop:gen Z}
		For a given direct mechanism, $\mathcal{M} = (q, t)$, the value of attention admits the representation
		\begin{equation}\label{eqn:vos}
			\nu(\mathcal{M}) = \int \limits_0^1 q(\pi)[F_I(\pi) - F(\pi)]d\pi + \int \limits_0^1 q(\pi)[\pi - e_I(\pi)]dF_I(\pi).
		\end{equation}
		Moreover, 
		\begin{enumerate}[(a)]  
			\item $\nu(\mathcal{M}) = 0$ for all $\mathcal{M}$ if and only if $\rho$ is unbiased and $F=F_I$; and
			\item  $\nu(\mathcal{M}) = 0$ for any PGP $\rho$ if and only if $q$ is a constant function. 			
		\end{enumerate}
	\end{prop}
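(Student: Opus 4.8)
The plan is to first reduce the integrand of $\nu$ using the incentive-compatibility envelope, assemble the two-term representation by Fubini together with a change to the reverse conditional $\rho(\cdot\mid\pi)$, and then settle (a) and (b) by testing the representation against well-chosen allocation rules while exploiting the monotonicity of $q$. For the representation, I would substitute the transfer formula to get $U(\theta\mid\theta)=\int_0^\theta q(x)\,dx + U(0\mid 0)$ and $U(\pi\mid\theta)=(\theta-\pi)q(\pi)+\int_0^\pi q(x)\,dx + U(0\mid 0)$, so the constant $U(0\mid 0)$ cancels and the integrand becomes $\int_\pi^\theta q(x)\,dx-(\theta-\pi)q(\pi)$. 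I then split $\nu$ into two pieces. Writing $\int_\pi^\theta q=\int_0^\theta q-\int_0^\pi q$, integrating out $\pi$ in the first summand (which does not depend on it) and using the marginal $F_I$ in the second, then applying Fubini to $\int_0^1\int_0^\theta q\,dF(\theta)$ and $\int_0^1\int_0^\pi q\,dF_I(\pi)$, yields $\int_0^1 q(\pi)[F_I(\pi)-F(\pi)]\,d\pi$; since atoms of $F,F_I$ are $d\pi$-null I may ignore left/right-limit distinctions. For the remaining piece $-\int\!\!\int(\theta-\pi)q(\pi)\,d\rho(\pi\mid\theta)\,dF(\theta)$, I re-express the joint law of $(\theta,\pi)$ through $F_I$ and the reverse conditional, so the inner integral collapses to $q(\pi)(\pi-e_I(\pi))$, producing the second term.

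For part (a), sufficiency is immediate from the representation: unbiasedness makes $\pi-e_I(\pi)=0$ $F_I$-almost everywhere and so kills the second term, while $F=F_I$ kills the first, giving $\nu\equiv 0$. For necessity I would test the representation against the feasible threshold rules $q_c(\pi)=\mathbf{1}[\pi\ge c]$. Converting the two terms back through the standard identities $\int_c^1(1-F)\,d\pi=\mathbb{E}[(\theta-c)^+]$ and its $F_I$-analogue rewrites $\nu(q_c)$ as $\mathbb{E}\big[(\theta-c)(\mathbf{1}[\theta\ge c]-\mathbf{1}[\pi\ge c])\big]$, whose integrand is pointwise nonnegative. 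Hence $\nu(q_c)=0$ for every $c$ forces the events $\{\theta<c\le\pi\}$ and $\{\pi<c\le\theta\}$ (off $\{\theta=c\}$) to be null for all $c$, i.e.\ $\theta=\pi$ almost surely; this delivers both $F=F_I$ and $e_I(\pi)=\pi$ a.e., which is exactly unbiasedness.

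For part (b), if $q$ is constant then incentive compatibility forces $t$ constant, whence $U(\theta\mid\theta)=U(\pi\mid\theta)$ pointwise and $\nu=0$ for every $\rho$. For the converse I argue by contraposition using deterministic PGPs $\pi=m(\theta)$, for which $\nu=\int_0^1\big[\int_{m(\theta)}^\theta q-(\theta-m(\theta))q(m(\theta))\big]dF(\theta)$, each bracket being nonnegative by monotonicity and strictly positive precisely when $q$ is non-constant on the interval between $\theta$ and $m(\theta)$. If $q$ is non-constant on $\supp(F)$, the constant perception $m\equiv\mu$ (the prior mean) makes the bracket $\phi(\theta)=\int_\mu^\theta q$ convex with $\phi(\mu)=0$, so strict Jensen gives $\nu=\mathbb{E}[\phi(\theta)]>0$. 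If instead $q$ is $F$-a.e.\ constant on $\supp(F)$ yet non-constant on $[0,1]$, I pick a perception level $r$ (necessarily outside $\supp(F)$) at which $q$ differs from its value on the support, and monotonicity then makes every bracket strictly positive.

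I expect the main obstacle to be the necessity direction of (b): exhibiting, for an \emph{arbitrary} non-constant monotone $q$ and an \emph{arbitrary} (possibly degenerate) prior $F$, a single PGP with $\nu>0$. This forces the small case split on whether $q$ varies on $\supp(F)$ and careful placement of the perception relative to the support, whereas the representation and part (a) are essentially bookkeeping (Fubini, treatment of atoms, and the nonnegativity observation).
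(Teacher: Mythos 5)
Your derivation of the representation, your proof of part (a), and the easy direction of part (b) are all correct and essentially track the paper: the paper obtains the representation by the same envelope-plus-Fubini/Bayes computation, and for (a)-necessity it tests the single rule $q(x)=x$ (giving $\nu=\mathbb{E}[(\theta-\pi)^2]/2$) where you test the threshold family and take a union over a dense set of cutoffs --- both valid. The genuine gap is in the necessity direction of part (b), your Case 1. Strict Jensen does \emph{not} follow from ``$q$ non-constant on $\supp(F)$'': Jensen holds with equality whenever $\phi$ agrees $F$-a.s.\ with its supporting line at $\mu$, and since $\phi(\theta)=\int_\mu^\theta q(s)\,ds$ sees $q$ only up to Lebesgue-null sets (the $q(\mu)$ term cancels exactly because $\mathbb{E}[\theta-\mu]=0$), this can happen even though $q$ varies on $\supp(F)$. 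Concretely, take $F=\tfrac12\delta_0+\tfrac12\delta_1$ and $q(\pi)=0$ for $\pi<1$, $q(1)=1$. Then $q$ is non-constant on $\supp(F)=\{0,1\}$, so your Case 1 applies; but $\phi\equiv 0$, Jensen is an equality, and your PGP $m\equiv\mu$ yields $\nu=0$. The proposition is nonetheless true for this pair: the paper's PGP, which sends every type to perception $0$ or $1$ with probability $1/2$, gives the per-type bracket $\tfrac12\bigl[\int_0^\theta(q(s)-q(0))\,ds+\int_\theta^1(q(1)-q(s))\,ds\bigr]$, which equals $\tfrac12$ at $\theta=0$, hence $\nu=1/4>0$. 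The difference is that the paper's construction evaluates $q$ \emph{pointwise} at the extreme perceptions $0$ and $1$ and therefore registers jumps of $q$ located at atoms of $F$, whereas your mean-perception construction is blind to any variation of $q$ concentrated on Lebesgue-null sets. Note also that this example does not fall into your Case 2 (there $q$ is \emph{not} $F$-a.e.\ constant on $\supp(F)$), so your case split genuinely leaves a hole rather than rerouting the example to a working argument.

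Two smaller points in the same direction. In Case 2 the witness $r$ need not lie outside $\supp(F)$ (take $F$ uniform on $[0,1]$ and $q$ jumping only at $\pi=1$), and ``every bracket strictly positive'' is false in general: with $\supp(F)=\{0,1/2\}$ and $q=\mathbf{1}[\pi>1/2]$, any $r>1/2$ gives a zero bracket at $\theta=1/2$ --- though the construction still delivers $\nu>0$ there, so Case 2 is loose rather than broken. Finally, both your proof and the paper's silently exclude the fully degenerate configuration in which $q$ is a threshold rule and $F$ is a point mass exactly at its threshold; in that case $\nu=0$ for \emph{every} PGP, so part (b) as stated needs that case excluded. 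Your Case 1 gap is strictly larger than this shared caveat, however, as the two-atom example above has non-degenerate $F$. The cleanest repair is to replace your Case 1 construction by perceptions at the extremes --- either the paper's binary PGP, or (staying within deterministic PGPs) $m\equiv 0$ and $m\equiv 1$, at least one of which must give $\nu>0$ whenever the paper's mixture does.
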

	
	Equation \eqref{eqn:vos} provides a tractable representation of the value of attention for a given mechanism. In the spirit of the Revenue Equivalence Principle, the value of attention depends only on the allocation rule.\footnote{Note one subtle difference to the Revenue Equivalence Principle is that the value of attention is \emph{uniquely} pinned down by the mechanism's allocation rule, not simply up to a constant.} Moreover, because the agent's utility is linear in $q$, the value of attention is a linear functional in that allocation rule and, thus, is simple to incorporate (as a constraint) in any mechanism design problem.\\
	
	Representation~\eqref{eqn:vos} consists of two parts. The first is the agent's expected utility difference between operating under true distribution, $F$, and operating under inattentive distribution, $F_I$, assuming that $F_I$ were generated from an unbiased PGP. However, since the PGP may be biased, the second term of \eqref{eqn:vos} corrects for this potential misperception. Consequently, this term is only present in the case of a biased PGP. Indeed, if the PGP is unbiased, $e_I(\pi) = \pi$ almost everywhere on $\supp(F_I)$, so that the second term disappears.\footnote{Since the value of attention is non-negative for all non-decreasing allocation rules, true distribution $F$ is a mean-preserving spread of perceived distribution $F_I$ if the PGP is unbiased.} \\
	
	Part (a) of the proposition implies that, whenever the agent's perception is not perfect, there exists a mechanism that provides the agent with a positive value of attention. That is, there is always scope to incentivize attention through the design of the mechanism. In turn, such a mechanism must feature some screening by part (b) of the proposition. Indeed, a mechanism that does not discriminate between types does not provide incentives to correctly perceive types. While these results are very intuitive, they serve as sanity checks on our modeling approach.\\
	
	Part (b) of Proposition~\ref{prop:gen Z} implies that screening is necessary to create strictly positive attention incentives. We now turn to the question of how to maximize such incentives. Recall that the set of feasible allocation rules in our problem, $Q$, consists of any non-decreasing $q:[0, 1] \to [0, 1]$. Let $Q^* \subset Q$ denote the set of all allocation rules that maximize the value of attention. In Proposition~\ref{prop:max_characterization} in the appendix, we fully characterize $Q^*$.\footnote{Since \eqref{eqn:vos} is a linear functional on $Q$ it attains a maximum at one of its extreme points (i.e., the set of threshold allocation rules). Moreover, every maximizer is essentially a ``convex-combination" of threshold maximizers. See the discussion around Proposition~\ref{prop:max_characterization} in the appendix for details.} We now state the important qualitative properties that all allocation rules in $Q^*$ satisfy. 

 \begin{prop}\label{prop:coarsescreening}
		Suppose that either $\rho$ is biased or $F \ne F_I$. Then, there exist unique disjoint intervals $\underline{\Pi}\subset [0, 1]$ (with $0 \in \underline{\Pi}$) and $\overline{\Pi} \subset [0,1]$ (with $1 \in \overline{\Pi}$) such that:
		\begin{enumerate}[(a)]
			\item $q(\pi) = 0$ for all $q \in Q^*$ if and only if $\pi \in \underline{\Pi}$; and
			\item $q(\pi) = 1$ for all $q \in Q^*$ if and only if $\pi \in \overline{\Pi}$.
		\end{enumerate} 
	\end{prop}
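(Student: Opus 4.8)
The plan is to define the two intervals directly through the ``if and only if'' conditions (a) and (b)---which makes their uniqueness immediate---and then to verify that these sets are disjoint intervals containing $0$ and $1$ respectively. Throughout, write $M \equiv \max_{q \in Q} \nu(q)$. Since the hypothesis (that $\rho$ is biased or $F \neq F_I$) rules out the case identified in part (a) of Proposition~\ref{prop:gen Z}, we have $\nu \not\equiv 0$; as $\nu$ is non-negative on all of $Q$, this yields $M > 0$. That the maximum is attained (so $Q^* \neq \emptyset$) follows from $\nu$ being a linear functional on the convex, compact set $Q$, as recorded in Proposition~\ref{prop:max_characterization}. Accordingly, I would set $\underline{\Pi} \equiv \{\pi \in [0,1] : q(\pi) = 0 \text{ for all } q \in Q^*\}$ and $\overline{\Pi} \equiv \{\pi \in [0,1] : q(\pi) = 1 \text{ for all } q \in Q^*\}$; by construction these are the unique sets satisfying (a) and (b).

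Next I would establish that $\underline{\Pi}$ and $\overline{\Pi}$ are disjoint intervals, where monotonicity of the allocation rule does all the work. If $\pi \in \underline{\Pi}$ and $\pi' < \pi$, then $0 \le q(\pi') \le q(\pi) = 0$ for every $q \in Q^*$, so $\pi' \in \underline{\Pi}$; hence $\underline{\Pi}$ is downward-closed and thus an interval with left endpoint $0$. Symmetrically, $\pi \in \overline{\Pi}$ and $\pi' > \pi$ give $1 = q(\pi) \le q(\pi') \le 1$, so $\overline{\Pi}$ is upward-closed and an interval with right endpoint $1$. Disjointness is immediate, since any $\pi$ lying in both would force some (hence every) $q \in Q^*$ to satisfy $q(\pi) = 0 = 1$.

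The crux is showing $0 \in \underline{\Pi}$ and $1 \in \overline{\Pi}$, i.e. that \emph{every} maximizer allocates nothing to the lowest perception and the full amount to the highest. Here I would exploit the linearity of $\nu$ together with the fact that constant rules are valueless, which is part (b) of Proposition~\ref{prop:gen Z}. Fix $q \in Q^*$. Since $\nu(q) = M > 0$, the rule $q$ is not constant, so $q(1) > q(0) \ge 0$. The rescaled rule $q/q(1)$ is again non-decreasing with values in $[0,1]$, and by homogeneity $\nu(q/q(1)) = M/q(1)$; maximality forces $M/q(1) \le M$, so $q(1) = 1$. For the lower end, consider the shifted rule $q - q(0)$, which is feasible (non-decreasing, valued in $[0, 1 - q(0)]$) and, because subtracting the constant $q(0)$ changes $\nu$ by $-q(0)\,\nu(\mathbf{1}) = 0$, is itself a maximizer. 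Applying the rescaling conclusion to this maximizer gives $(q - q(0))(1) = 1$, i.e. $1 - q(0) = 1$, whence $q(0) = 0$. Since $q \in Q^*$ was arbitrary, this delivers the two containments and completes the argument.

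I expect the only genuinely delicate points to be the preliminary guarantee that $Q^* \neq \emptyset$ and the clean use of linearity: once one recognizes that $\nu$ is homogeneous and invariant to adding constants (the content of Proposition~\ref{prop:gen Z}(b)), the boundary behavior of every maximizer is forced, and no case analysis on atoms of $F_I$ or on the sign of $\pi - e_I(\pi)$ is required, because all steps are carried out at the level of the functional rather than pointwise. Proposition~\ref{prop:max_characterization} supplies the existence of maximizers and, if one prefers, an alternative route via the observation that each maximizer is a convex combination of optimal threshold rules whose thresholds lie in $(0,1]$.
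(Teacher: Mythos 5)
Your proposal is correct, but it takes a genuinely different route from the paper. The paper's proof runs entirely through the extreme-point machinery of Proposition~\ref{prop:max_characterization}: it first shows $\tau^* \ne \tau$ (so, in particular, the constant threshold rules cannot be maximizers), defines $\underline{\Pi}$ and $\overline{\Pi}$ as the perceptions where \emph{all threshold maximizers} allocate $0$ and $1$ respectively, verifies the interval property via monotonicity, and then transfers the ``if and only if'' statements from $\tau^*$ to all of $Q^*$ using the Bochner-integral representation $q = \int_{\tau^*} q' \, d\lambda(q')$. You instead define the sets directly on $Q^*$ and handle the crux---that $0 \in \underline{\Pi}$ and $1 \in \overline{\Pi}$---by exploiting two algebraic properties of the functional that the paper never uses explicitly: positive homogeneity (rescaling $q/q(1)$ would strictly raise $\nu$ unless $q(1)=1$) and translation invariance $\nu(q - q(0)) = \nu(q)$, which follows from $\nu(\mathbf{1}) = 0$, i.e.\ Proposition~\ref{prop:gen Z}(b), and forces $q(0)=0$ by applying the rescaling step to the shifted maximizer. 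Both arguments lean on Proposition~\ref{prop:gen Z} and on monotonicity for the interval/disjointness claims, but you use Proposition~\ref{prop:max_characterization} only to guarantee $Q^* \neq \emptyset$, whereas the paper needs its full characterization of $Q^*$. What each approach buys: the paper's route constructs $\underline{\Pi}$ and $\overline{\Pi}$ explicitly from the thresholds of rules in $\tau^*$, which is what underlies the footnote's description of the intervals as $[0,\pi_1)$ or $[0,\pi_1]$ and $(\pi_2,1]$ or $[\pi_2,1]$, and it exhibits every maximizer as a mixture of coarse threshold rules; your route is more elementary and self-contained, and it delivers directly the clean pointwise statement that \emph{every} maximizer satisfies $q(0)=0$ and $q(1)=1$, without any case analysis on $\tau^*$.
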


 Proposition~\ref{prop:coarsescreening} implies that, if $q$ maximizes the value of attention, it must exclude sufficiently low types (i.e., provide them with allocation of 0) and pool sufficiently high types at maximal allocation (i.e., provide them with allocation of 1).\footnote{Note that the intervals $\underline{\Pi}$ and $\overline{\Pi}$ depend on the primitives, including the PGP and $F$. The proof of Proposition~\ref{prop:coarsescreening} displays how to construct these intervals. Depending on the primitives, these intervals may either be closed or not. Specifically, there always exists $\pi_1 \le \pi_2$ such that either $\underline{\Pi} = [0, \pi_1)$ or $\underline{\Pi} = [0, \pi_1]$ and either $\overline{\Pi} = (\pi_2,1]$ or $\overline{\Pi} = [\pi_2, 1]$.} Thus, we identify a novel cost of screening on an intensive margin, in terms of attention incentives. Such coarse screening is optimal as it increases the wedge in the behavior of an attentive and inattentive agent (measured by payoff differences), conditional on requiring the allocation rule to be monotone. Hence, for given $q \in Q$, one can always increase the value of attention either by decreasing $q$ on $\underline{\Pi}$ or increasing $q$ on $\overline{\Pi}$.\footnote{Note that $\underline{\Pi} \cup  \overline{\Pi} $ need not coincide with $  [0,1]$. In this case, there is a continuum of maximizers including both the allocation rules with $q(\pi)=1$ and $q(\pi)=0$ for all  $\pi \notin \underline{\Pi} \cup  \overline{\Pi} $ (see Proposition~\ref{prop:max_characterization}).} The following example illustrates Proposition~\ref{prop:coarsescreening} for a particular PGP.
	
	\begin{example}
		Suppose that $\theta \sim U[0, 1]$ and consider an example of conservatism bias, as described in Section~\ref{sec:examples}, with $\alpha = 1/2$. Specifically, when $\theta$ realizes, an inattentive agent instead perceives type $\frac{1}{2} \theta + \frac{1}{4}$. Proposition~\ref{prop:gen Z} implies that the value of attention can be written as
		$$
		\nu(q) = \int \limits_0^{\frac{1}{4}} q(\pi)[-\pi]d\pi + \int \limits_{\frac{1}{4}}^{\frac{3}{4}} q(\pi)\left[\frac{1}{2} - \pi\right] d\pi + \int \limits_{\frac{3}{4}}^1 q(\pi)[1 - \pi]d\pi.
		$$
        Clearly, if $\pi < 1/4$, decreasing $q(\pi)$ increases the value of attention, while if $\pi > 3/4$, increasing $q(\pi)$ increases the value of attention. As such, all maximizers have $q(\pi) = 0$ if $\pi \in \underline{\Pi} \equiv [0, 1/4)$ and $q(\pi) = 1$ if $\pi \in \overline{\Pi} \equiv (3/4, 1]$. 
	\end{example}
	
	To this point we have described how, for a given PGP, the mechanism impacts on the agent's attention incentives. We now explore how the agent's PGP impacts on the value of attention for a given mechanism. Our objective is to characterize a notion of \emph{accuracy} of perception. Intuitively, the more accurate the agent's PGP is, the more aligned their attentive and inattentive behaviours should be. Moreover, the more aligned these behaviors are, the less incentives the agent has to be attentive and, consequently, the better off the agent should be. Thus, in the spirit of Blackwell's informativeness criterion, we measure accuracy in perception in terms of the agent's welfare (restricted to environments with quasi-linear utility functions
that are bilinear in types and allocations). To this end, let 
$$
 V(\mathcal{M}, \kappa; \rho) \equiv \max \limits\{V_A(\mathcal{M}) - \kappa, V_I(\mathcal{M};\rho)\}
 $$
 denote agent-welfare under mechanism $\mathcal{M}$. We define more accurate perception as follows.

 \begin{definition}\label{defn:accuracy}
     Fix $F$. PGP $\rho$ is more accurate than $\rho^\prime$ if $V(\mathcal{M}, \kappa; \rho) \ge V(\mathcal{M}, \kappa; \rho^\prime)$ for all mechanisms $\mathcal{M}$ and $\kappa \ge 0$.
 \end{definition}

 While our definition of accuracy in perception is hopefully uncontroversial, it is difficult to verify accuracy directly: for two distinct PGPs $\rho$ and $\rho^\prime$ it requires comparing agent-welfare for every mechanism $\mathcal{M}$ and every cost $\kappa$. Fortunately, this condition boils down to a simple statistical property of the agent's perception, which is independent of $\mathcal{M}$ and $\kappa$. For each $x \in [0, 1]$,
 define
\begin{equation}
    S(x; \rho) \equiv \int \limits_x^1 F_I(\pi) d\pi + \int \limits_x^1 [\pi - e_I(\pi)] dF_I(\pi).
\end{equation}
One can interpret $S(\cdot;\rho)$ as a joint measure of both the amount of information contained in (i.e., the first term) and degree of bias exhibited by (i.e., the second term) PGP $\rho$. The following proposition shows that $S(\cdot;\rho)$ is a sufficient statistic for characterizing accuracy.

 \begin{prop}\label{prop:BlackwellNew}
     Fix $F$. The following statements are equivalent:
     \begin{enumerate}[1.]
        \item $\rho$ is more accurate than $\rho^\prime$; and
        \item $S(x; \rho) \le S(x; \rho^\prime)$ for all $x \in [0, 1]$.
     \end{enumerate}
 \end{prop}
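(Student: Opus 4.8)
The plan is to reduce the welfare-based definition of accuracy to a pointwise comparison of the value of attention across \emph{all} allocation rules, and then to convert that comparison into the stated integral inequality via an extreme-point argument on monotone functions. The structure is: (i) peel off the $\kappa$-quantifier, (ii) invoke Proposition~\ref{prop:gen Z} to write $\nu$ as an integral against a signed measure, (iii) characterize positivity of this integral over monotone rules through threshold rules, and (iv) match the resulting upper-set masses with $S(\cdot;\rho)$.

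First I would rewrite agent-welfare in terms of the value of attention. Since $V_A(\mathcal{M})$ does not depend on $\rho$ and $\nu(\mathcal{M};\rho)=V_A(\mathcal{M})-V_I(\mathcal{M};\rho)$, we have $V(\mathcal{M},\kappa;\rho)=V_A(\mathcal{M})-\min\{\kappa,\nu(\mathcal{M};\rho)\}$. Hence $V(\mathcal{M},\kappa;\rho)\ge V(\mathcal{M},\kappa;\rho')$ holds if and only if $\min\{\kappa,\nu(\mathcal{M};\rho)\}\le\min\{\kappa,\nu(\mathcal{M};\rho')\}$. Because $\min\{\kappa,\cdot\}$ is non-decreasing and $\kappa$ may be taken arbitrarily large, quantifying over all $\kappa\ge 0$ shows that, for each fixed $\mathcal{M}$, this is equivalent to $\nu(\mathcal{M};\rho)\le\nu(\mathcal{M};\rho')$. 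Since $\nu$ depends on $\mathcal{M}$ only through its allocation rule (Proposition~\ref{prop:gen Z}) and $Q$ consists of \emph{all} non-decreasing $q:[0,1]\to[0,1]$, accuracy of $\rho$ over $\rho'$ is thus equivalent to $\nu(q;\rho)\le\nu(q;\rho')$ for every $q\in Q$.

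Next I would encode the value of attention as integration against a finite signed measure. By Proposition~\ref{prop:gen Z}, $\nu(q;\rho)=\int_{[0,1]}q\,d\mu_\rho$ with $d\mu_\rho(\pi)=[F_I(\pi)-F(\pi)]\,d\pi+[\pi-e_I(\pi)]\,dF_I(\pi)$. Setting $\eta:=\mu_{\rho'}-\mu_\rho$, the condition becomes $\int q\,d\eta\ge 0$ for all non-decreasing $q:[0,1]\to[0,1]$. The key lemma is that this holds if and only if $\eta([x,1])\ge 0$ for all $x\in[0,1]$. The forward implication takes the threshold rule $q=\mathbb{1}[\pi\ge x]\in Q$, giving $\int q\,d\eta=\eta([x,1])\ge 0$. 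For the converse I would use the layer-cake identity $q(\pi)=\int_0^1\mathbb{1}[q(\pi)\ge s]\,ds$; monotonicity of $q$ makes each super-level set $\{q\ge s\}$ an up-set of the form $[a_s,1]$ or $(a_s,1]$, and Tonelli applied to the Jordan decomposition $\eta=\eta^+-\eta^-$ yields $\int q\,d\eta=\int_0^1\eta(\{q\ge s\})\,ds\ge 0$, where each $\eta(\{q\ge s\})\ge 0$ follows from the hypothesis (using continuity from below of the finite measures $\eta^\pm$ for the half-open case). Finally, integrating $d\mu_\rho$ over $[x,1]$ gives $\mu_\rho([x,1])=S(x;\rho)-\int_x^1 F(\pi)\,d\pi$; since $F$ is fixed, the $F$-term cancels in the difference, so $\eta([x,1])=S(x;\rho')-S(x;\rho)$, and $\eta([x,1])\ge 0$ for all $x$ is exactly statement~2.

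I expect the most delicate step to be the lemma characterizing positivity of $\int q\,d\eta$ over monotone $[0,1]$-valued $q$, specifically making the extreme-point/layer-cake argument rigorous when $F_I$ (and hence $\mu_\rho$) carries atoms. The care is needed in two places: determining whether each super-level set is closed ($[a_s,1]$) or half-open ($(a_s,1]$) at its left endpoint, and confirming $\eta((x,1])\ge 0$ from the closed-interval hypothesis via continuity from below of the finite signed measure. Everything else—the $\min$-manipulation of welfare, the reduction to allocation rules, and the endpoint computation identifying $\eta([x,1])$ with $S(x;\rho')-S(x;\rho)$—is routine once this lemma is in hand.
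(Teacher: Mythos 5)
Your proof is correct, and its skeleton matches the paper's: eliminate $\kappa$ to reduce accuracy to the comparison $\nu(q;\rho)\le\nu(q;\rho')$ for all $q\in Q$, check this on threshold rules (where it is exactly statement 2, the $\int_x^1 F$ terms cancelling), and then extend from thresholds to all monotone rules. The genuine difference lies in that extension step, which is where the real work is. The paper (Lemma~\ref{lem:l1}, invoking Proposition~\ref{prop:max_characterization}) treats $\nu(\cdot;\rho)-\nu(\cdot;\rho')$ as a linear functional on the compact convex set $Q$ and applies Bauer's maximum principle together with the fact that the extreme points of $Q$ are the threshold rules, so the functional's maximum is attained at a threshold and is therefore nonpositive. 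You instead use the layer-cake identity $q(\pi)=\int_0^1 \mathbf{1}[q(\pi)\ge s]\,ds$ and Tonelli applied to the Jordan decomposition of the signed measure $\eta=\mu_{\rho'}-\mu_\rho$, with continuity from below handling the half-open super-level sets. Your route is more elementary and self-contained: it needs no compactness of $Q$ nor continuity of the functional (which the paper imports from the cited results of Kleiner et al.\ and Yang), and it addresses atoms of $F_I$ --- the open-versus-closed threshold issue --- directly, where the paper needs a separate limiting argument inside Lemma~\ref{lem:l1}. What the paper's route buys is economy within the paper: Proposition~\ref{prop:max_characterization} is needed anyway for Proposition~\ref{prop:coarsescreening}, so the extreme-point machinery is reused rather than rebuilt. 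A minor further difference: your identity $V(\mathcal{M},\kappa;\rho)=V_A(\mathcal{M})-\min\{\kappa,\nu(\mathcal{M};\rho)\}$ dispatches the quantifier over $\kappa$ in both directions at once, whereas the paper sets $\kappa=\infty$ for one direction and uses Observation~\ref{o1} together with the PGP-independence of attentive utility for the other.
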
 

    Due to the simplicity of calculating $S(x; \rho)$, verifying whether the accuracy order is satisfied is now straightforward. The intuition for why $S(\cdot;\rho)$ characterizes accuracy is as follows.
    First, note that the PGP, $\rho$, only impacts on the agent's welfare if they are inattentive; that is, $V_I(\mathcal{M};\rho)$ depends on $\rho$ while $V_A(\mathcal{M})$ does not. Thus, since the cognitive state results from a rational cost-benefit trade-off, the agent can only benefit from having lower opportunity costs from being inattentive (i.e., higher inattentive utility). The statistic $S(\cdot; \rho)$ is a sufficient statistic for ordering inattentive utility because it objectively measures the impacts of an PGP on that utility--i.e., the amount of information a PGP preserves and the degree of bias it introduces. We now utilize Proposition~\ref{prop:BlackwellNew} to establish a number of natural PGPs that are ordered by accuracy.

	\begin{example}[More Information] \label{example:moreinfo}
		Consider the class of unbiased PGPs. Then, by Proposition~\ref{prop:BlackwellNew}, PGP $\rho$ (with $F_I$) is more accurate than $\rho^\prime$ (with $F_I^\prime$) if and only if 
        $$
		\int \limits_x^1 F_I(\pi) d\pi \le \int \limits_x^1 F_I^\prime(\pi)d\pi,
		$$
		for all $x \in [0, 1]$, where equality holds if $x = 0$ since $F_I$ and $F_I^\prime$ imply the same mean. That is, PGP $\rho$ is more accurate than $\rho^\prime$ if and only if $F_I$ majorizes $F_I^\prime$ (see, e.g., \cite{Kleiner2021}). Equivalently, more accurate coincides with more information (when inattentive) in the class of unbiased PGPs.

	\end{example}
	
	\begin{example}[More Bias in Perception] \label{example:morebias}
		   If we fix the amount of true information generated by the agent's PGP (i.e., the distribution of $e_I(\pi)$), then the PGP becomes less accurate as the wedge between true types and their inattentive perceptions increases. For example, consider a case of probability weighting where true type $\theta$ is distributed according to $F$, but is perceived as $\theta^{\alpha}$, with $\alpha \in (0, +\infty)$ (see Section~\ref{sec:examples}). In this case, the PGP is fully determined by $\alpha$ and one can establish that
     $$
     S(x; \alpha) = 1 - x + \int \limits_{x^{1/\alpha}}^1 [x - \pi]   dF(\pi).
     $$
     For any $x \in [0,1]$, this increases in $\alpha$ for $\alpha > 1$ (more pessimism implies lower accuracy) and decreases in $\alpha$ for $\alpha < 1$ (more over-optimism implies lower accuracy). \\

In general, removing bias while holding information generated by the PGP constant increases accuracy. Indeed, for a given distribution of $e_I(\pi)$, the agent is better off if they \emph{could} behave using $e_I(\pi)$ (i.e., using $\pi$ as a signal of their true type) rather than by using $\pi$ as their true type. For example, in the probability weighting example above, $\alpha = 1$ (i.e., perfect perception) is always more accurate than any $\alpha \ne 1$.  A consequence is that, holding information generated by the PGP constant, it is always better to generate perception in an unbiased manner than in a biased one. 
	\end{example}
	
	Example~\ref{example:moreinfo} suggests that, holding bias constant, the agent's PGP is more accurate the more information it reveals. Conversely, Example~\ref{example:morebias} highlights that, holding information constant, the agent's PGP is more accurate the less biased is the use of this information. Thus, if an agent's PGP improves in both dimensions, accuracy also improves.\footnote{An example of a class of PGPs ordered in this sense is that of hype (see Section~\ref{sec:examples}). There, as $h$ increases, the buyer perceives their type to be maximal more often. This implies that (i) bias increases, and (ii) true information decreases as perception $\pi = 1$ contains no information. It is simple to verify that $S(x;h) = \int \limits_x^1 F(\pi)d\pi + h \int \limits_0^x F(\pi) d\pi$, so that accuracy decreases in $h$.} These, however, are not the only situations in which PGPs can be ranked according to accuracy. As the following example shows, we can also order some PGPs when one reveals more information but also exhibits more bias.

	\begin{example}[Biased versus Unbiased PGPs]\label{ex:cvi}
		It is well known that $F$ majorizes $F_I$ if and only if $F_I$ can be generated by an unbiased PGP (see, e.g., \cite{Kleiner2021}). If this is the case, is an unbiased PGP always the most accurate way to generate $F_I$? Consider the following counterexample. Suppose $\theta$ is distributed according to $F$ with mean $\mu$ and $F_I(\pi) = F\left(\frac{\pi - (1 - \alpha)\mu}{\alpha}\right)$ for $\pi \in [0, 1]$, where $\alpha \in (0, 1)$. Then, one can show that $F_I$ is majorized by $F$ and, as such, can be generated by an unbiased perception (denote by $\rho^U$). However, it can also emerge from conservatism bias (see Section~\ref{sec:examples}): the agent perceives type $\theta$ as $\alpha \theta + (1 - \alpha) \mu$ (denote by $\rho^C$). In this case, 
        $$
        S(x; \rho^C) = \int \limits_x^1 F_I(\pi)d\pi + \frac{1 - \alpha}{\alpha} \int \limits_x^1 [\mu - \pi] dF_I(\pi) \quad \text{and} \quad S(x; \rho^U) = \int \limits_x^1 F_I(\pi)d\pi. 
        $$
        Clearly $\int \limits_x^1 [\mu - \pi] dF_I(\pi)$ is bounded above by $\int \limits_0^1 [\mu - \pi] dF_I(\pi)$ and, since $F$ majorizes $F_I$, $\int \limits_0^1 [\mu - \pi] dF_I(\pi)=0$. Thus, generating $F_I$ via conservatism bias is more accurate than doing so in an unbiased manner. Intuitively, while conservatism involves more bias in perception, it also contains more information than its unbiased but observationally-equivalent counterpart. As such, our notion of accuracy allows us to rank some PGPs that conflict in terms of their degrees of bias and informativeness.
	\end{example}
	
	\section{Applications}

		\subsection{Efficient Mechanisms}
	
	In this section, we consider the objective of allocative efficiency in a canonical principal-agent framework. Formally, a principal produces allocation $x \in [0,1]$ at cost $C(x)$, where $C(\cdot)$ is continuous.\footnote{Continuity is sufficient to ensure that there exists a solution to the principal's problem.} 
	The designer's objective is to maximize total surplus. That is, the sum of the agent's utility (including any cognitive costs) net of the costs of provision.
	
	\subsubsection{Exogenous Cognitive States}

	Suppose first that the agent's cognitive state is fixed and known to the designer. If the agent is attentive (i.e., perception $\pi = \theta$) then the best the designer can hope to achieve is
	\begin{equation}\label{eqn:efficient_A}
		W_A^* \equiv \max \limits_{q} \int \limits_0^1 \pi q(\pi)-C(q(\pi)) d F(\pi).
	\end{equation}
	If, instead, the agent is inattentive, then the upper bound on welfare is
	\begin{equation} \label{eqn:efficient_I}
		W_I^* \equiv \max \limits_{q} \int\limits_0^1 \int \limits_0^1 \theta q(\pi) -C(q(\pi)) d\rho(\pi | \theta) dF(\theta) = \int \limits_0^1 e_I(\pi) q(\pi)-C(q(\pi)) d F_I(\pi).
	\end{equation}
	Evidently, $W_I^* \le W_A^*$. We use $W_A^*$ and $W_I^*$ to define notions of efficiency for each of the agent's cognitive states.
	
	\begin{definition}
		Direct mechanism $\mathcal{M} = (q, t)$ is efficient under attention if it achieves $W_A^*$. It is efficient under inattention if it achieves $W_I^*$. 
	\end{definition}
	
	Our objective is to understand whether an agent's PGP allows for efficiency to be achieved for cognitive state $\omega \in \{A, I\}$.
	We first define a mechanism that is efficient under attention, \emph{selling the firm to the agent}:
	
	\begin{definition}
		Direct mechanism $\mathcal{M} = (q, t)$ \textbf{sells the firm to the agent} if $t(\pi) = C(q(\pi))$ and $q(\pi) \in \arg \max \limits_{x \in [0, 1]} \pi x - C(x)$.
	\end{definition}
	
	Clearly, such a mechanism is incentive compatible as the agent is the claimant on welfare. Moreover, this mechanism is efficient under attention as the agent's optimal reporting decision is equivalent to that in \eqref{eqn:efficient_A}. Suppose next that the agent is inattentive and first consider the case
	of an unbiased PGP. Then, as is well known, selling the firm to the agent remains optimal.
	The reason is that an agent operating under an unbiased PGP correctly perceives their \emph{expected} marginal contribution to welfare at every perception (i.e., $e_I(\pi) = \pi)$. Thus, their incentives are perfectly aligned with welfare maximization under such a mechanism, as is clear from \eqref{eqn:efficient_I}. Instead, selling the firm to the agent will, in general, not be optimal if the agent's perception is biased. 
	
	\begin{prop}\label{prop:welfare_exo}
		Suppose the agent's cognitive state is exogenous. Then,
		\begin{enumerate}
			\item \emph{Selling the firm to the agent} achieves efficiency if the agent is attentive or has an unbiased PGP;
			\item If the agent is inattentive and has a biased PGP, there exists a cost function so that \emph{selling the firm to the agent} is not efficient under inattention.
		\end{enumerate}
	\end{prop}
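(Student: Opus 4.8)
The plan is to treat the two cognitive states separately, establishing incentive compatibility once (since the agent's behavior in the mechanism phase depends only on the realized perception $\pi$, not on how it was formed) and then comparing the surplus generated by selling the firm against the relevant benchmark. Throughout I write $x^*(\pi) \equiv \arg\max_{x \in [0,1]}[\pi x - C(x)]$ for the allocation rule induced by selling the firm.

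For Part 1, I would first verify that selling the firm is incentive compatible and efficient under attention. By standard monotone comparative statics, $x^*(\cdot)$ is non-decreasing, so the rule meets the monotonicity requirement for incentive compatibility. The remaining condition follows from a revealed-preference argument: an agent with perception $\pi$ who reports $\hat\pi$ obtains
\[
\pi x^*(\hat\pi) - C(x^*(\hat\pi)) \le \max_{x \in [0,1]}[\pi x - C(x)] = \pi x^*(\pi) - C(x^*(\pi)),
\]
so truthful reporting is optimal. Since $t(\pi) = C(q(\pi))$, the agent is the residual claimant on total surplus, and with $\pi = \theta$ under attention the per-type surplus attains its pointwise maximum $\max_x[\theta x - C(x)]$; integrating against $F$ delivers exactly $W_A^*$. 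For the unbiased case under inattention I would reuse the same rule $x^*(\cdot)$ and note that, because the transfer cancels in total surplus, the welfare generated equals $\int_0^1 [e_I(\pi) x^*(\pi) - C(x^*(\pi))]\, dF_I(\pi)$ by the definition of $e_I$. Since $\rho$ unbiased means $e_I(\pi) = \pi$ almost everywhere on $\supp(F_I)$, this coincides with the integrand defining $W_I^*$, whose pointwise maximizer is precisely $x^*(\pi)$.

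For Part 2, I would exhibit a single cost function that breaks efficiency whenever the PGP is biased. Take $C(x) = \tfrac{1}{2}x^2$. Since $\pi, e_I(\pi) \in [0,1]$, both the selling-the-firm optimizer $x^*(\pi) = \pi$ and the welfare-optimal rule $\arg\max_x[e_I(\pi)x - C(x)] = e_I(\pi)$ are interior. The welfare generated by selling the firm is $\int_0^1[e_I(\pi)\pi - \tfrac12\pi^2]\,dF_I(\pi)$, whereas $W_I^* = \int_0^1 \tfrac12 e_I(\pi)^2\,dF_I(\pi)$. Their difference is
\[
W_I^* - \int_0^1\!\!\Big[e_I(\pi)\pi - \tfrac12\pi^2\Big]dF_I(\pi) = \tfrac12\int_0^1 \big(e_I(\pi)-\pi\big)^2\,dF_I(\pi),
\]
which is strictly positive precisely when $e_I(\pi)\neq\pi$ on a set of positive $F_I$-measure---that is, exactly when $\rho$ is biased in the sense of Definition~\ref{defn:unbiased}. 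Hence for this cost function selling the firm is not efficient under inattention.

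The computations are routine; the only genuinely creative step is the construction in Part 2. The point to get right is to choose a cost whose induced allocation responds to the reported type, so that the perception wedge $e_I(\pi)-\pi$ actually translates into an allocation distortion, while keeping both optimizers interior so the welfare gap collapses cleanly to the mean-squared bias $\tfrac12\int (e_I-\pi)^2\,dF_I$. A strictly convex cost such as the quadratic accomplishes both. By contrast, a cost that forced a constant allocation (e.g.\ linear in $x$) would wash out the inefficiency, which is why the statement quantifies existentially over cost functions rather than asserting failure for every cost.
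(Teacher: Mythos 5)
Your proposal is correct and follows essentially the same route as the paper: Part 1 via the residual-claimant/pointwise-maximization argument (with unbiasedness reducing $e_I(\pi)$ to $\pi$ in the welfare integral), and Part 2 via the same quadratic cost $C(x)=\tfrac{1}{2}x^2$, for which the welfare gap collapses to $\tfrac{1}{2}\int_0^1 (e_I(\pi)-\pi)^2\,dF_I(\pi)>0$ exactly as in the paper's proof. Your write-up is in fact slightly more complete, since you verify incentive compatibility of the selling-the-firm rule explicitly rather than taking it as given.
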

	
    The reason that selling the firm to the agent is sub-optimal when the PGP exhibits bias is that such an agent misperceives how their report/action contributes to welfare, leading to a misalignment between their behavior and the welfare-maximizing objective of the designer. However, if the agent mis-perceives their private information in a systematic way (i.e., $e_I$ is monotone), the designer could, in principle, attune the mechanism to correcting for the agent's misperception. This is what we call \emph{managing the process}.
	\begin{definition}
		For given PGP with $e_I$, a mechanism $\mathcal{M} = (q, t)$ \textbf{manages the process} if $t(\pi) =  q(\pi)\pi -\int_0^{\pi} q(s)ds$ and $q(\pi) \in \arg \max \limits_{x \in [0, 1]} e_I(\pi) x - C(x)$ for all $\pi \in \supp(F_I)$.\footnote{Note that many mechanisms may satisfy this definition, as $supp(F_I)$ does not necessarily coincide with $[0,1]$.}
	\end{definition}
	
	In particular, for each perception the agent holds, the designer offers efficient provision in terms of the true informational content of that perception. Such management of the process has been explored in other contexts, such as efficiency-based justifications for affirmative action \citep{emil2023}.
	\begin{prop}\label{prop:manage}
		Any mechanism that manages the process is efficient under inattention if $e_I$ is non-decreasing. Instead, if $e_I$ is ever decreasing, then there is a cost function $C$ such that no feasible mechanism is efficient under inattention.
	\end{prop}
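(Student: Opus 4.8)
The plan is to reduce both parts to a single observation about pointwise optimization. By \eqref{eqn:efficient_I} and Fubini, the inattentive welfare of any allocation rule $q$ is the separable integral $\Phi(q)\equiv\int_0^1[e_I(\pi)q(\pi)-C(q(\pi))]\,dF_I(\pi)$, so that $\Phi(q)\le\int_0^1\max_{x\in[0,1]}[e_I(\pi)x-C(x)]\,dF_I(\pi)=W_I^*$, with equality if and only if $q(\pi)\in\arg\max_{x}[e_I(\pi)x-C(x)]$ for $F_I$-almost every $\pi$. In other words, a mechanism is efficient under inattention precisely when its allocation is a pointwise maximizer $F_I$-a.e., which is exactly the managing-the-process condition. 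Thus both parts turn on a single question: whether such a pointwise maximizer can be taken non-decreasing, since feasibility (incentive compatibility) requires a monotone allocation paired with the transfer rule that is already built into the definition of managing the process.

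For the first part I would show that managing the process is feasible whenever $e_I$ is non-decreasing. The integrand $ex-C(x)$ has strictly increasing differences in $(e,x)$ — the cross term reduces to $(e_2-e_1)(x_2-x_1)$ and the $C$-terms cancel, regardless of the shape of the (merely continuous) cost function — so by Topkis's theorem the correspondence $\arg\max_x[ex-C(x)]$ is non-decreasing in $e$ and admits a non-decreasing selection $x^\ast(\cdot)$. Composing with the non-decreasing $e_I$ yields a non-decreasing allocation $q(\pi)=x^\ast(e_I(\pi))$; since the transfer in the definition is exactly the incentive-compatible transfer, the resulting mechanism is feasible and, by the opening observation, attains $W_I^*$. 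Off-support values and ties leave the $F_I$-integral (hence the welfare) equal to $W_I^*$ and can always be completed monotonically, which covers ``any'' managing-the-process mechanism.

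For the second part I would pick an \emph{adversarial} cost to force a strict gap. Take $C(x)=\tfrac12 x^2$, which is admissible (continuous, indeed strictly convex). Because $e_I(\pi)\in[0,1]$ as a conditional expectation of $\theta\in[0,1]$, the constrained problem $\max_{x\in[0,1]}[e_I(\pi)x-\tfrac12 x^2]$ has the unique solution $x=e_I(\pi)$, so $q^\ast=e_I$ is the unique efficient allocation. Completing the square gives the clean loss identity
\[
W_I^*-\Phi(q)=\tfrac12\int_0^1\big(e_I(\pi)-q(\pi)\big)^2\,dF_I(\pi)\ge 0,
\]
with equality if and only if $q=e_I$ holds $F_I$-a.e. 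The crux is then that a monotone $q$ cannot reproduce a decreasing $e_I$: if $e_I$ is decreasing on a positive-$F_I$-measure region (equivalently, $e_I$ does not coincide $F_I$-a.e. with any non-decreasing function), every feasible allocation differs from $e_I$ on a set of positive measure, making the loss strictly positive, so no feasible mechanism achieves $W_I^*$.

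The main obstacle I anticipate is monotonicity bookkeeping rather than economics: in the first part, producing a non-decreasing selection of the argmax for an arbitrary continuous $C$ (handled by the increasing-differences argument, which needs no convexity), and in the second part pinning down the exact sense in which ``$e_I$ ever decreasing'' precludes $F_I$-a.e. agreement with a monotone function. For the latter it is cleanest to invoke a positive-measure decreasing region; the atomic case—two support points $\pi_1<\pi_2$ carrying positive mass with $e_I(\pi_1)>e_I(\pi_2)$—makes the contradiction with the monotonicity requirement $q(\pi_1)\le q(\pi_2)$ immediate and hence the strict positivity of the loss transparent.
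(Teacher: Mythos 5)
Your proposal is correct and follows essentially the same route as the paper's proof: the first part establishes that a managing-the-process allocation can be taken non-decreasing (hence incentive compatible and feasible) and attains $W_I^*$ by pointwise maximization of the separable inattentive-welfare integral, and the second part uses the very same adversarial cost $C(x)=x^2/2$ to force $q=e_I$ $F_I$-almost everywhere, which no monotone allocation can deliver when $e_I$ is decreasing. If anything, you are more careful than the paper on two steps it labels as immediate: the existence of a monotone selection from the argmax correspondence (your increasing-differences/Topkis argument, needed since a merely continuous $C$ can make the argmax multivalued) and the precise measure-theoretic sense in which ``$e_I$ ever decreasing'' precludes $F_I$-a.e. agreement with a non-decreasing function.
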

	Notice the limitation in Proposition~\ref{prop:manage}, demanding $e_I$ to be increasing.
	This requirement holds when the agent's perception of their type is qualitatively correct - higher types hold higher perceptions than lower types - and misperception occurs at a quantitative level.\footnote{Many models of misperception exhibit this property, such as probability weighting and conservatism bias (see Section~\ref{sec:examples}).} A fundamental issue arises, instead, if $e_I(\cdot)$ ever decreases. In this case, the ``managing the process" mechanism may, depending on the specification of $C$, not be feasible in that it requires an allocation rule that is partially decreasing and, as such, is not incentive compatible. Hence, an efficient mechanism may not exist in this case and ironing is required.\footnote{An example of a PGP that generates decreasing $e_I$ is that of hype (see Section~\ref{sec:examples}). In this case, it is simple to see that $e_I(\pi) = \pi $ if $\pi < 1$ and $e(1)=1/2$. Thus, $e_I(\pi)$ decreases at $\pi = 1$. In this case, there exist cost functions for which no mechanism is efficient under inattention; for example, $C(x) = x^2/2$.}

	\subsubsection{Endogenous Cognitive States}
	
	We now consider the case where the agent's cognitive state depends on the mechanism. We establish that, in case the agent's PGP is biased, the problem of over-attention can arise: mechanisms that are efficient under inattention may provide the agent with overly large attention incentives. This leads to two potential sources of inefficiency: either (i) the designer implements a mechanism that is efficient under attention when, from a welfare perspective, having the agent bear attention costs is not desirable, or (ii) the designer distorts a mechanism that manages the process in order to incentivize inattention. Formally, the designer seeks to maximize welfare
	taking into account the agent's attention incentives and costs by solving
	$$\max_{q, \alpha \in [0, 1]}  \alpha \times  \left\{ \int \limits_0^1 e_I(\pi) q(\pi)-C(q(\pi)) d F_I(\pi) \right\} +(1-\alpha) \times \left\{ \int \limits_0^1 \pi q(\pi)-C(q(\pi)) d F(\pi) - \kappa   \right\}, $$
	where $q$ is non-decreasing, $\alpha = 1$ if $\nu(q) > \kappa$, $\alpha = 0$ if $\nu(q) < \kappa$, and $\alpha \in [0, 1]$ if $\nu(q) = \kappa$. \\
	
	Recall that the best the designer can achieve in cognitive state $\omega \in \{A, I\}$ is $W_\omega^*$, where $W_I^* \le W_A^*$. Thus, an upper bound on welfare is $	W^*_\kappa \equiv \max \left\{W_A^* - \kappa, W_I^*\right\},$
 so that incentivizing attention is optimal if and only if $\kappa \le \kappa^* \equiv W_A^* - W_I^*$. 
	
	\begin{definition}
		We say that mechanism $\mathcal{M}$ is efficient for $\kappa$ if it achieves $W_\kappa^*$.
	\end{definition}

	We start again by assuming that the agent's PGP is unbiased and show that selling the firm to the agent is an efficient mechanism for any $\kappa$. This result is reminiscent of \cite{bergemann2002}, who establish that Vickrey-Clarke-Groves (VCG) mechanisms, which provide agents with their marginal contribution to welfare, are efficient in allocation problems with information acquisition.
	\begin{prop} \label{prop:jusso}
		Suppose that the agent's PGP is unbiased. Then, the mechanism that sells the firm to the agent is an efficient mechanism for all $\kappa \ge 0$ and welfare decreases in $\kappa$.
	\end{prop}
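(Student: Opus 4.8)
The plan is to show that, under an unbiased PGP, selling the firm to the agent makes the agent the residual claimant on total surplus in \emph{both} cognitive states, so that the agent's privately optimal attention decision coincides exactly with the designer's welfare-optimal one. First I would compute the agent's gross (i.e., pre-cognitive-cost) utility in each state. Under selling the firm, $t(\pi)=C(q(\pi))$ with $q(\pi)\in\arg\max_{x}\pi x-C(x)$, so the designer's profit $t(\pi)-C(q(\pi))$ is identically zero and total surplus equals the agent's utility in each state. Hence $V_A(\mathcal{M})=\int_0^1[\theta q(\theta)-C(q(\theta))]\,dF(\theta)=W_A^*$, since the efficient-under-attention allocation in \eqref{eqn:efficient_A} is precisely $q(\pi)=\arg\max_x \pi x-C(x)$.

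Next I would evaluate $V_I(\mathcal{M})$ and invoke unbiasedness. Writing $U(\pi\mid\theta)=\theta q(\pi)-C(q(\pi))$ and integrating against $\rho(\cdot\mid\theta)$ and $F$, I obtain $V_I(\mathcal{M})=\int_0^1[e_I(\pi)q(\pi)-C(q(\pi))]\,dF_I(\pi)$. Because the PGP is unbiased, $e_I(\pi)=\pi$ almost everywhere on $\supp(F_I)$, so this reduces to $\int_0^1[\pi q(\pi)-C(q(\pi))]\,dF_I(\pi)$. The integrand is maximized pointwise by exactly the same $q$ used in selling the firm (the maximizer of $\pi x-C(x)$ is the same whether one integrates against $F$ or $F_I$), so comparing with \eqref{eqn:efficient_I} gives $V_I(\mathcal{M})=W_I^*$.

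The crucial step is then to read off the value of attention. Combining the two computations, $\nu(\mathcal{M})=V_A(\mathcal{M})-V_I(\mathcal{M})=W_A^*-W_I^*=\kappa^*$; equivalently, this follows directly from Proposition~\ref{prop:gen Z}, since the bias term in \eqref{eqn:vos} vanishes under unbiasedness and the remaining term equals $W_A^*-W_I^*$ at the efficient allocation. The agent is therefore attentive if and only if $\nu(\mathcal{M})\ge\kappa$, i.e. if and only if $\kappa\le\kappa^*$ — precisely the range in which incentivizing attention is welfare-optimal. Since realized welfare equals $W_A^*-\kappa$ when the agent is attentive and $W_I^*$ when inattentive, the mechanism delivers $\max\{W_A^*-\kappa,\,W_I^*\}=W_\kappa^*$ for every $\kappa\ge0$, and the tie at $\kappa=\kappa^*$ is immaterial because both states then yield $W_I^*$.

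Finally, monotonicity is immediate: $W_\kappa^*=\max\{W_A^*-\kappa,\,W_I^*\}$ is the upper envelope of a strictly decreasing affine function and a constant, hence weakly decreasing in $\kappa$ (strictly for $\kappa<\kappa^*$, flat thereafter). The main obstacle to guard against is the second step: I must ensure that the efficient-under-attention allocation \emph{simultaneously} achieves $W_I^*$. This hinges entirely on unbiasedness collapsing $e_I(\pi)$ to $\pi$, and I should also confirm that $q$ is non-decreasing so that selling the firm is feasible and incentive compatible — which holds by the supermodularity of $\pi x$ in $(\pi,x)$.
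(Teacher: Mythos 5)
Your proof is correct and follows essentially the same route as the paper's: under selling the firm the agent is the residual claimant on surplus, unbiasedness ($e_I(\pi)=\pi$ a.e.\ on $\supp(F_I)$) makes the inattentive payoff equal $W_I^*$, so the value of attention is exactly $W_A^*-W_I^*$ and the agent's attention decision coincides with the welfare-optimal one, delivering welfare $\max\{W_A^*-\kappa,\,W_I^*\}=W_\kappa^*$, which is non-increasing in $\kappa$. The only differences are cosmetic: you additionally verify feasibility (monotonicity of $q$ via supermodularity) and handle the tie at $\kappa=\kappa^*$ explicitly, details the paper leaves implicit.
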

	
	The intuition for this result is as follows. If the agent's PGP is unbiased, selling the firm to the agent is efficient under both cognitive states. Moreover, since the agent is the claimant on welfare in this mechanism, the value of attention is equal to $W_A^* - W_I^*$. As such, the agent's attention incentives are perfectly aligned with welfare maximization. \\  
	
	We now consider the case in which the agent's PGP is biased. As already established, if $e_I(\cdot)$ ever decreases then efficiency cannot be achieved for exogenous inattention and, consequently, could not be achieved if $\kappa$ is sufficiently large. If, instead, $e_I(\cdot)$ is non-decreasing, then an efficient mechanism exists under both inattention and attention. Thus, one may wonder whether an efficient mechanism exists for any $\kappa \ge 0$ as long as $e_I(\cdot)$ is increasing. This, however, is not generally true as the issue of \emph{over-attention} may arise. In particular, if an efficient mechanism for inattention generates attention incentives that are too strong, then it may incentivize attention when inattention is actually desired.\\
	
	Recall that, when $e_I$ is non-decreasing, any mechanism that manages the process is both feasible and efficient under inattention. Define
	$$
	\kappa_I \equiv \inf\{\nu(q): \text{$q$ is non-decreasing and manages the process}\}
	$$
	to be the lowest value of attention induced by any mechanism that manages the process. Note that, in general, mechanisms that manage the process induce some screening and, consequently, $\kappa_I > 0$. As such, for $\kappa < \kappa_I$, there does not exist a feasible mechanism that is efficient under inattention that simultaneously induces inattention. The following proposition describes when such over-attention jeopardizes efficiency and characterizes the optimal mechanism. 
	
	\begin{prop}\label{prop:inefficiency}
		Suppose the agent's PGP is biased and $e_I$ is non-decreasing. Then, the optimal mechanism is inefficient for $\kappa$ if and only if $\kappa_I > \kappa^*$ and $\kappa \in (\kappa^*, \kappa_I)$. Moreover, if $\kappa_I > \kappa^*$ then there is $\bar \kappa \in (\kappa^*, \kappa_I) $ such that the optimal mechanism is selling the firm to the agent if $\kappa <\bar \kappa$, while for $\kappa > \bar \kappa$ the optimal mechanism solves 
        $$ \max_q \int_0^1 e_I(\pi)q(\pi)-C(q(\pi))dF_I(\pi) \quad s.t. \quad \nu(q)=\kappa.$$  
	\end{prop}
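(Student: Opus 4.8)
The plan is to reduce the designer's program to a comparison between two branches---inducing attention versus inducing inattention---and then trace how each branch's attainable welfare moves with $\kappa$. Because the induced $\alpha$ is pinned down by the sign of $\nu(q)-\kappa$, and by Proposition~\ref{prop:gen Z} the map $q\mapsto\nu(q)$ is a linear functional, I would define the inattention value $g(\kappa)\equiv\sup\{W_I(q): q\text{ non-decreasing},\ \nu(q)\le\kappa\}$ and the attention value $h(\kappa)\equiv\sup\{W_A(q)-\kappa:\nu(q)\ge\kappa\}$, where $W_A(q)=\int_0^1\pi q(\pi)-C(q(\pi))\,dF(\pi)$ and $W_I(q)=\int_0^1 e_I(\pi)q(\pi)-C(q(\pi))\,dF_I(\pi)$, so that $W_A^*=\max_q W_A(q)$ and $W_I^*=\max_q W_I(q)$. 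A short argument shows mixing at $\nu(q)=\kappa$ never beats $\max\{g(\kappa),h(\kappa)\}$, so the optimal welfare equals $V(\kappa)=\max\{g(\kappa),h(\kappa)\}$, to be compared with the bound $W^*_\kappa=\max\{W_A^*-\kappa,W_I^*\}$. The monotonicity facts I would record are that $h(\kappa)\le W_A^*-\kappa$ is strictly decreasing, that $g$ is non-decreasing, and crucially that $g(\kappa)=W_I^*$ iff $\kappa\ge\kappa_I$: this holds because (by Proposition~\ref{prop:manage} and the pointwise structure of $W_I^*$) the maximizers of $W_I$ are exactly the allocations that manage the process $F_I$-a.e.\ on $\supp(F_I)$, each of which has $\nu\ge\kappa_I$ by definition of $\kappa_I$.

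For part (1) I would argue by three cases. If $\kappa\le\kappa^*$, selling the firm attains $W_A(q)=W_A^*$ with value of attention $\nu_{\text{sell}}\ge\kappa^*\ge\kappa$, so it induces attention and achieves $W_A^*-\kappa=W^*_\kappa$; here I would first record the identity $\nu_{\text{sell}}=W_A^*-W_I(q_{\text{sell}})\ge\kappa^*$, obtained because under selling the firm the agent is the claimant on perceived surplus. If $\kappa\ge\kappa_I$ (relevant only when $\kappa_I>\kappa^*$), some managing-the-process allocation has $\nu\le\kappa$, so inducing inattention attains $W_I^*=W^*_\kappa$. Finally, for $\kappa\in(\kappa^*,\kappa_I)$ the bound is $W^*_\kappa=W_I^*$, yet $h(\kappa)\le W_A^*-\kappa<W_A^*-\kappa^*=W_I^*$ and $g(\kappa)<W_I^*$, so $V(\kappa)<W^*_\kappa$ and the mechanism is inefficient. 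Since the first two cases exhaust $[0,\infty)$ exactly when $\kappa_I\le\kappa^*$ (empty middle interval), this simultaneously delivers the ``if and only if.''

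For part (2), assuming $\kappa_I>\kappa^*$, I would set $\bar\kappa\equiv\sup\{\kappa\ge0: W_A^*-\kappa\ge g(\kappa)\}$. Monotonicity makes this set a down-set, and the boundary evaluations place $\bar\kappa$ strictly inside $(\kappa^*,\kappa_I)$: at $\kappa^*$ we have $W_A^*-\kappa^*=W_I^*>g(\kappa^*)$, while at $\kappa_I$ we have $W_A^*-\kappa_I<W_I^*=g(\kappa_I)$. For $\kappa<\bar\kappa$, strict monotonicity of $\kappa\mapsto W_A^*-\kappa$ gives $W_A^*-\kappa>W_A^*-\bar\kappa\ge g(\bar\kappa)\ge g(\kappa)$, so the attention branch strictly wins; moreover evaluating the down-set inequality at $\kappa=\nu_{\text{sell}}$ gives $g(\nu_{\text{sell}})\ge W_I(q_{\text{sell}})=W_A^*-\nu_{\text{sell}}$, whence $\bar\kappa\le\nu_{\text{sell}}$, guaranteeing that selling the firm actually induces attention on this range and is therefore the optimal mechanism. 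For $\kappa>\bar\kappa$ the inattention branch strictly wins, the optimum solves $\max_q W_I(q)$ subject to $\nu(q)\le\kappa$, and I would then show this constraint binds, so the optimum solves the stated equality-constrained program.

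The main obstacle I anticipate is this final binding-constraint step: showing the inattention optimum has $\nu(q)=\kappa$ rather than slack. The clean route is to prove $g$ is strictly increasing on $(\kappa^*,\kappa_I)$---equivalently that relaxing the attention budget has positive shadow value whenever $g(\kappa)<W_I^*$---which forces the representation $g(\kappa)=\max_{\kappa'\le\kappa}\max\{W_I(q):\nu(q)=\kappa'\}$ to be attained at the right endpoint. This is transparent when $W_I$ is concave in $q$, since a perturbation toward a managing-the-process allocation simultaneously raises $W_I$ and $\nu$, so any slack could be spent to strictly improve; absent concavity I would argue directly from the linearity of $\nu$ together with the single-crossing/monotone-selection property of the pointwise maximizers underlying Proposition~\ref{prop:manage}. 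Secondary technical points---attainment of the infimum $\kappa_I$ and of the suprema defining $g$ and $h$---I would dispatch by noting that the non-decreasing allocation rules form a compact set on which $\nu$ is continuous, so all relevant optima are attained; this also justifies $g(\kappa_I)=W_I^*$ and hence that the inefficiency interval is open at $\kappa_I$.
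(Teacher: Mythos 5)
Your proposal is correct and takes essentially the same route as the paper: the same two-branch decomposition (attention, where selling the firm is optimal and yields $W_A^*-\kappa$, versus inattention, where welfare is maximized subject to $\nu(q)\le\kappa$ and whose value rises to $W_I^*$ exactly at $\kappa_I$), the same alignment identity for selling the firm (your $\nu_{\mathrm{sell}}=W_A^*-W_I(q_{\mathrm{sell}})$ is precisely the paper's appendix lemma that under selling the firm the agent is attentive exactly when welfare-optimal), and the same single-crossing construction of $\bar\kappa$, which the paper obtains via Berge's theorem plus the intermediate value theorem where you take the supremum of a down-set. If anything, your explicit perturbation/concavity argument for the binding-constraint step ($\nu(q)=\kappa$ when $\kappa>\bar\kappa$) is more careful than the paper's, which simply asserts that the constraint binds because $\kappa\le\kappa_I$.
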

	Figure~\ref{fig:cases} illustrates Proposition~\ref{prop:inefficiency}.
	In line with the exogenous benchmark, efficiency for $\kappa$ can be achieved if cognitive costs are sufficiently small (i.e., $\kappa \leq \kappa^*$) by selling the firm to the agent, or sufficiently large (i.e., $\kappa \geq \kappa_I$) by managing the process. To see why selling the firm to the agent, which is efficient under attention, is an optimal mechanism if the designer wants the agent to pay attention, note that this mechanism perfectly aligns (rational) attention incentives with welfare; that is, the agent's utility under this mechanism, regardless of their their realized cognitive state, coincides with actual welfare. As such, selling the firm to the agent is, independent of $\kappa$, a feasible mechanism that serves as a lower bound on the designer's objective. Thus, introducing distortions to this mechanism to promote attention is never optimal, as this would lead to a strictly lower level of welfare.  \\

     \begin{figure}[http]
     \centering
         
 
\begin{tikzpicture}
    % Axes

   % Top graph horizontal line with curly brackets above
    \draw[thick,->] (0,6) -- (6,6);
    \draw [decorate,decoration={brace,amplitude=10pt,raise=4pt},yshift=0pt] (0,6) -- (2,6) node [black,midway,yshift=0.6cm,above] {$W_A^*$};
    \draw [decorate,decoration={brace,amplitude=10pt,raise=4pt},yshift=0pt] (2,6) -- (4,6) node [black,midway,yshift=0.6cm,above] {Inefficient};
    \draw [decorate,decoration={brace,amplitude=10pt,raise=4pt},yshift=0pt] (4,6) -- (6,6) node [black,midway,yshift=0.6cm,above] {$W_I^*$};
     \draw[below] (5.5,6) node {$\kappa$};
      \draw[below] (13.5,6) node {$\kappa$};
    \draw[below] (2,6) node {$\kappa^*$};
\draw[below] (4,6) node {$\kappa_I$};
    % Bottom graph horizontal line with curly brackets above
    \draw[thick,->] (8,6) -- (14,6);
    \draw [decorate,decoration={brace,amplitude=10pt,raise=4pt},yshift=0pt] (8,6) -- (11,6) node [black,midway,yshift=0.6cm,above] {$W_A^*$};
    \draw [decorate,decoration={brace,amplitude=10pt,raise=4pt},yshift=0pt] (11,6) -- (14,6) node [black,midway,yshift=0.6cm,above] {$W_I^*$};
 \draw[below] (11,6) node {$\kappa^*$};
\draw[below] (9,6) node {$\kappa_I$};
\draw (9,5.9)--(9,6.1);
\draw (11,5.9)--(11,6.1);
\draw (2,5.9)--(2,6.1);
\draw (4,5.9)--(4,6.1);
\draw (0,5.7)--(0,6.3);
\draw (8,5.7)--(8,6.3);
\draw[below] (3,6) node {$\bar \kappa $};
\draw (3,5.9)--(3,6.1);
\end{tikzpicture}
     \caption{\small{The left panel depicts a situation where $\kappa^*<\kappa_I$. If $\kappa^* < \kappa <\bar \kappa$ the agent bears (inefficiently high) attention costs; instead, if $\kappa_I>\kappa>\bar \kappa$ the optimal mechanism distorts the inattentive efficient allocation to keep the agent inattentive. The right panel depicts the case  $\kappa^*>\kappa_I$. There the mechanism is efficient for any $\kappa$.}}
     \label{fig:cases}
 \end{figure}
	 
	Efficiency, however, can be achieved for every $\kappa$ only if over-attention is not an issue. By this we mean that the attention incentives of mechanisms that manage the process are not overly strong (i.e., $\kappa_I \le \kappa^*$). In this case, any mechanism that manages the process with value of attention below $\kappa$ is efficient for $\kappa \ge \kappa^*$. Instead, when every efficient mechanism for inattention comes with strong attention incentives (i.e., $\kappa_I > \kappa^*$), then a region of inefficiency is unavoidable as depicted in the left panel of Figure~\ref{fig:cases}: for $\kappa \in (\kappa^*, \kappa_I)$ the designer wants the agent to be inattentive but every mechanism that is efficient under inattention induces attention.\footnote{Over-attention (i.e., $\kappa_I > \kappa^*$) is most likely to arise if either (i) managing the process under inattention (almost) achieves attentive first-best (i.e., $W_A^* \approx W_I^*$ so that $\kappa^* \approx 0$), or (ii) mechanisms that are efficient under inattention requires coarse screening as identified in Proposition~\ref{prop:coarsescreening}.} This is because,
    in contrast to selling the firm to the agent, mechanisms that manage the process (and thus are efficient under inattention) do not necessarily align the agent's value of attention with the objective of welfare maximization. In this case, the designer must introduce inefficiency, either in the form of inducing attention (when it is overly costly to do so) or distorting a mechanism that manages the process to decrease attention incentives. The following example illustrates the issue of over-attention in welfare maximization problems. \\

	\begin{example}
		Suppose that $\theta$ is uniformly distributed on $[0, 1]$ and the agent exhibits probability weighting when inattentive: they perceive type $\theta$ to be $\theta^{1/2}$. Moreover, costs are given by $C(x) = \frac{1}{4}x$ for $x \in [0, 1]$. Suppose first that the agent's cognitive state is exogenous. The mechanisms that are efficient for cognitive state $\omega \in \{A, I\}$ are
		$$
		q_A(\pi) \equiv \begin{cases}
			1 & \text{if $\pi \ge 1/4$}\\
			0 & \text{if $\pi < 1/4$}
		\end{cases};
		\quad 
		q_I(\pi) \equiv \begin{cases}
			1 & \text{if $\pi \ge 1/2$}\\
			0 &  \text{if $\pi < 1/2$},
		\end{cases}
		$$
		where $q_A$ sells the firm to the agent and $q_I$ optimally manages the process as $e_I(\pi) = \pi^2$ (which is increasing) and $e_I(1/2) = 1/4$. In this case, $W_A^* = W_I^* = 9/32$, so that exogenous inattention does not imply an efficiency loss.\\
		
		Now, suppose that the cognitive state is endogenous and observe that the value of attention can be written as
		$
		\nu(q) = \int \limits_0^1 q(\pi)[3\pi^2 - 2\pi^3 - \pi] d\pi,
		$
		which is maximized by the allocation rule $q(\pi) = 1$ for $\pi \ge 1/2$ and $q(\pi) = 0$ for $\pi < 1/2$. Consequently, the mechanism that is efficient under inattention also maximizes the value of attention, with solution value $\kappa_I \equiv 1/32$. \\
		
		Since $\kappa^* = W_A^* - W_I^* = 0$, it follows that $\kappa^* < \kappa_I$ so that Proposition~\ref{prop:inefficiency} implies that the optimal mechanism can never be efficient for $\kappa \in (0, 1/32)$. The optimal mechanism for $\kappa < 1/32$ is as follows:
		\begin{itemize}
			\item For $\kappa \in (9/512, 1/32)$, $q^*(\pi) = 1$ if $\pi \ge p$ and $q^*(\pi) = 0$ if $\pi < p$, where $p \in (1/4, 1/2)$ is the unique solution to 
			$$
			\frac{1}{2}\hspace{0.5mm}p^2(1 - p)^2= \kappa,
			$$
			which ensures the value of attention does not exceed $\kappa$ so that the agent is inattentive.
			\item For $\kappa < 9/512$, $q^* = q_A$. The agent is attentive in this case. Note that the mechanism sells the firm to the agent in this case, consistent with Proposition~\ref{prop:inefficiency}.
		\end{itemize}
		Figure~\ref{fig:wel} provides a plot of welfare in the optimal mechanism. Notice that welfare actually \emph{increases} in $\kappa$ on the interval $(9/512, 9/32)$. This stands in contrast to Proposition~\ref{prop:jusso}, which states that welfare can never increase in $\kappa$ when the agent's perception is unbiased.

	\end{example}

    \begin{figure}[http]
		
		\hspace{4cm}
		\begin{tikzpicture}[xscale=110,yscale=10]
	
	\draw[->] (0,0) -- (0.036,0) node[below] { $\kappa$};
	\draw[->] (0,0) -- (0,0.25) node[left] {$\textcolor{blue}{W_A-\kappa},\textcolor{red}{W_I}$};
	\draw[blue, domain=0:9/512,thick] plot (\x , {2*(1/4+1/32 -\x-0.1)-0.18});
%	\draw[red, domain=9/512:1/32,thick] plot (\x , {0.25+0.25*  (1 -( (1 - 4 (2*\x)^(0.5))^(0.5)))-0.5 *  (1 -( (1 - 4 (2*\x)^(0.5))^(0.5)))});
	\draw[domain=9/512:1/32, red, thick] 
	plot ({\x}, 
	{2*(1/4 + 1/4*(1/2*(1 - sqrt(1 - 4*sqrt(2)*sqrt(\x))))^2 - 
		1/2*(1/2*(1 - sqrt(1 - 4*sqrt(2)*sqrt(\x))))^4-0.1)-0.18});
	\draw[red, domain=1/32:0.035,thick] plot (\x , {1/4+1/32-0.1 });
 	\draw[-,thick] (0.25,0.01)--(0.25,-0.01) node[below] {$0.25$};
 		\draw[-,thick] (0.25,0.01)--(0.25,-0.01) node[below] {$0.25$};
 		%	\draw[-,thick] (0.25,0.01)--(0.25,-0.01) node[below] {$0.25$};
 	\draw[-,dashed] (1/32,9/32-0.1)--(1/32,-.000001) node[below]{$ \kappa_I$};
 	 	\draw[-,dashed] (9/512,(1/4+1/32-9/512-0.12)--(9/512,-.000001) node[below]{$ 9/512$};
    \draw[-,dashed] (0, 1/4+1/32-0.1)--(1/32,1/4+1/32-0.1);
\end{tikzpicture} %
		
		\vspace*{-4mm}
		
		\caption{\small{The blue curve graphs the agent's maximized expected utility when attentive. The red curve that when operating under inattention.}}
		\small{\label{fig:wel}}
	\end{figure}
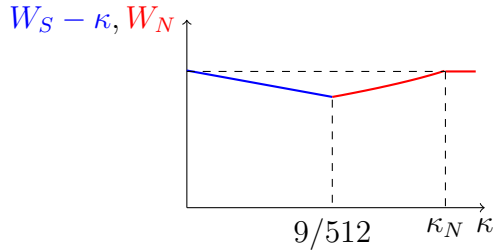 
\subsection{Profit-Maximization through Screening}

In the previous application, we examined aligned preferences between the designer and the agent. Now, we investigate misaligned preferences, where a profit-maximizing firm optimally screens an agent with endogenous perception. Using the standard framework of \cite{Mussa1978}, the agent values provision of quality $x \in [0, 1]$ at $\theta x$, where we assume $\theta \sim U[0, 1]$ for simplicity. The firm offers a menu of products $(q(\theta), t(\theta))_{\theta \in [0, 1]}$. Provision to type $\theta$ of $q(\theta) = x$ costs the firm $C(x) = x^2/2$.\\

A prominent result from information economics is that the final distribution of beliefs that an information structures generates is a sufficient statistic for it. The exact (unbiased) process of how these perceptions are generated (i.e., the PGP) has no impact on the design of the optimal mechanism. In this application we show that, for endogenous cognitive states, if the PGP is biased the final distribution of perceptions is not a sufficient statistic and the exact PGP impacts on the optimal mechanism. To do so, we focus on a situation where the inattentive agent's perception distribution is uniform on $[1/4, 3/4]$. We consider two different perceptions that generate this distribution: an unbiased perception, $\rho_U$,\footnote{Note that $\rho_U$ is not uniquely pinned down. There are many ways to generate the desired perception distribution in an unbiased way. However, as long as they are unbiased, which one to use has no impact on the results of this section.} and conservatism bias, $\rho_C$, where the agent perceives type $\theta$ as $\frac{1}{2} \theta + \frac{1}{4}$. Interestingly, the biased perception is more accurate than the unbiased one and is thus preferred by the agent in any mechanism (see Example~\ref{ex:cvi}). As we will see, the optimal mechanism takes very different forms depending on whether the PGP is $\rho_U$ or $\rho_C$.

\subsubsection{Exogenous Cognitive States}

If the agent's cognitive state is exogenous, only the final distribution of their perception matters. For exogenous attention, the agent's perception is uniformly distributed on $[0, 1]$. The firm chooses $q$ to solve
$ \max \limits_{q} \int \limits_0^1 q(\theta) \left[2 \theta - 1\right] - \frac{q(\theta)^2}{2} d \theta,$ which has the solution $q_A(\theta) = \max \left\{0, 2 \theta - 1\right\}$ for all $\theta \in [0, 1]$. This rule earns the firm profits of $1/12$.\\  

If the agent is exogenously inattentive, their perception is uniformly distributed on $[1/4, 3/4]$ regardless of their PGP. Consequently, the optimal mechanism is the same for both $\rho_C$ and $\rho_U$: the firm chooses $q$ to solve $\max \limits_{q} \int \limits_0^1 \left[q(\pi) \left(2 \pi - \frac{3}{4} \right) - \frac{q(\pi)^2}{2}\right]2 d\pi,$
which has the solution $q_I(x) = \max \left\{0, 2x - \frac{3}{4} \right\}$ for all $x \in [1/4, 3/4]$.\footnote{Note that $q_I(x)$ for $x < 1/4$ must be 0, but for $x > 3/4$, it can be any increasing function on $[3/4, 1]$.} This earns the firm profits of $9/128$.\footnote{Note that if the agent's PGP is unbiased, this grants welfare of $9/256$. If the agent's PGP exhibits conservatism bias, the agent's welfare equals $9/128$. Note that agent welfare is higher with $\rho_C$ than with $\rho_U$, consistent with $\rho_C$ being more accurate than $\rho_U$ and Proposition~\ref{prop:BlackwellNew}.}

\subsubsection{Endogenous Cognitive States}

We now consider endogenous cognitive states. In the exogenous cognitive state benchmark, profits are higher if the agent is attentive compared to inattentive. This implies that the designer has incentives to induce attention. To do so, the designer must increase the value of attention, i.e., the wedge between the agent's benefits of attention ($V_A(\mathcal{M})$) and inattention ($V_I(\mathcal{M})$). The designer can increase this value by either increasing $V_A$ relative to $V_I$ (offering a 'carrot') or decreasing $V_I$ relative to $V_A$ (using a 'stick'). Whether the designer uses a stick or a carrot depends crucially on the agent's PGP. We begin with the unbiased PGP, $\rho_U$.\\

If $\kappa$ is either sufficiently small or large, the designer treats the agent as if their cognitive state was exogenous. The interesting case arises when $\kappa$ is intermediate (i.e., there exist two thresholds $\underline{\kappa}_U<\bar{\kappa}_U$ and $\kappa \in (\underline{\kappa}_U,\bar{\kappa}_U)$). Here, the designer wants the agent to be attentive, but the attention-optimal allocation, $q_A$, does not provide sufficient incentives for attention. Consequently, the designer must introduce distortions. Figure~\ref{fig:Info} illustrates the optimal way for the designer to introduce these distortions for $\rho_U$.

\begin{figure}[h]
	\centering
	\begin{subfigure}[h]{.5\textwidth}
		 \begin{tikzpicture}[xscale=4.5,yscale=5]
	% Left panel (v(theta))
	\draw[->] (0,0) -- (1.2,0) node[below] { $\pi$};
	\draw[->] (0,-0.3) -- (0,0.3) node[above] {$v(\pi)$};
	\draw[blue, domain=0:0.25,thick] plot (\x, {-\x});
	\draw[blue, domain=0.25:0.75,thick] plot (\x, { \x-0.5});
	\draw[blue, domain=0.75:1,thick] plot (\x, {1-\x});
	\draw[-,thick] (0.25,0.01)--(0.25,-0.01) node[below] {$0.25$};
	\draw[-,thick] (0.5,0.01)--(0.5,-0.01) node[below] {$0.5$}; 
	\draw[-,thick] (1,0.01)--(1,-0.01) node[below] {$1$};
	\draw[-,thick] (-0.01,-0.25)--(0.01,-0.25) node[left] {$ -0.25$};
	\draw[-,thick] (-0.01,0.25)--(0.01,0.25) node[left] {$ 0.25$};
	\draw[-,thick] (0.75,0.01)--(0.75,-0.01) node[below] {$0.75$}; 
	%\node[blue] at (0.3,0.7) {$v(\theta)$};
\end{tikzpicture} %
	\end{subfigure}%
	\begin{subfigure}[h]{.5\textwidth}
		 
\begin{tikzpicture}[xscale=4.5,yscale=3]
	
	\draw[->] (0,0) -- (1.2,0) node[below] { $\pi$};
	\draw[->] (0,0) -- (0,1.2) node[left] {\textcolor{blue}{$q_A$}, \textcolor{red}{$q_U^*$}};
	\draw[purple, domain=0:1/2,thick] plot (\x , {0});
%	\draw[blue, domain=0.25:0.75] plot (\x , {0.5});
	\draw[blue, domain=1/2:1,thick] plot (\x , {2*\x-1});
		\draw[red, domain=0.5:3/4,thick] plot (\x , {(2+1)*\x-(1+0.5)});
			\draw[red, domain=3/4:1,thick] plot (\x , {(2-1)*\x-(0)});
%	\draw[-,thick] (0.25,0.01)--(0.25,-0.01) node[below] {$0.25$};
 	\draw[-,thick] (3/4,0.01)--(3/4,-0.01) node[below] {$3/4$}; 
 	\draw[-,thick] (1/2,0.01)--(1/2,-0.01) node[below] {$1/2$}; 
	\draw[-,thick] (1,0.01)--(1,-0.01) node[below] {$1$};
%	\draw[-,thick] (-0.01,3/4)--(0.01,3/4) node[left] {$ 3/4$};
	\draw[-,thick] (-0.01,1)--(0.01,1) node[left] {$ 1$};
	\draw[-,thick] (-0.01,0 )--(0.01,0 ) node[left] {$ 0 $};
\end{tikzpicture}  %
	\end{subfigure}%
	\vspace*{-5mm}
	\caption{The left figure depicts the (marginal contribution to) the value of attention under PGP $\rho_U$. The right figure shows an (optimal) allocation rule in this case for some $\kappa$.}
	\small{\label{fig:Info}}
\end{figure}

As Figure~\ref{fig:Info} shows, the marginal contribution to the value of attention is positive for all perceptions above $1/2$, so the designer increases provision uniformly to the agent relative to $q_A$. This policy change works as a carrot: In the benchmark of exogenous attention ($\kappa=0$), the designer rationed the allocation, compared to what would maximize welfare, to capture most of that welfare. To incentivize attention, the designer introduces less rationing, receiving a smaller share from the increased welfare. As the residual claimant on welfare, the agent benefits from this policy change.\\

Next, consider the biased PGP, $\rho_C$, and assume again that $\kappa$ is intermediate (i.e., there exist two thresholds $\underline{\kappa}_C<\bar{\kappa}_C$ and $\kappa \in (\underline{\kappa}_C,\bar{\kappa}_C)$) . Figure~\ref{fig:Con} illustrates that the marginal contribution of perceptions between $1/2$ and $3/4$ is now negative, so the designer decreases provision to these perceptions relative to $q_A$.

\begin{figure}[h]
	\centering
	\begin{subfigure}[h]{.5\textwidth}
		 \begin{tikzpicture}[xscale=4.5,yscale=5]
	% Left panel (v(theta))
	\draw[->] (0,0) -- (1.2,0) node[below] { $\pi$};
 	\draw[->] (0,-0.3) -- (0,0.3) node[above] {$v(\pi)$};
	\draw[blue, domain=0:0.25,thick] plot (\x, {-\x});
	\draw[blue, domain=0.25:0.75,thick] plot (\x, {0.5-\x});
	\draw[blue, domain=0.75:1,thick] plot (\x, {1-\x});
	\draw[-,thick] (0.25,0.01)--(0.25,-0.01) node[below] {$0.25$};
	\draw[-,thick] (0.5,0.01)--(0.5,-0.01) node[below] {$0.5$}; 
	\draw[-,thick] (1,0.01)--(1,-0.01) node[below] {$1$};
	\draw[-,thick] (-0.01,-0.25)--(0.01,-0.25) node[left] {$ -0.25$};
	\draw[-,thick] (-0.01,0.25)--(0.01,0.25) node[left] {$ 0.25$};
	\draw[-,thick] (0.75,0.01)--(0.75,-0.01) node[below] {$0.75$}; 
	%\node[blue] at (0.3,0.7) {$v(\theta)$};
\end{tikzpicture} %
	\end{subfigure}%
	\begin{subfigure}[h]{.5\textwidth}
		\begin{tikzpicture}[xscale=4.5,yscale=3]
	
	\draw[->] (0,0) -- (1.2,0) node[below] { $\pi$};
	\draw[->] (0,0) -- (0,1.2) node[left] {\textcolor{blue}{$q_A$}, \textcolor{red}{$ q_C^*$}};;
	\draw[purple, domain=0:1/2,thick] plot (\x , {0});
	%	\draw[blue, domain=0.25:0.75] plot (\x , {0.5});
	\draw[blue, domain=1/2:1,thick] plot (\x , {2*\x-1});
	\draw[red, domain=0.5:3/4,thick] plot (\x , {(2-1)*\x-(1-0.5)});
	\draw[red, domain=3/4:1,thick] plot (\x , {(2-1)*\x-(0)});
	%	\draw[-,thick] (0.25,0.01)--(0.25,-0.01) node[below] {$0.25$};
	\draw[-,thick] (3/4,0.01)--(3/4,-0.01) node[below] {$3/4$}; 
	\draw[-,thick] (1/2,0.01)--(1/2,-0.01) node[below] {$1/2$}; 
	\draw[-,thick] (1,0.01)--(1,-0.01) node[below] {$1$};
%	\draw[-,thick] (-0.01,3/4)--(0.01,3/4) node[left] {$ 3/4$};
	\draw[-,thick] (-0.01,1)--(0.01,1) node[left] {$ 1$};
	\draw[-,thick] (-0.01,0 )--(0.01,0 ) node[left] {$ 0 $};
\end{tikzpicture}  %
	\end{subfigure}%
	\vspace*{-4mm}
	\caption{\small{The left figure depicts the (marginal contribution to) the value of attention under PGP $\rho_C$. The right figure shows an (optimal) allocation rule in this case for some $\kappa$.}}
	\label{fig:Con}
\end{figure}

The change in the designer's allocation policy works as a 'stick'. While the average (over all types) allocation stayed constant, rationing for types between $3/4$ and $1/2$ increased, making them worse off. Although the allocation increased for types at the top, they do not benefit from this new policy as it is less profitable to imitate lower types, and thus the designer charges them a higher price.

Figure~\ref{fig:du} and Proposition~\ref{prop:carrotstick} summarize the above discussion.

\begin{prop}\label{prop:carrotstick}
	For intermediate costs, to incentivize attention, the designer uses:
	\begin{enumerate}[(a)]
		\item \textbf{a carrot if the PGP is unbiased:} $V_A$ and $V_I$ increase in $\kappa$ on $(\underline{\kappa}_U,\overline{\kappa}_U)$; and
		\item \textbf{a stick if the PGP is biased:} $V_A$ and $V_I$ decrease in $\kappa$ on $(\underline{\kappa}_C,\overline{\kappa}_C)$
	\end{enumerate}
\end{prop}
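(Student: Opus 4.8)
The plan is to reduce the whole claim to a comparative-statics statement about a single Lagrange multiplier. Throughout the intermediate region the designer wants the agent attentive, so the relevant perception distribution is $F$ (uniform on $[0,1]$); writing the transfer in Myersonian form with the lowest type held to zero rent, the firm solves $\max_q \int_0^1[(2\pi-1)q(\pi)-q(\pi)^2/2]\,d\pi$ over non-decreasing $q\in[0,1]$ subject to the attention constraint $\nu(q)\ge\kappa$. By Proposition~\ref{prop:gen Z}, $\nu(q)=\int_0^1 q(\pi)\,v(\pi)\,d\pi$, where for the two PGPs one computes $v=v_U$ equal to $-\pi,\ \pi-\tfrac12,\ 1-\pi$ on $[0,\tfrac14],[\tfrac14,\tfrac34],[\tfrac34,1]$ in the unbiased case, and $v=v_C$ equal to $-\pi,\ \tfrac12-\pi,\ 1-\pi$ on the same intervals in the conservatism case; these are exactly the left panels of Figures~\ref{fig:Info} and~\ref{fig:Con}. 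First I would note that for intermediate $\kappa$ the unconstrained optimum $q_A$ violates the constraint, so it binds, and attach a multiplier $\lambda\ge0$; pointwise maximization of $(2\pi-1)q-q^2/2+\lambda v(\pi)q$ then yields $q^*_\lambda(\pi)=\mathrm{mid}\{0,\,2\pi-1+\lambda v(\pi),\,1\}$, with $q^*_0=q_A$.

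The next step is to record the two objects of interest as linear functionals of $q$: the attentive utility $V_A(q)=\int_0^1 q(\pi)(1-\pi)\,d\pi$ and, from the representation behind Proposition~\ref{prop:gen Z}, the true inattentive utility $V_I(q)=\int_0^1[e_I(\pi)-\pi]q(\pi)\,dF_I(\pi)+\int_0^1 q(\pi)[1-F_I(\pi)]\,d\pi$, with $\nu=V_A-V_I$. Differentiating in $\lambda$, using $\partial q^*_\lambda/\partial\lambda=v(\pi)$ on the interior and $0$ at the corners, reduces everything to sign computations. For the unbiased PGP the argument is immediate: on the active region $[\tfrac12,1]$ one has $v_U\ge0$, so $q^*_\lambda$ rises pointwise in $\lambda$; since the weights $(1-\pi)$ and $1-F_I(\pi)$ are non-negative, both $V_A$ and $V_I$ increase in $\lambda$. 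This is the carrot, with the designer uniformly raising provision above $q_A$.

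For the conservatism PGP pointwise monotonicity fails, and this is where the real work lies. Here $q^*_\lambda$ \emph{falls} on $[\tfrac12,\tfrac34]$ (where $v_C<0$) but \emph{rises} on $[\tfrac34,1]$ (where $v_C>0$), so the effect on $V_A$ and $V_I$ is a priori ambiguous and must be evaluated directly. I would compute $dV_A/d\lambda=\int_{1/2}^{1}v_C(\pi)(1-\pi)\,d\pi$ and the analogue for $V_I$ (which also carries the bias term $2\int_{1/2}^{3/4}(\pi-\tfrac12)v_C(\pi)\,d\pi$) and show both are strictly negative: the reduction on the middle interval dominates the top-interval increase because the welfare weights $(1-\pi)$ and $1-F_I(\pi)$ are larger on $[\tfrac12,\tfrac34]$ than on $[\tfrac34,1]$. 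Intuitively this is the stick: average provision is essentially held fixed while rationing is shifted onto the more rent-relevant middle types, making the agent worse off in both cognitive states.

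Finally I would close the loop between $\lambda$ and $\kappa$. A short revealed-preference argument (for $\lambda_2>\lambda_1$, adding the two optimality inequalities for $q^*_{\lambda_1},q^*_{\lambda_2}$ gives $(\lambda_2-\lambda_1)(\nu(q^*_{\lambda_2})-\nu(q^*_{\lambda_1}))\ge0$) shows $\kappa(\lambda)=\nu(q^*_\lambda)$ is non-decreasing, and strictly increasing on the intermediate range, so it inverts to $\lambda(\kappa)$ on the relevant interval; composing with the monotonicity in $\lambda$ established above delivers the signs in parts (a) and (b). The main obstacle is twofold: (i) the ambiguous, non-monotone comparative statics in the biased case, handled by the explicit weight comparison above; and (ii) verifying that the monotonicity constraint on $q$ never binds over the relevant range, so that no ironing is needed and the constrained and Lagrangian solutions coincide. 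The latter holds because the interior slope of $q^*_\lambda$ on $[\tfrac14,\tfrac34]$ is $2-\lambda$, and one checks that the intermediate-$\kappa$ region corresponds to $\lambda\le2$ (at $\lambda=2$ the allocation degenerates to the step function $\mathbf 1[\pi\ge\tfrac34]$ that maximizes $\nu$, giving $\nu_{\max}=1/32$), beyond which inattention is preferred.
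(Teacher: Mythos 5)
Your proposal is correct and follows essentially the same route as the paper's proof: the paper likewise solves $\max_q \int_0^1 [(2\pi-1)q - q^2/2]\,d\pi$ subject to the binding constraint $\nu(q)=\kappa$ via a Lagrange multiplier $\lambda$ (computing $\lambda = 96\kappa-3$ in the unbiased case and $\lambda=96\kappa-1$ in the biased case, so that $\kappa$ and $\lambda$ move together), and then signs $dV_A/d\lambda$ and $dV_I/d\lambda$ by exactly the integral comparisons you describe, with $dV_A/d\lambda = \int_{1/2}^{3/4}(\tfrac12-\pi)(1-\pi)\,d\pi + \int_{3/4}^{1}(1-\pi)^2\,d\pi = -1/192 < 0$ in the biased case. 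One immaterial slip: your parenthetical that $\lambda=2$ yields the step function $\mathbf{1}[\pi \ge 3/4]$ with $\nu_{\max}=1/32$ holds only for $\rho_C$ (for $\rho_U$ the $\lambda=2$ allocation is a ramp and $\nu_{\max}=1/16$), but this does not matter since in both cases the intermediate region corresponds to $\lambda \le \sqrt{5/2} < 2$, so no ironing is ever needed.
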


\begin{figure}[h]
	\centering
	\begin{subfigure}[h]{.5\textwidth}
		\begin{tikzpicture}[xscale=100,yscale=75]
	
	\draw[->] (0,0) -- (1/16,0) node[below] { $\kappa$};
	\draw[->] (0,0) -- (0,1/16) node[left] {$\textcolor{blue}{V_A},\textcolor{red}{V_I}$};
	\draw[blue, domain=0:1/32,thick] plot (\x , {1/24});
		\draw[red, domain=0:1/32,thick] plot (\x , {1/96});
	%	\draw[blue, domain=0.25:0.75] plot (\x , {0.5});
	\draw[-,dashed] (1/32,0)--(1/32,1/24);
		\draw[-,dashed] (1/32,0.000001)--(1/32,-.000001) node[below]{$\underline{\kappa}_U$};
			\draw[-,dashed] (0.0477,0.000001)--(0.0477,-.000001) node[below]{$\overline{\kappa}_U$};
				\draw[-,dashed] (0.0477,0 )--(0.0477,0.067); 
				\draw[<->,dashed] (0.0394,0.0539) -- (0.0394,0.0145);
				\node [right]at (0.0394,0.03) {$\kappa$};
	\draw[blue, domain=1/32:sqrt(2.5)/96+1/32,thick] plot (\x , {3/2*\x-1/192});
	\draw[red, domain=1/32:sqrt(2.5)/96+1/32,thick] plot (\x , {1/2*\x-1/192});
%	\draw[red, domain=0.5:3/4,thick] plot (\x , {(2-1)*\x-(1-0.5)});
%	\draw[red, domain=3/4:1,thick] plot (\x , {(2-1)*\x-(0)});
	%	\draw[-,thick] (0.25,0.01)--(0.25,-0.01) node[below] {$0.25$};
%	\draw[-,thick] (3/4,0.01)--(3/4,-0.01) node[below] {$3/4$}; 
%	\draw[-,thick] (1/2,0.01)--(1/2,-0.01) node[below] {$1/2$}; 
%	\draw[-,thick] (1,0.01)--(1,-0.01) node[below] {$1$};
	%	\draw[-,thick] (-0.01,3/4)--(0.01,3/4) node[left] {$ 3/4$};
%	\draw[-,thick] (-0.01,1)--(0.01,1) node[left] {$ 1$};
%	\draw[-,thick] (-0.01,0 )--(0.01,0 ) node[left] {$ 0 $};
\end{tikzpicture} %
	\end{subfigure}%
	\begin{subfigure}[h]{.5\textwidth}
		\begin{tikzpicture}[xscale=200,yscale=75]
	
	\draw[->] (0,0) -- (1/32,0) node[below] { $\kappa$};
	\draw[->] (0,0) -- (0,1/16) node[left] {$\textcolor{blue}{V_A},\textcolor{red}{V_I}$};
	\draw[blue, domain=0:1/96,thick] plot (\x , {1/24});
	\draw[red, domain=0:1/96,thick] plot (\x , {1/32});
	%	\draw[blue, domain=0.25:0.75] plot (\x , {0.5});
	\draw[-,dashed] (1/96,0)--(1/96,1/24);
	\draw[-,dashed] (1/96,0.000001)--(1/96,-.000001) node[below]{$\underline{\kappa}_C$};
	\draw[-,dashed] (0.0269,0.000001)--(0.0269,-.000001) node[below]{$\overline{\kappa}_C$};
	\draw[-,dashed] (0.0269,0 )--(0.0269,0.0334); 
		\draw[<->,dashed] (0.0187,0.0188)--(0.0187,0.0375); 
		\node [right] at (0.0188,0.028) {$\kappa$};
	\draw[blue, domain=1/96:sqrt(2.5)/96+1/96,thick] plot (\x , {3/64-\x/2});
	\draw[red, domain=1/96:sqrt(2.5)/96+1/96,thick] plot (\x , {3/64-(3/2)*\x});
%	\node[blue] at (0.0269,0.0334) {$U_S$};
%	\node[red] at (0.0269,0.0065) {$U_S$};
	%	\draw[red, domain=0.5:3/4,thick] plot (\x , {(2-1)*\x-(1-0.5)});
	%	\draw[red, domain=3/4:1,thick] plot (\x , {(2-1)*\x-(0)});
	%	\draw[-,thick] (0.25,0.01)--(0.25,-0.01) node[below] {$0.25$};
	%	\draw[-,thick] (3/4,0.01)--(3/4,-0.01) node[below] {$3/4$}; 
	%	\draw[-,thick] (1/2,0.01)--(1/2,-0.01) node[below] {$1/2$}; 
	%	\draw[-,thick] (1,0.01)--(1,-0.01) node[below] {$1$};
	%	\draw[-,thick] (-0.01,3/4)--(0.01,3/4) node[left] {$ 3/4$};
	%	\draw[-,thick] (-0.01,1)--(0.01,1) node[left] {$ 1$};
	%	\draw[-,thick] (-0.01,0 )--(0.01,0 ) node[left] {$ 0 $};
\end{tikzpicture} %
	\end{subfigure}%
	\vspace*{-4mm}
	\caption{\small{The left figure depicts the agent's utility levels depending on their cognitive state if the PGP is unbiased, i.e., for $\rho_U$. The right figure plots the same quantities for a biased PGP, i.e., $\rho_C$.}}
	\small{\label{fig:du}}
\end{figure}
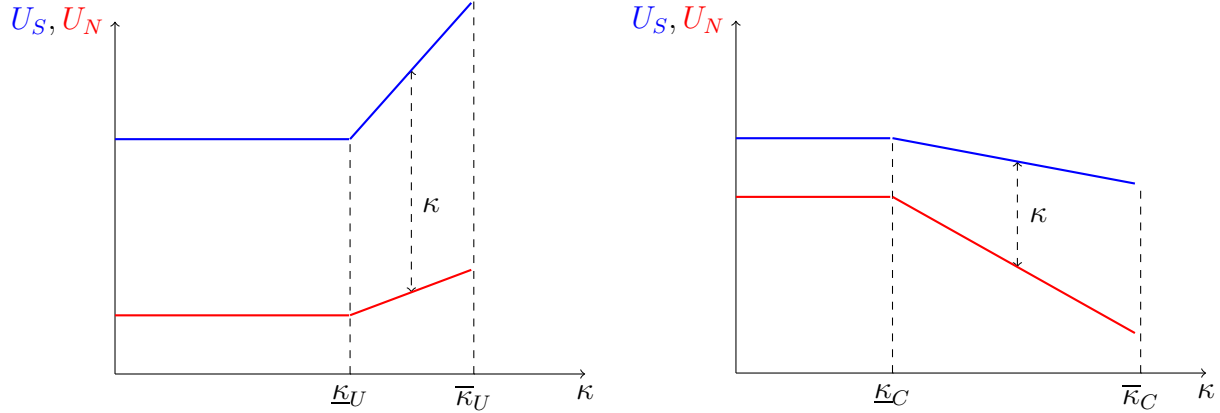

\subsection{`Hype' and Consumer Protection}

In this application, we explore the role of `hype' in product markets. We model hype as a situation where a buyer sometimes purchases a good independently of their valuation for it. We highlight that, although when exogenously biased the seller exploits the buyer, an endogenously attentive agent might benefit from succumbing to hype. This suggests that there is no normative foundation for the principle of always de-biasing consumers. More starkly, we show that, whenever the seller prefers to de-bias the buyer, the buyer actually prefers to succumb to hype more often. Consequently, our model suggests that one should be wary of sellers who enact policies to de-bias consumers: they may do so for purely to maximize profits rather than for benevolent reasons. \\ 

We model hype in product markets as follows. There is a seller and a buyer. The seller chooses a price $p \in [0, 1]$ at which to sell a product to the buyer.\footnote{This restriction to take-it-or-leave-it offers can be shown to be without loss of generality within the space of all feasible mechanisms.} The buyer's value for the product, $\theta$, is uniformly distributed on $[0, 1]$ and is their private information. We utilize the model of hype described in Section~\ref{sec:examples}. Specifically, if the consumer is inattentive when evaluating their type, they succumb to hype with probability $h \in [0, 1]$. If the consumer succumbs to hype, they perceive their type to be maximal (i.e., $\pi = 1$) and will always purchase the product regardless of their true type. Conversely, with $1-h$ probability, the consumer avoids succumbing to hype and correctly perceives their type (i.e., $\pi = \theta$). We assume $h$ increases with the amount of hype surrounding the product. Consequently, we call $h$ the \emph{degree of hype}. If attentive when evaluating their type, the consumer always avoids succumbing to hype.\footnote{This model is similar in spirit to \cite{young2022}, who focuses on whether competition increases or decreases efficiency in a setting where the consumer (deterministically) over-values the product when inattentive. In contrast to that paper, our buyer has private information and stochastically misperceives their valuation when inattentive. Moreover, we focus on the extent to which market participants \emph{want} the buyer to succumb to hype, rather than the impact of market structure, in the form of the degree of competition, on market efficiency.}

\subsubsection{Exogenous Cognitive States}\label{sec:hype_exo}

As a benchmark, we first suppose that the buyer's cognitive state is exogenously determined and known to the seller. Clearly, if the buyer is exogenously attentive, the seller sets the monopoly price of $1/2$, earning the seller revenue of $1/4$ and the buyer welfare of $1/8$. If the buyer is exogenously inattentive, the seller chooses $p$ to solve $\max \limits_{p \in [0, 1]} \left[h + (1 - h)(1 - p)\right]p, $
where $h + (1 - h)(1 - p)$ denotes the probability of trade given the degree of hype $h$ and price $p$. Revenues are maximized at $p_I(h) = \min\left\{1, (2(1 - h))^{-1} \right\}$. Thus, the seller's optimal price increases in $h$, and for $h \ge 1/2$, the seller chooses the maximal price of $1$. It is straightforward to establish that revenues increase in $h$ while $V_I(h)$ decreases in $h$. Hence, the seller prefers to generate as much hype as possible, while the buyer prefers to succumb to hype as rarely as possible. Indeed, the buyer strictly prefers never to succumb to hype (i.e., $h = 0$), which precisely implements the attention-optimal benchmark.

\subsubsection{Endogenous Cognitive States}\label{sec:hype_endo}

We now consider the situation where the buyer's cognitive state endogenously depends on the seller's pricing mechanism. 
Figure~\ref{fig:hype} illustrates the equilibrium depending on $\kappa$ and $h$. For fixed $h$, if $\kappa$ is sufficiently large (above the black dotted curve in Figure~\ref{fig:hype}), the seller charges the exogenous inattentive price and the agent is inattentive. In contrast, if $\kappa$ is sufficiently small (i.e., below the blue solid curve in Figure~\ref{fig:hype}) the seller charges the exogenous attentive price and the agent bears the attention costs. Yet, if $\kappa$ is intermediate (i.e., between the black dotted and the blue solid curve in Figure~\ref{fig:hype}), the seller has an incentive to decrease the price to ensure the buyer is inattentive and succumbs to hype. Indeed, for a given price, $p \in [0, 1]$, the agent purchases the good if and only if $\pi \ge p$. The value of attention takes the form
 $ h (\frac{1}{2} - \int \limits_p^1 x dx) = \frac{1}{2} h p^2,$
which increases both in $h$ and $p$, making the degree of hype and prices complements for creating attention incentives. Thus, high prices may incentivize attention, lowering sales and resultant revenue. As long as $\kappa$ falls above the blue solid curve, the impact on profits of a price decrease is outweighed by higher trade and, thus, keeping the buyer inattentive is optimal from the seller's perspective.\\

\begin{figure}[h]
	\centering
	\begin{subfigure}[h]{.5\textwidth}
		\begin{tikzpicture}
	% Axes
	\draw[->] (0,0) -- (5,0) node[right] {$h$};
	\draw[->] (0,0) -- (0,5) node[above] {$\kappa$};
	\draw[dashed,thick] (4,0)--(4,4) ;
	\draw[dashed,thick] (0,4)--(4,4) ;
	\node at (0,4) [left] {\tiny{$1/2$}};
	\node at (4,0) [below] {\tiny{$1$}};
	% Convex increasing then linear curve
	\draw[thick,dotted] plot [smooth,tension=1] coordinates {(0,0) (1,1) (2,3)} -- (4,4);
	\draw[-,thick] (2,-0.1)--(2,0.1) ;
	\node at (2,0) [below] {\tiny{$1/2$}};
	% Linear curve
	\draw[thick,red, dashed] (0,0) -- (4,2.5);
		\draw[thick,red,dashed] (4,2.5) -- (4,4);
	\draw[thick,blue] (0,1.75) -- (0,4); 
     \node at (0.4,2.5) [left,red] {\tiny{s}};
         \node at (0.9,2.2) [right,blue] {\tiny{b}};
	 \draw[,thick,red, ->] (0.4,2.5) -- (0.9,2.5);
	 	 \draw[,thick,blue, <-] (0.4,2.2) -- (0.9,2.2);
	 	 \node at (2.5,2.5) [left,red] {\tiny{s}};
         \node at (2.5,2.2) [left,blue] {\tiny{b}};
	 	  \draw[,thick,red, ->] (2.5,2.5) -- (3,2.5);
	 	 \draw[,thick,blue, ->] (2.5,2.2) -- (3,2.2);
	 	% \draw [thin,dashed, blue] (0,1.75) -- (4,1.75);

          \node at (3.2,1.6) [right,red] {\tiny{s}};
         \node at (2.7,1.3) [left,blue] {\tiny{b}};
	 	 \draw[,thick,red, <-] (2.7,1.6) -- (3.2,1.6);
	 	 \draw[,thick,blue, ->] (2.7,1.3) -- (3.2,1.3);
	 \node at (0,1.75) [left] {\tiny{$9/128$}};
	 	 \node at (4,0.75) [right,blue] {$h_b^*$};
	 	 	 \node at (4,2.5) [right,red] {$h_s^*$};
	% Convex increasing with steep increase at the end
		\draw[thick,blue] plot [smooth,tension=0.5] coordinates {(0,0) (2,0.9) (3.5,1.2) (4,1.25)};
    \draw[thick,blue] (4,1.25)--(4,1.75);
	\draw[thick,-] (-0.1,1.75) -- (0.1,1.75);
    	\draw[thick,-] (-0.1,1.25) -- (0.1,1.25);
        \node at (0,1.25) [left] {\tiny{$1/32$}};
        \draw[thick,-] (-0.1,3) -- (0.1,3);
        \node at (0,3) [left] {\tiny{$1/4$}};
        \draw[thick,-] (-0.1,2.5) -- (0.1,2.5);
        \node at (0,2.5) [left] {\tiny{$1/8$}};
\end{tikzpicture}%
	\end{subfigure}%
	\vspace*{-4mm}
	\caption{\small{The blue solid curve depicts the buyer optimal level of hype, while the red dashed one graphs that of the seller for a given cognitive cost, $\kappa$. Arrows indicate in which direction of $h$ the seller's revenue (red) and the buyer's utility (blue) increase. }}
	\label{fig:hype}
\end{figure} 
Figure~\ref{fig:hype} also depicts in red (dashed) the level of hype the seller prefers (\textcolor{red}{$h_s^*$}) and in blue (solid) that of the buyer (\textcolor{blue}{$h_b^*$}).
  Contrary to the exogenous benchmark, for sufficiently small $\kappa$, the buyer may prefer a strictly positive degree of hype (i.e., $h_b^* > 0$), while the seller may prefer less than maximal hype (i.e., $h_s^* < 1$). Moreover, whenever $h_b^* > 0$, it holds that $h_b^* > h_s^*$. Hence, whenever the buyer wants to be biased, they prefer to be even \emph{more} biased than the seller would prefer them to be. This has important policy implications. First, the fact that $h_b^*$ can be strictly positive suggests there is no normative underpinning for the paternalistic principle that de-biasing consumers (for example, through nudging) is always to their benefit. Second, sellers that explicitly attempt to de-bias their consumers by lowering the impact of hype (e.g., by offering a generous returns policy) may not have the buyer's best interests in mind. Our model suggests, instead, they may do so for purely to maximize profits.\\

This somewhat counter-intuitive finding stems from a novel commitment-flexibility trade-off generated by endogenous cognitive states. Typically, private information generates inefficiencies in trade. In this light, succumbing more to hype (i.e., higher $h$) defaults the buyer to purchase the good more often, thus mitigating some of these inefficiencies. Their cognitive cost, $\kappa$, then serves as the cost accrued to renege on the commitment to purchase the good. Lower commitment costs then afford flexibility (in the form of a credible threat to be attentive and not always purchase the good) if prices are not sufficiently low. An intermediate level of commitment costs balances these forces optimally, so that the buyer can share in the larger gains from trade that their commitment to buy affords. Indeed, the optimal mix of commitment and flexibility from the buyer's perspective is $h = 1$ and $\kappa = 1/32$, so that they have maximal commitment to buy and sufficient flexibility to ensure that they share equally in the gains from trade with the seller (both receive $1/4$ in this case). For precisely this reason, the seller's and the buyer's preferences regarding hype are misaligned whenever the buyer prefers a strictly positive level of hype. Specifically, a buyer that succumbs to hype more often not only increases welfare but also increases their own share of that welfare. Indeed, since the price and the degree of hype enter the value of attention as complements, the seller reacts to an increased degree of hype by decreasing the sales price (as long as this still earns higher profits than incentivizing attention) and, consequently, increasing consumer surplus. The next proposition formally summarizes these insights.

\begin{prop}\label{prop:optimal_hype}
    Comparing buyer-optimal hype, $h_b^*$, to seller-optimal hype, $h_s^*$, we have that
    \begin{enumerate}[(a)]
    \item $h_s^* > h_b^* = 0$ if $\kappa > 9/128$; and
    \item $h_b^* > h_s^* > 0$ if $\kappa < 9/128$. 
    \end{enumerate}
  Moreover, when $h_b^* > h_s^*$, the buyer's utility is increasing and the seller's profits are decreasing over $h \in (h_s^*, h_b^*)$.
\end{prop}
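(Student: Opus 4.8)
The plan is to work out both sides' preferred levels of hype explicitly by characterizing the equilibrium in the endogenous cognitive-state game, leveraging the closed-form value of attention $\nu(p) = \tfrac{1}{2}hp^2$ derived in the preamble. First I would determine, for each $(\kappa, h)$, the seller's optimal pricing decision: the seller either (i) charges the exogenous-inattentive price $p_I(h) = \min\{1, (2(1-h))^{-1}\}$ and keeps the buyer inattentive, or (ii) sets a price low enough that $\tfrac{1}{2}hp^2 < \kappa$ (so attention is not triggered) while still inducing inattention, or (iii) charges the exogenous-attentive monopoly price $1/2$ and lets the buyer bear attention costs. The binding constraint for inducing inattention at a given $h$ is $\tfrac{1}{2}hp^2 \le \kappa$, i.e. $p \le \sqrt{2\kappa/h}$; I would compare the seller's profits from the constrained inattentive price against the attentive benchmark profit of $1/4$ to pin down the seller's behavior as a function of $(\kappa,h)$.

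Next I would compute the seller's preferred hype $h_s^*$ and the buyer's preferred hype $h_b^*$ as functions of $\kappa$. For the seller, since revenues are increasing in $h$ in the purely-inattentive region but the attention constraint tightens the feasible price as $h$ grows, $h_s^*$ solves the trade-off between more default-purchasing and the need to suppress prices to avoid triggering attention; I would show $h_s^* < 1$ can occur and compute its value. For the buyer, I would write consumer surplus as a function of $h$ along the seller's best-response locus and maximize. The key threshold is $\kappa = 9/128$, which is exactly the exogenous-inattentive seller profit computed in the Mussa-Rosen application (and which equals the seller's profit benchmark here); I expect the comparison of this cutoff against the relevant profit levels to deliver the dichotomy. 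For part (a), when $\kappa > 9/128$ the attention constraint is slack enough that the buyer cannot benefit from any positive hype — succumbing only transfers surplus to the seller — so $h_b^* = 0$, while the seller still strictly prefers positive (indeed maximal-feasible) hype, giving $h_s^* > 0 = h_b^*$. For part (b), when $\kappa < 9/128$ the commitment-flexibility logic kicks in: positive hype lets the buyer commit to trade while the threat of attention caps the price, so $h_b^* > 0$; I would then verify the strict ordering $h_b^* > h_s^*$ by comparing the two maximizers directly.

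The final claim — that over $h \in (h_s^*, h_b^*)$ the buyer's utility is increasing and the seller's profit is decreasing — I would establish via the complementarity already noted in the text: because price and hype enter $\nu$ as complements, as $h$ rises past $h_s^*$ the seller is forced to cut the price to keep the buyer inattentive, and on this locus I would show $dp/dh < 0$ dominates, so revenue falls while the buyer, as residual claimant on the expanded gains from trade, sees surplus rise. Concretely I would parametrize along the binding constraint $p = \sqrt{2\kappa/h}$, substitute into the seller's revenue and the buyer's surplus, and sign the derivatives in $h$ on the relevant interval.

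The main obstacle I anticipate is correctly stitching together the piecewise seller best-response across the three regimes and verifying that the buyer's objective, evaluated along that best-response, is genuinely maximized at the claimed $h_b^*$ rather than at an interior point of a different regime. In particular, the buyer's surplus is not globally smooth in $h$ because the seller's pricing switches regime; I would need to check the one-sided derivatives at the regime boundaries and confirm the global maximizer. The clean cutoff at $9/128$ and the explicit quadratic form of $\nu$ should make each regime tractable, but ensuring the envelope/monotonicity arguments hold across the kinks is where the care is required.
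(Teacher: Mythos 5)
Your proposal follows essentially the same route as the paper's proof: first derive the seller's best-response price in the three regimes (unconstrained inattentive price $p_I(h)$, constrained price $\sqrt{2\kappa/h}$ that just keeps the buyer inattentive, and the attentive monopoly price $1/2$), then maximize each party's payoff over $h$ along this best response, and sign derivatives along the binding locus $p=\sqrt{2\kappa/h}$ for the final monotonicity claim. This is exactly the paper's two-lemma structure, and the regime-boundary care you flag is precisely what the paper's case analysis handles.

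Two slips to correct at execution. First, your anticipated source of the $9/128$ cutoff is wrong: it is not "the seller's profit benchmark here" (the seller's attentive benchmark profit in this application is $1/4$), and its equality with the Mussa--Rosen inattentive profit is pure numerical coincidence. The dichotomy comes from a \emph{buyer-utility} comparison: the payoff $1/8$ at $h=0$ (seller charges $1/2$) against the payoff $1/2-\sqrt{2\kappa}$ at maximal hype under the constrained price $\sqrt{2\kappa}$, which cross exactly at $\kappa=9/128$. Your primary plan of maximizing consumer surplus along the seller's best response delivers this automatically, but literally "comparing this cutoff against the relevant profit levels" would fail. Second, in part (a) the seller's optimum is not maximal-feasible hype: along the constrained locus, revenue has derivative proportional to $p^*(h)-1/2$, so $h_s^*=\min\{1,8\kappa\}$, which is interior whenever $\kappa<1/8$; all that matters for (a) is $h_s^*>0$.
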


\section{Conclusion}

We have provided a portable framework of endogenous perception of private information within the context of single-agent mechanism design problems. Depending on cognitive effort at the evaluation stage, the agent is either attentive (i.e., perceives their type correctly) or inattentive (i.e., perceives their type with potential bias). We characterized a sufficient statistic for the agent's attention incentives - the value of attention - and explored how this value varies with both the mechanism and the process via which inattentive perception is generated. In particular, we highlighted a novel cost of screening: coarse mechanisms that screen only on an extensive margin maximize attention incentives. Moreover, we used the value of attention to define a notion of accuracy of perception, which allows us to order some perceptions that are seemingly not comparable due to arising from distinct processes. Our framework flexibly incorporates many notions of misperception of private information that one may deem relevant to investigate. In applications, we showcased the impact of allowing for both biased and endogenous perception for the design of optimal mechanisms. In particular, optimal mechanisms are not just shaped by potential biases but also shape the extent to which these biases are present.\\

While this work serves as a useful benchmark for endogenizing the (mis)perception of private information, there are numerous avenues for future research that warrant exploration. First, for the purpose of tractability, we have made the following simplifying assumptions: (i) quasi-linear utility that is, in addition, bilinear in allocation and types, and (ii) binary cognitive states with an exogenous inattentive PGP. It would be fruitful to study whether results generalize or how predictions change with more general specifications (i.e., more general direct utility functions, or richer cognitive hierarchies where PGPs may themselves be endogenous). Second, we have mostly focused on how the design of mechanisms impact on cognitive state determination for a given (known) inattentive PGP. One could, however, also envision using our results in Section~\ref{sec:VOS} regarding how perception formation responds to incentives in order to identify the nature or shape of individuals' biases. While a rigorous treatment of identification of bias falls outside the scope of this paper, we see it as an important avenue for future research. Finally, it would be natural to expand this idea to a multi-agent framework. Such an extension introduces novel complications, as the processes via which the perceptions of \emph{all} agents are generated are relevant to any individual agent. Consequently, a  concept of \emph{cognitive equilibrium} would need to be introduced to predict behavior in such contexts.

\bibliographystyle{jpe}
\bibliography{literature}

\newpage

\appendix 
 
\section{Appendix}

\textbf{Proof of Proposition~\ref{prop:gen Z}}

\begin{proof}
	Using definitions, it is straightforward to observe that the value of attention can be written as
	\begin{eqnarray}\label{VOS0}
		\nu(\mathcal{M})=   \int_0^1  U(\theta|\theta) dF(\theta)- \int_0^1 \int_0^1 U(\pi|\theta)    d \rho(\pi|\theta) dF(\theta).
	\end{eqnarray} 
	Note that that $U(\theta|\theta)=\int_0^{\theta} q(s)ds+U(0|0)$ (by incentive compatibility). Much similar,
	$ U(\pi|\theta)= U(\pi|\pi)-\pi q(\pi)+\theta q(\pi)= \int^{\pi}_0 q(s)ds + U(0|0) +(\theta-\pi)q(\pi)$. Substituting into \eqref{VOS0}, using the definition of $F_I$, and applying partial integration, this term becomes
	$$  \nu(\mathcal{M})=   \int_0^1  q(s) (1-F(s))ds - \int_0^1 q(s) (1-F_I(s)) ds - \int_0^1 \int_0^1 q(\pi)(\theta-\pi) d\rho(\pi|\theta)dF(\theta)$$
	$$ = \int_0^1 q(s)(F_I(s)-F(s)) ds- \int_0^1 \int_0^1 q(\pi)(\theta-\pi) d\rho(\pi|\theta)dF(\theta).$$
	
	Finally, define $\rho(x|\pi)\equiv Pr(\theta\leq x|\pi)$ and $\mu(\theta,\pi)\equiv \int_0^1 \int_0^1 d\rho(\pi|\theta)dF(\theta)$. Note that, by Bayes' rule, $\mu(\theta,\pi)\equiv \int_0^1 \int_0^1 d\rho(\theta|\pi)dF_I(\pi) $. Then,   
	\begin{align*}
		\int_0^1 \int_0^1 q(\pi)\theta d\rho(\pi|\theta)dF(\theta) &=
		\int_{(\theta,\pi) \in [0,1]^2 } q(\pi)\theta d\mu(\theta,\pi)  \\ &= \int_0^1 \int_0^1 q(\pi) \theta d\rho(\theta|\pi)dF_I(\pi)\\
		& = \int_0^1 q(\pi)\int_0^1  \theta d\rho(\theta|\pi) dF_I(\pi) \\
		& = \int_0^1 q(\pi)  e_I(\pi)  dF_I(\pi),
	\end{align*}
	where the first equality follows from applying Fubinis theorem, and the second one from Bayes' rule and Fubini's theorem. \\

	Next, we show that $\nu(\mathcal{M})=0$ for any mechanism if and only if $F=F_I$ and $\rho$ is unbiased. Note first that it follows trivially from representation given in \eqref{eqn:vos} that the value of attention is zero if $F_I=F$ and $\rho$ is unbiased. For the other direction, suppose that $F \neq F_I$ or the PGP $\rho$ is biased. We show that there exist allocation rule $q$ so that $\nu(q)>0$. Recall that $\nu(q )= \int_0^1 \Big( U(\theta|\theta)-\int_0^1 U(\pi|\theta) \Big) d\rho(\pi|\theta)dF(\theta)$. Take $q(x)=x$. Incentive compatibility and a bit of algebra establish that $  U(\theta|\theta)-\int_0^1 U(\pi|\theta)d\rho(\pi|\theta)= \int_0^1 q(\theta)(\theta-\pi)-\int_{\pi}^{\theta} q(s)ds d\rho(\pi|\theta) =\int_0^1 \theta^2/2-\pi^2/2- \pi(\theta-\pi) d\rho(\pi|\theta) = \int_0^1 (\theta -\pi)^2/2   d\rho(\pi|\theta)$. Thus, the value of attention becomes
	$$\nu(q )= \int_{(\theta,\pi) \in [0,1]^2} (\theta-\pi)^2/2 \quad d\mu(\theta,\pi) = \int_{(\theta,\pi):(\theta \neq \pi)} (\theta-\pi)^2/2 \quad d\mu(\theta,\pi),$$
	where $\mu(\theta,\pi)\equiv \int_0^1 \int_0^1 d\rho(\pi|\theta)dF(\theta)=   \int_0^1 \int_0^1 d\rho(\pi|\theta)dF_I(\pi)$, by Bayes' rule. Whenever $F\neq F_I$ or the PGP is biased, the measure (under $\mu$) of type-perceptions for which $\theta \neq \pi$ is strictly positive, so that the last integral is strictly positive.\\

	Finally, fix a non-constant allocation rule $q$. We show that there exists a PGP $\rho$ so that $\nu(q )>0$. Consider $d\rho(0|\theta)=1/2=d\rho(1|\theta)$, i.e., the perception maps each type $\theta$ to 0 or 1. In this case,
	$U(\theta|\theta)-\int_0^1 U(\pi|\theta)d\rho(\pi|\theta) $
	$=1/2 \Big( \int_0^{\theta} q(s)-q(0) ds + \int_{\theta}^1 1-q(s) ds \Big)$. Note that each element in the last bracket is non-negative as $q$ is non-decreasing. Moreover, if $q$ is non-constant everywhere, at least one of the integrals is strictly positive. Thus, for any non-constant $q$, we have that $ \nu(q )= \int_0^1   1/2 \Big( \int_0^{\theta} q(s)-q(0) ds + \int_{\theta}^1 1-q(s) ds \Big)   dF(\theta) >0$.

\end{proof}

  Before proving Proposition~\ref{prop:coarsescreening}, we describe some technical details that aid in its proof. To this end, we first formally define threshold mechanisms.
\begin{appxdef}\label{defn:threshold}
    Allocation rule $q:[0, 1]\to [0, 1]$ is a \textbf{threshold allocation rule} 
    if there exists $\pi^* \in [0, 1]$ such that $q(\pi) =1$ if $\pi > \pi^*$, $q(\pi) = 0$ if $\pi < \pi^*$, and $q(\pi^*) \in \{0, 1\}$.
\end{appxdef}

Let $\tau$ denote the set of threshold allocation rules. Recall that the set of feasible allocation rules in our problem, $Q$, consists of any non-decreasing $q:[0, 1] \to [0, 1]$. As such, the set of threshold allocation rules are those non-decreasing functions that map every perception to the extreme allocations 0 or 1 and, thus, are the set of extreme points of $Q$ (see \cite{Kleiner2021} or \cite{Yang2023}).\footnote{For a set $A$, $x \in A$ is an extreme point of $A$ if $x = \alpha y + (1 - \alpha)z$ for $y, z \in A$ and $\alpha \in [0, 1]$ imply that either $x = y$ or $x = z$.} Let $\tau^* \subset \tau$ denote the set of threshold allocation rules that maximize the value of attention, and note that $\tau^*$ is non-empty.  
Moreover, let $Q^*$ denote the set of all allocation rules that maximize the value of attention (so that $\tau^* \subseteq Q^*$). The following proposition characterizes the set $Q^*$.

\begin{appxprop}\label{prop:max_characterization}
   It holds that $q \in Q^*$ if and only if there exists a probability measure $\lambda$ supported on $\tau^*$ such that
            $$
        q = \int \limits_{\tau^*} q^\prime d\lambda(q^\prime). 
        $$
\end{appxprop}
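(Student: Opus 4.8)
The plan is to exploit the fact, established in Proposition~\ref{prop:gen Z}, that the value of attention is \emph{linear} in $q$ and, in particular, can be written as integration against a fixed finite signed measure. Concretely, set $d\gamma(\pi) \equiv [F_I(\pi)-F(\pi)]\,d\pi + [\pi-e_I(\pi)]\,dF_I(\pi)$, so that the representation of $\nu$ in Proposition~\ref{prop:gen Z} reads $\nu(q)=\int_0^1 q(\pi)\,d\gamma(\pi)$. Here $\gamma$ has finite total variation, since the first density is bounded by $1$ against Lebesgue measure and the second integrand is bounded by $1$ against the probability measure $dF_I$. This reduction is what makes the subsequent interchange-of-integration arguments elementary and lets me avoid any appeal to continuity of $\nu$ in a weak topology.

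The engine of the proof is the horizontal (layer-cake) decomposition of a monotone function into threshold rules. Fixing a right-continuous representative of $q\in Q$, let $\phi_q(u)\equiv\inf\{x:q(x)\ge u\}$ be its generalized inverse; then $\phi_q(u)\le\pi$ if and only if $u\le q(\pi)$, so $q(\pi)=\int_0^1 \mathbf 1\{\phi_q(u)\le\pi\}\,du$ for every $\pi$. Writing $q_s$ for the threshold rule with cutoff $s$, each $\pi\mapsto\mathbf 1\{\phi_q(u)\le\pi\}$ equals $q_{\phi_q(u)}\in\tau$ (Definition~\ref{defn:threshold}). Identifying $\tau$ with $[0,1]$ through the cutoff, I would take $\lambda_q$ to be the pushforward of Lebesgue measure on $[0,1]$ under $\phi_q$, a Borel probability measure on $\tau$ with $q=\int_\tau q'\,d\lambda_q(q')$ pointwise. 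The key identity then follows from Fubini's theorem (the integrand is bounded by $1$ and both $|\gamma|$ and Lebesgue measure are finite): $\nu(q)=\int_0^1\nu(q_{\phi_q(u)})\,du=\int_\tau \nu(q')\,d\lambda_q(q')$.

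With this averaging identity in hand both directions are short. Because $\nu(q')\le\nu^*\equiv\max_{q'\in\tau}\nu(q')$ for every threshold $q'$, and $\nu^*$ is attained since $\tau^*\neq\emptyset$, the identity applied to an arbitrary $q\in Q$ gives $\nu(q)\le\nu^*$; hence $\max_{Q}\nu=\nu^*$ is attained on $\tau$. For the forward direction, if $q\in Q^*$ then $\nu(q)=\nu^*$, so $\int_\tau[\nu^*-\nu(q')]\,d\lambda_q(q')=0$ with a non-negative integrand, forcing $\lambda_q(\tau\setminus\tau^*)=0$; restricting $\lambda_q$ to $\tau^*$ yields a probability measure giving full mass to $\tau^*$ that represents $q$. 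For the converse, if $q=\int_{\tau^*}q'\,d\lambda(q')$ then $q$ is an average of non-decreasing $[0,1]$-valued functions and hence lies in $Q$, while linearity (again via Fubini) gives $\nu(q)=\int_{\tau^*}\nu(q')\,d\lambda(q')=\nu^*=\max_Q\nu$, so $q\in Q^*$.

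The main obstacle is purely technical bookkeeping rather than conceptual: I would need to (i) verify exactness of the horizontal decomposition at every $\pi$ by committing to right-continuous representatives and checking that the at-most-countably-many jump points of $q$ do not affect $\nu(q)=\int q\,d\gamma$ (or argue the conclusion is invariant to the convention at those points), (ii) confirm measurability of $\tau^*$ as a Borel subset of $\tau\cong[0,1]$ so that ``supported on $\tau^*$'' (i.e., $\lambda(\tau^*)=1$) is well posed, and (iii) justify the two Fubini interchanges, which is immediate given the finiteness of $\gamma$ and boundedness of the integrands. Everything else is a direct consequence of the linearity of $\nu$ together with the characterization of the extreme points of $Q$ as the threshold rules.
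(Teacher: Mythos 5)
Your route is genuinely different from the paper's. The paper argues abstractly: it invokes convexity and compactness of $Q$ together with Bauer's maximum principle (via Proposition 1 of \cite{Kleiner2021}) to get existence of a threshold maximizer, identifies the extreme points of $Q$ with $\tau$ citing \cite{Yang2023}, shows that the extreme points of $Q^*$ are exactly $\tau^*$, and then applies the Choquet-type integral representation of \cite{Kleiner2021} to $Q^*$. You replace that topological machinery with one constructive device: the layer-cake decomposition $q=\int_\tau q'\,d\lambda_q(q')$, with $\lambda_q$ the pushforward of Lebesgue measure under the level-set map, combined with Fubini against the finite signed measure $\gamma$ furnished by Proposition~\ref{prop:gen Z}. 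This buys self-containedness (no compactness, no continuity of $\nu$ in any topology, no appeal to an integral-representation theorem) and an explicit formula for the representing measure, and both directions of the equivalence then fall out of a single averaging identity; the paper's proof is shorter on the page only because it outsources the hard steps to cited results.

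Two repairs are needed, both available within your own machinery. First, your clause ``$\nu^*$ is attained since $\tau^*\neq\emptyset$'' is circular: non-emptiness of $\tau^*$ \emph{is} the attainment claim, and the paper derives it from Bauer's principle, which you have discarded. The fix is latent in your forward direction: run it with $\sup_\tau\nu$ in place of $\max_\tau\nu$. If $q\in Q^*$, the averaging identity gives $\nu(q)=\sup_\tau\nu$, hence $\int_\tau[\sup_\tau\nu-\nu(q')]\,d\lambda_q(q')=0$ with non-negative integrand, so $\nu(q')=\sup_\tau\nu$ for $\lambda_q$-almost every $q'$; since $\lambda_q$ is a probability measure, at least one such $q'$ exists, which is attainment (alternatively, parametrize $\tau$ by cutoffs and pass to a limit of a maximizing sequence using dominated convergence). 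Second, your proposed handling of jump points fails as stated: $F_I$ can have atoms---the paper's hype PGP puts mass $h$ at $\pi=1$---so $\nu$ is \emph{not} invariant to the value of $q$ at a jump, and committing to right-continuous representatives genuinely changes $\nu(q)$ and hence membership in $Q^*$ (the paper's own remark that $\underline{\Pi}$ and $\overline{\Pi}$ may or may not be closed reflects exactly this sensitivity). The correct fix is to drop the generalized inverse and decompose exactly: $q(\pi)=\int_0^1\mathbf{1}\{q(\pi)\ge u\}\,du$ holds pointwise for every monotone $q$, and for each $u$ the upper level set $\{\pi:q(\pi)\ge u\}$ is an interval ending at $1$, closed or open at its left endpoint, so its indicator is a threshold rule in the sense of Definition~\ref{defn:threshold}, which allows $q(\pi^*)\in\{0,1\}$. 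With this exact two-sided decomposition no choice of representative is needed, and the remainder of your argument goes through verbatim.
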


Proposition~\ref{prop:max_characterization} implies that the set of threshold maximizers are sufficient for identifying all maximizers of $\nu$. That is, any $q \in Q^*$ is, essentially, a ``convex-combination" of those threshold mechanisms that maximize the value of sophistication.\footnote{Formally, this ``convex-combination" is a Bochner integral. See Footnote~14 of \cite{Kleiner2021} and the reference therein.}\\

\textbf{Proof of Proposition~\ref{prop:max_characterization}}

\begin{proof}
	Note that $\nu(\mathcal{M})$ is a linear functional in $q$, and consider the problem of maximizing it over $Q$, the set of all monotonic $q:[0,1]\rightarrow [0,1]$. It follows from Proposition 1, part 1, in \cite{Kleiner2021} that this set is convex and compact in the norm topology, and that Bauer's Maximum principle applies and implies that the functional attains a maximum on one of $Q$'s extreme points.   
	Moreover, \cite{Yang2023} implies that the set of $Q$'s extreme points are precisely the threshold mechanism introduced in Definition~\ref{defn:threshold}, the set $\tau$. 
	This establishes that the set $\tau^*$ is non-empty. \\
	
	We next show that $\tau^*$ is the set of extreme points of the set $Q^*$ (the set of all maximizers of our linear functional).
	Let $B(Q^*)$ denote the set of extreme points in $Q^*$.
	Take any $q^*\in B(Q^*)$ and suppose, by way of contradiction, that $q^* \not \in \tau^*$.
	Note that this implies that $q^* \not \in \tau$, that is, $q^*$ is not an extreme point of $Q$. This, together with the hypothesis that $q^*$ is an extreme point of $B(Q^*)$ implies that there is $q_1$,  $q_2$ (with $q_1\neq q_2$) and $  q^*  \in Q \setminus B(Q^*)$ and $\alpha \in (0,1)$ such that $q^*=\alpha q_1 + (1-\alpha)q_2$.
	Then consider the value of our linear functional: $\nu(q^*)=\nu(\alpha q_1+(1-\alpha )q_2)= \alpha \nu(q_1)+(1-\alpha) \nu (q_2)$. Note that this value is necessarily below its optimized solution value as $q_1$ and $q_2$ are not in $Q^*$, contradicting the hypothesis that $q^* \in Q^*$.\\
	
	Finally, having established that the set of extreme points of $Q^*$ is $\tau^*$, the integral representation is an immediate consequence of Proposition 1, part 2, of \cite{Kleiner2021}.

\end{proof}

 \textbf{Proof of Proposition~\ref{prop:coarsescreening}} 
\begin{proof}
    By Proposition~\ref{prop:gen Z}, part (a) there is $\nu(q)>0$ for some $q \in Q$. Thus, $\tau^* \ne \tau$ as otherwise Proposition~\ref{prop:max_characterization} would imply all $q \in Q$ are maximizers, requiring $\nu(q) = 0$. This also implies that the sets $\overline{\Pi}\equiv \lbrace \pi: q(\pi)=1 \quad \forall q \in \tau^* \rbrace$ and $\underline{\Pi}\equiv \lbrace x: q(\pi)=0 \quad \forall q \in \tau^* \rbrace$ are non-empty. Note that both $\overline{\Pi}$ and $\underline{\Pi}$ are, by definition, disjoint and both are intervals. To see this for $\overline{\Pi}$ (the proof is similar for $\underline{\Pi})$, take any $\pi < \pi^\prime$ with $\pi \in \overline{\Pi}$. Then, $q(\pi) = 1$ for all $q \in \tau^*$ and, since $q$ is non-decreasing, $q(\pi^\prime) = 1$ for all $q \in \tau$ and thus $\pi^{\prime} \in \overline{\Pi}$. Since by Proposition~\ref{prop:max_characterization}, $q \in Q^*$ if and only if there exists probability measure $\lambda$ such that $q = \int \limits_{\tau^*} q^\prime d \lambda (q^\prime)$, it follows immediately that $q(\pi) = 0$ for any $q \in Q^*$ if and only if $\pi \in \underline{\Pi}$ and $q(\pi) = 1$ for any $q \in Q^*$ if and only if $\pi \in \overline{\Pi}$.
     
\end{proof}
\textbf{Proof of Proposition~\ref{prop:BlackwellNew}}

\begin{proof}

Before proving the equivalence of the statements, we first introduce Lemma~\ref{lem:l1}, establishing that a smaller inattentive utility is equivalent to an order on the statistic $S(\cdot;\rho) $. By a small abuse of notation, let $\nu(q;\rho)$ and $V_I(q;\rho)$ be an agent's (with PGP $\rho$) value of attention and inattentive utility from mechanism with allocation $q \in Q$. To utilize on space, throughout the proof we make use of the following observation:

\begin{observation}\label{o1}
For any $q \in Q$, it holds that $V_I(q;\rho)-V_I(q;\rho')= \nu(  q; \rho')-\nu(  q;\rho')$.
\end{observation}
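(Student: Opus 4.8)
The plan is to obtain the identity directly from the definitions, exploiting the fact that the value of attention decomposes as the difference between attentive and inattentive utility, with the attentive component being independent of the PGP. (I read the right-hand side as $\nu(q;\rho')-\nu(q;\rho)$, correcting the evident typo in which $\rho$ is written twice; as written the equation would assert that the inattentive utilities coincide, which is false whenever $\rho\neq\rho'$.)

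First I would recall that, for any allocation rule $q \in Q$ and any PGP $\rho$, the value of attention is by definition
$$
\nu(q;\rho) = V_A(q) - V_I(q;\rho),
$$
where $V_A(q) = \int_0^1 U(\theta \mid \theta)\, dF(\theta)$ and $V_I(q;\rho) = \int_0^1 \int_0^1 U(\pi \mid \theta)\, d\rho(\pi\mid\theta)\, dF(\theta)$. This is immediate, since $\nu(q;\rho)$ is precisely the integral of the difference $U(\theta\mid\theta) - U(\pi\mid\theta)$ against $d\rho(\pi\mid\theta)\,dF(\theta)$, and linearity of the integral splits this into the two terms above.

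The key point is then that $V_A(q)$ depends only on the allocation rule (via the incentive-compatible transfer) and \emph{not} on the PGP: an attentive agent perceives $\pi = \theta$ regardless of $\rho$, so $\rho$ never enters $V_A$. Applying the displayed decomposition to both $\rho$ and $\rho'$ yields $V_I(q;\rho) = V_A(q) - \nu(q;\rho)$ and $V_I(q;\rho') = V_A(q) - \nu(q;\rho')$. Subtracting these two identities cancels the common $V_A(q)$ term and gives
$$
V_I(q;\rho) - V_I(q;\rho') = \nu(q;\rho') - \nu(q;\rho),
$$
which is the claimed observation.

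There is no substantive obstacle here: the observation is purely a rearrangement of the definitions, and the only care needed is to verify that $V_A$ contains no reference to $\rho$, which holds because attention forces $\pi=\theta$. Its usefulness lies in translating comparisons of inattentive utility (the object that governs welfare $V$, since the attentive branch is common across PGPs) into comparisons of the value of attention, which in turn admits the tractable linear representation established in Proposition~\ref{prop:gen Z}. This is exactly the bridge needed to reduce the welfare-based accuracy order to the statistic $S(\cdot;\rho)$.
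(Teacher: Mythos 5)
Your proof is correct and takes essentially the same approach as the paper: the paper's one-line proof (``add and subtract $\int_0^1 q(\pi)(1-F(\pi))\,d\pi$ to the differences in inattentive utilities'') is exactly your decomposition $\nu(q;\rho)=V_A(q)-V_I(q;\rho)$ with the common, $\rho$-independent attentive term cancelling. You also correctly identified and repaired the typo in the statement, whose right-hand side should read $\nu(q;\rho')-\nu(q;\rho)$.
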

\begin{proof}
    Add and subtract $\int_0^1 q(\pi)(1-F(\pi))d\pi $ to the differences in inattentive utilities.
\end{proof}
\begin{appxlem}{\label{lem:l1}}
Fix a prior $F$ and two PGPs, $\rho$ and $\rho'$. Then, $V_I(q,\rho) \geq V_I(q,\rho') $ for all mechanisms $ q \in Q$ if and only if $S(x,\rho) \leq S(x,\rho') $ for all $ x \in [0,1]$.
\end{appxlem}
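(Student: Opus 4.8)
The plan is to reduce the claim about inattentive utilities to a statement about the value of attention, and then to recognize $S(\cdot;\rho)$ as (essentially) the Stieltjes antiderivative appearing in the representation of $\nu$ in Proposition~\ref{prop:gen Z}. By Observation~\ref{o1}, since $V_A$ does not depend on the PGP, $V_I(q;\rho)-V_I(q;\rho')=\nu(q;\rho')-\nu(q;\rho)$ for every $q\in Q$; hence ``$V_I(q;\rho)\ge V_I(q;\rho')$ for all $q$'' is equivalent to ``$\nu(q;\rho)\le\nu(q;\rho')$ for all $q$''. It therefore suffices to compare the two values of attention.

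Next I would isolate the $\rho$-dependence of $\nu$. Writing the representation of Proposition~\ref{prop:gen Z} as $\nu(q;\rho)=G(q;\rho)-\int_0^1 q(\pi)F(\pi)\,d\pi$, where $G(q;\rho)\equiv\int_0^1 q(\pi)F_I(\pi)\,d\pi+\int_0^1 q(\pi)[\pi-e_I(\pi)]\,dF_I(\pi)$, the final term depends only on $F$ and $q$ and cancels in any difference taken across PGPs. The key identity is that $F_I(\pi)\,d\pi+[\pi-e_I(\pi)]\,dF_I(\pi)=-\,dS(\pi;\rho)$ as signed measures on $[0,1]$, which is immediate from the definition of $S$ as an integral with $x$ as its lower limit. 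Consequently $G(q;\rho)=-\int_0^1 q\,dS(\cdot;\rho)$.

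I would then integrate by parts. Here $S(1;\rho)=0$, and $S(0;\rho)=1-\mu$ is independent of $\rho$, where $\mu$ is the mean of $F$: this follows from $\int_0^1 F_I\,d\pi=1-\int_0^1 \pi\,dF_I(\pi)$ together with the law of iterated expectations $\int_0^1 e_I(\pi)\,dF_I(\pi)=\mu$. Thus the boundary terms are PGP-independent and cancel in the difference, yielding $\nu(q;\rho)-\nu(q;\rho')=\int_{[0,1]}[S(\pi;\rho)-S(\pi;\rho')]\,dq(\pi)$. Combining with Observation~\ref{o1} gives $V_I(q;\rho)-V_I(q;\rho')=\int_{[0,1]}[S(\pi;\rho')-S(\pi;\rho)]\,dq(\pi)$, the central formula of the argument.

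Finally, since $q$ is non-decreasing, $dq$ is a non-negative measure, and as $q$ ranges over $Q$ these measures range over all non-negative measures on $[0,1]$. The ``if'' direction is then immediate: $S(\cdot;\rho)\le S(\cdot;\rho')$ makes the integrand non-negative for every $q$. For ``only if'', I would feed in the threshold rules of Definition~\ref{defn:threshold}, whose $dq$ is a point mass at the threshold $x$, so that $V_I(q;\rho)\ge V_I(q;\rho')$ forces $S(x;\rho')-S(x;\rho)\ge 0$ at each $x\in[0,1]$. I expect the only real obstacle to be technical rather than conceptual: because $F_I$ may have atoms, $S(\cdot;\rho)$ need not be continuous, so one must treat the Stieltjes integration by parts and the evaluation of the point-mass integrals at jump points with care (fixing the convention at the threshold and using the right-continuity of $S$), but these do not affect the pointwise conclusion.
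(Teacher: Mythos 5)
Your proof is correct, and its skeleton (reduce to comparing values of attention via Observation~\ref{o1}, then test with threshold rules for the ``only if'' direction) matches the paper; but your ``if'' direction takes a genuinely different route. The paper never integrates by parts: it first establishes $\nu(q;\rho)-\nu(q;\rho')\le 0$ on every threshold rule in $\tau$ --- handling closed thresholds at atoms of $F_I$ by a left-limit/contradiction argument --- and then invokes Proposition~\ref{prop:max_characterization} (Bauer's maximum principle plus the fact that the extreme points of $Q$ are exactly the threshold rules) to conclude that the linear functional $\Delta_{\rho,\rho'}\nu$ attains its maximum on $\tau$ and is therefore non-positive on all of $Q$. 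You instead derive the explicit representation $V_I(q;\rho)-V_I(q;\rho')=\int_{[0,1]}\left[S(\pi;\rho')-S(\pi;\rho)\right]dq(\pi)$ by Stieltjes integration by parts, after correctly checking that the boundary terms $S(1;\cdot)=0$ and $S(0;\cdot)=1-\mu$ are PGP-independent. This buys you a more elementary, self-contained argument (no appeal to the extreme-point machinery of \cite{Kleiner2021} and \cite{Yang2023}) and a strictly more informative conclusion: your formula quantifies the inattentive-welfare gap rather than merely signing it, and it is in effect a hands-on version of the paper's extreme-point decomposition, since $dq$ is precisely the mixing measure writing $q$ as a mixture of thresholds. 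What it costs you is exactly the technicality you flag: when an atom of $F_I$ (a jump of $S$) coincides with a jump of $q$, the integration-by-parts identity acquires cross terms and the endpoint convention in the definition of $S$ matters; your resolution --- working at continuity points and passing to one-sided limits, which inherit the inequality $S(\cdot;\rho)\le S(\cdot;\rho')$ --- is the same device the paper uses for its closed-threshold case, so this is a presentational burden rather than a gap. (Two minor remarks: you silently corrected the typo in Observation~\ref{o1}, whose right-hand side should read $\nu(q;\rho')-\nu(q;\rho)$, which is the version you use; and your aside that the measures $dq$ range over \emph{all} non-negative measures is slightly off --- they have total mass at most one --- but nothing in your argument uses that claim.)
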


\begin{proof}

Suppose first that $V_I(q,\rho) \geq V_I(q,\rho') $ for all $ \mathcal{M}$ with associated allocation rule $q \in Q$. By Observation~\ref{o1} the hypothesis implies that $\nu(q;\rho)\leq \nu(q;\rho')$ $\forall q \in Q$, and, in particular for any threshold mechanisms with allocation rule the form of $q(\pi)=1$ if $\pi>x$ and zero else. For any such $q$, using the representation given in Proposition~\ref{prop:gen Z}, we see that $\nu(q;\rho) -\nu(q;\rho')=S(x;\rho')-S(x;\rho)$. Since the (first) difference is positive for any threshold allocation rule (i.e., $x \in [0,1]$), we have that $S(x;\rho')\geq S(x;\rho) $ for any $x \in [0,1]$.

Suppose next that $S(x;\rho)\leq S(x;\rho')$ for all $x \in [0,1]$. As $S(x;\rho)-S(x;\rho')= \nu(\tilde q; \rho)-\nu(\tilde q;\rho')$ for a threshold mechanism with allocation rule $\tilde q(\pi)=1$ if $\pi>x$ and zero else, assume without loss of generality that there is $\check x \in [0,1]$ such that $\nu(\check q;\rho)> \nu(\check q;\rho')$ with $\check q(\pi)=1$ if $\pi \geq \check x$ and zero else. By way of contradiction, take $\epsilon>0$, and observe that $\lim \limits_{\epsilon \to 0} (S(\check x-\epsilon;\rho)-S(\check x - \epsilon;\rho'))\rightarrow \nu(\check  q;\rho)-\nu(\check q; \rho'))$. Since $S(\check x-\epsilon;\rho)-S(\check x - \epsilon;\rho')\leq 0$ for all $\epsilon >0$, the inequality is preserved in the limit, and we arrive at a contradiction.

Finally, we generalize the conclusion to any $q \in Q$: Define linear functional $\Delta_{\rho,\rho'} \nu (q) \equiv \nu(q;\rho)-\nu(q;\rho') $ for $q \in Q$. It follows from Proposition~\ref{prop:max_characterization} that this linear functional has a maximizer in the set of extreme points of its domain (which is $\tau$). That is, there is a maximizer, say $q_{\max}$, such that, $q_{\max} \in \tau$. Moreover, $\Delta_{\rho,\rho'} \nu(q_{\max}) \leq 0$ by hypothesis. Thus, by Observation~\ref{o1}, for any $q \in Q$ we have $V_I(q,\rho')-V_I(q,\rho)=\Delta_{\rho,\rho'} \nu(q) \leq \Delta_{\rho,\rho'} \nu(q_{\max}) \leq 0 $.
    
\end{proof}

We are now in a position to prove the statement of the proposition.

$1. \Rightarrow 2.$: Suppose $\rho$ is more accurate than $\rho'$. By definition, the agent's welfare 
\begin{equation}\label{eq:AWA}
    \max \Big \lbrace \int_0^1 U(\theta|\theta)dF(\theta)-\kappa, \int_0^1 \int_0^1 U(\pi|\theta)d\rho(\pi|\theta)dF(\theta) \Big \rbrace
\end{equation}  is larger under PGP $\rho$ than under PGP $\rho'$ for any $q \in Q$ and $\kappa$. Considering $\kappa=\infty$, it is straightforward to observe that $V_I(q;\rho) \geq V_I(q;\rho')$ for any $q \in Q$. Invoking Lemma~\ref{lem:l1}, the result follows.

$2. \Rightarrow 1.$: Suppose $S(x;\rho) \leq S(x,\rho')$ for all $x \in [0,1]$. By Lemma~\ref{lem:l1} we have that $V_I(q;\rho) \leq V_I(q;\rho')$ for all $q \in Q$.
	Note that the first element of the agent's welfare (see \eqref{eq:AWA} above) is independent of the agent's PGP while the second element is larger under PGP $\rho$ since $V_I(q;\rho) \leq V_I(q;\rho')$ for all $q \in Q$ by hypothesis. Thus, the agent's welfare is larger under PGP $\rho$ than under $\rho'$ meaning that $\rho$ is more accurate than $\rho'$.

\end{proof}

\textbf{Proof of Proposition~\ref{prop:welfare_exo}}

\begin{proof}
	The first statement of the proposition follows trivially from comparing the selling the firm to the agent mechanism to the upper bounds on respective welfare.\\
	
	For the second statement, suppose the cost function is $C(x)=x^2/2$, and consider an agent with biased PGP operating under inattention.

	Under selling the firm to the agent, we have that $q(\pi)=\pi$, so that expected welfare is
	$\int_0^1 \pi(e_I(\pi)-\pi/2) dF_I(\pi)$.  
	Efficiency for inattention, however, requires that
	$q(\pi)=e_I(\pi)$, leading to expected welfare of $\int_0^1 e_I(\pi)^2/2 dF_I(\pi)$. Subtracting the solution value of efficiency under inattention from the welfare induced by selling the firm, and applying some algebra, we see that their difference is $\int_0^1 -(\pi-e_I(\pi))^2/2 dF_I(\pi)<0 $ for any biased PGP.

\end{proof}

\textbf{Proof of Proposition~\ref{prop:manage}}

\begin{proof}
	First, assume that $e_I$ is non-decreasing. Then, it is easy to see that manage the process induces a non-decreasing allocation rule. Thus, the manage the process mechanism is incentive compatible and feasible. Moreover, it follows from its definition that it achieves the upper bound on welfare for inattention.
	
	Next, suppose that $e_I$ is not non-decreasing, and consider cost function $C(x)=x^2/2$. Any mechanism that is efficient under inattention requires that $q(\pi)=e_I(\pi)$ almost everywhere on $supp(F_I)$, implying a non non-decreasing allocation rule. Thus, such a mechanism is not incentive compatible and, thus, not feasible.
\end{proof}

	\textbf{Proof of Proposition~\ref{prop:jusso}}
	
	\begin{proof}
		Suppose the agent faces the sell the firm to the agent mechanism and operates under inattention.
		In this case, they will choose $q(\pi) \in \arg \max_x \pi x- C(x)$. As the PGP is unbiased, the expected value from behaving under inattention is $W_I^*=\int_0^1 \max_x \lbrace e_I(\pi) q(x)-C(q(x)) \rbrace dF_I(\pi)$. If attentive, the agent's expected utility from behaving optimally is $\int_0^1 \max_x \lbrace \pi q(x)-C(x) \rbrace dF(\pi)-\kappa $. Thus, the agent's value of attention from the mechanism is $W^*_A-W^*_I$, and they operate under attention, only if $W^*_A-\kappa \geq W^*_I$, as prescribed by efficiency. Finally, it is easy to see that the agent's welfare, $\max \lbrace W_A^*-\kappa, W_I^* \rbrace$, is non-increasing in $\kappa$. 
	\end{proof}
	
	\textbf{Proof of Proposition~\ref{prop:inefficiency}}
	
	\begin{proof}
	In the following, we characterize the optimal mechanism on a case-by-case basis. To do so, the following two lemmas will be helpful.
		\begin{appxlem}\label{lem:sa}
			Suppose the designer implements selling the firm to the agent. Then, the agent operates under attention if it is optimal for welfare given the choice of the mechanism. 
				\end{appxlem}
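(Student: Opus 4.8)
The plan is to exploit the defining feature of selling the firm to the agent, namely that the transfer exactly covers the production cost ($t(\pi) = C(q(\pi))$), so that the designer's profit is identically zero for \emph{every} realized perception. First I would record the consequence of this: since the seller extracts nothing, in each cognitive state the agent's objective expected utility (evaluated at their true type) coincides exactly with the total welfare generated under the mechanism. This is the one structural fact that drives everything.

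Concretely, since the allocation rule satisfies $q(\pi) \in \arg\max_{x} \pi x - C(x)$, an attentive agent reports $\pi = \theta$ and obtains $\int_0^1 \max_x\{\theta x - C(x)\}\, dF(\theta) = W_A^*$; net of the cognitive cost this is $V_A(\mathcal{M}) - \kappa = W_A^* - \kappa$. An inattentive agent instead behaves according to the perception $\pi \sim \rho(\cdot\mid\theta)$, but the social value of the allocation is pinned down by the true type, so that $V_I(\mathcal{M}) = \int_0^1\int_0^1 [\theta q(\pi) - C(q(\pi))]\, d\rho(\pi\mid\theta)\, dF(\theta)$, which is precisely the welfare generated under inattention \emph{using this (attentive-efficient) allocation rule}; denote it $\widetilde{W}_I$. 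Thus the key identity is that agent-utility equals welfare state-by-state: $V_A(\mathcal{M}) - \kappa = W_A^* - \kappa$ and $V_I(\mathcal{M}) = \widetilde{W}_I$.

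Next I would combine these identities with the attention-determination rule: the agent is attentive if and only if $V_A(\mathcal{M}) - \kappa \ge V_I(\mathcal{M})$, i.e.\ if and only if $W_A^* - \kappa \ge \widetilde{W}_I$. But, given that the designer has fixed the selling-the-firm mechanism, welfare under attention is exactly $W_A^* - \kappa$ and welfare under inattention is exactly $\widetilde{W}_I$, so attention is welfare-optimal \emph{under this mechanism} precisely when $W_A^* - \kappa \ge \widetilde{W}_I$ --- the very same inequality. Hence whenever having the agent be attentive is optimal for welfare given the mechanism, the agent's private cost-benefit calculation leads them to choose attention, which is the claim.

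The main subtlety, and the step I would handle with care, is keeping the objects aligned: one must verify that the inattentive value $V_I(\mathcal{M})$ computed with the attentive-efficient allocation rule genuinely equals the welfare \emph{conditional on using that same rule} (the quantity $\widetilde{W}_I \le W_I^*$), and not confuse it with the inattentive first-best $W_I^*$. I would also note that ties ($W_A^* - \kappa = \widetilde{W}_I$) are to be resolved under the convention that the agent may be taken as attentive, so that the one-directional "attentive if optimal" statement holds exactly. A minor accompanying remark is that $\arg\max_x \pi x - C(x)$ is non-decreasing in $\pi$ by supermodularity of $\pi x$, so the mechanism is a feasible incentive-compatible direct mechanism, consistent with the observation already made in the text.
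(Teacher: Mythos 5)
Your proposal is correct and follows essentially the same argument as the paper: because selling the firm makes the agent the residual claimant on welfare (zero designer profit state-by-state), the agent's rational expected utility coincides with welfare in each cognitive state, so the attention inequality $V_A(\mathcal{M})-\kappa \ge V_I(\mathcal{M})$ is exactly the welfare-optimality inequality $W_A^*-\kappa \ge W_I^e$ under this mechanism. Your added care in distinguishing the conditional inattentive welfare $\widetilde{W}_I$ (the paper's $W_I^e$) from the first-best $W_I^*$ is exactly the right reading of the paper's terser proof.
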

				\begin{proof}
					Because the mechanism is selling the firm to the agent, it is straightforward to observe that the agent's (rational) expected utility coincides with welfare. Thus, if the agent operates under inattention, their inattentive utility coincides with welfare induced by inattentive behavior, say $W_I^e$.
					Therefore, the agent's value of attention is $W_A^*-W_I^e$, and the agent operates under attention only if $W_A^*-\kappa \geq W_I^e$.
				\end{proof}
 \begin{appxlem}\label{efifc}
                    If the principal wants to achieve attention, selling the firm is an optimal mechanism.
                \end{appxlem}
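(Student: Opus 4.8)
The plan is to bound the designer's objective from above on the set of mechanisms that induce attention, and then show that selling the firm attains this bound precisely when attention is what the designer wants. The upper bound follows immediately from the definition of $W_A^*$, while attainment combines efficiency-under-attention of selling the firm with the incentive alignment established in Lemma~\ref{lem:sa}.

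First I would derive the upper bound. Under any feasible mechanism in which the agent is attentive, they perceive $\pi = \theta$ and report truthfully, so the generated (gross) welfare is $\int_0^1 \pi q(\pi) - C(q(\pi))\,dF(\pi)$. By the definition of $W_A^*$ in \eqref{eqn:efficient_A}, this quantity is at most $W_A^*$. Since the agent additionally bears the cognitive cost $\kappa$ when attentive, the net welfare of \emph{any} attention-inducing mechanism is at most $W_A^* - \kappa$.

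Second I would show that selling the firm attains $W_A^* - \kappa$ whenever the designer wants attention. By Proposition~\ref{prop:welfare_exo}, selling the firm is efficient under attention, so if the agent is attentive under it, gross welfare equals $W_A^*$ and hence net welfare equals $W_A^* - \kappa$. The only remaining point is that the agent is indeed attentive, and here I invoke Lemma~\ref{lem:sa}: under selling the firm the agent's rational attention incentive coincides with the welfare comparison, so they are attentive exactly when attention is welfare-optimal given the mechanism. By hypothesis the designer wants attention, so the agent is attentive under selling the firm and net welfare equals $W_A^* - \kappa$. Combining the two steps closes the argument: selling the firm both induces attention and achieves $W_A^* - \kappa$, which weakly dominates the welfare of every mechanism that induces attention, so it is an optimal mechanism in this case.

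The main obstacle is the logical meshing in the second step. I must be careful that ``the designer wants attention'' is precisely the condition under which Lemma~\ref{lem:sa} delivers attention under selling the firm, ruling out the possibility that this mechanism inadvertently induces inattention (which could otherwise leave welfare at some inattentive level $W_I^e < W_A^* - \kappa$ and break optimality). Because selling the firm makes the agent the residual claimant on welfare, their private attention incentive equals $W_A^* - W_I^e$, so attention is induced exactly when $W_A^* - \kappa \ge W_I^e$, which is the welfare-optimality of attention itself. Thus the upper-bound step and the attainment step align with no gap, and the conclusion follows.
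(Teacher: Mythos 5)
Your skeleton is the same as the paper's: bound every attention-inducing mechanism by $W_A^* - \kappa$, note that selling the firm attains this bound if it induces attention, and use Lemma~\ref{lem:sa} to handle the attention decision. The gap is in your final step, where you assert that attention being induced under selling the firm, i.e.\ $W_A^* - \kappa \ge W_I^e$, ``is the welfare-optimality of attention itself.'' That equivalence is false. The lemma's hypothesis is that the \emph{globally} optimal mechanism induces attention, whereas $W_A^* - \kappa \ge W_I^e$ is welfare-optimality of attention \emph{conditional on the mechanism being selling the firm}. The two can come apart in the direction you do not need (when $W_A^* - \kappa \ge W_I^e$ the designer may still prefer inattention, e.g.\ via a managing-the-process mechanism achieving $W_I^* > W_A^* - \kappa$ while inducing inattention because $\kappa \ge \kappa_I$), and --- more importantly --- the direction you \emph{do} need is asserted rather than proved.

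The missing argument is short and is exactly what the paper supplies: suppose, for contradiction, that $W_I^e > W_A^* - \kappa$. Then by Lemma~\ref{lem:sa} selling the firm induces inattention and generates welfare $W_I^e$, which strictly exceeds your own upper bound $W_A^* - \kappa$ on the welfare of \emph{every} attention-inducing mechanism; hence the optimal mechanism would induce inattention, contradicting the hypothesis that the principal wants attention. This also shows that your parenthetical worry is misdirected: the scenario ``selling the firm inadvertently induces inattention with $W_I^e < W_A^* - \kappa$'' cannot occur at all, since Lemma~\ref{lem:sa} ties inattention to $W_I^e > W_A^* - \kappa$; the scenario that genuinely threatens the proof is inattention with $W_I^e > W_A^* - \kappa$, and it is excluded by the lemma's hypothesis via the contradiction above, not by Lemma~\ref{lem:sa} alone. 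Once this one-directional implication is inserted, your proof is complete and coincides with the paper's.
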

                \begin{proof}
                    By hypothesis, the optimal mechanism features allocation rule $q^*$ with $\nu(q^*)>\kappa $, so that the agent operates under attention.
				
				As selling the firm to the agent is efficient under attention, we only need to show that it provides the agent with sufficient attention incentives. Indeed, by Lemma~\ref{lem:sa}, the agent operates under attention if it is optimal for welfare to do so under the mechanism selling the firm to the agent. If it is welfare optimal to do so, total welfare equals $W_A^*-\kappa$; a weak upper bound on welfare of an attentive agent with cognitive costs $\kappa$. This value is necessarily larger than the welfare implied by inattentive behavior, say $W_I^e$, as otherwise the hypothesis that the optimal mechanism provides the agent with attention incentives was violated.
                \end{proof}
                
				We now characterize the optimal mechanism on a case-by-case basis.
				First, if $\kappa<\kappa^*$, the best the designer can hope for is that the agent operates under attention. By Lemma~\ref{lem:sa}, the agent operates under attention when choosing the mechanism selling the firm to the agent. 
				
				Second, if $\kappa \geq \kappa_I$, a managing the process mechanism achieves efficiency for inattention, and, by definition of $\kappa_I$, the agent operates under inattention. If $\kappa^* \geq \kappa_I$, then whenever efficiency for inattention is preferred to to efficiency under attention with aligned attention costs (i.e., $\kappa \geq \kappa^*)$, the agent operates under inattention as $\kappa \geq \kappa_I$.

				Finally, suppose that $\kappa_I>\kappa^*$ and $  \kappa \in ( \kappa^*,\kappa_I)$. We argue that there is $\bar \kappa  \in ( \kappa^*,\kappa_I)$ so that the solution to the principal's problem-- $\max_{q, \alpha \in [0, 1]}  \alpha \times  \left\{ \int \limits_0^1 e_I(\pi) q(\pi)-C(q(\pi)) d F_I(\pi) \right\} +(1-\alpha) \times \left\{ \int \limits_0^1 \pi q(\pi)-C(q(\pi)) d F(\pi) - \kappa   \right\}, $ such that $ \alpha(\nu(q)-\kappa) \geq 0$ and $(\alpha-1)(\nu-\kappa)\geq 0$-- is such that there exists $\bar \kappa$ satisfying the following properties: if $\kappa <\bar \kappa$, selling the firm to the agent is optimal and the agent is attentive (i.e., $\alpha=1$). If $\kappa >\bar \kappa$, the principal introduces distortions to the allocation rule in order to keep the agent inattentive (i.e., $\alpha =0$). Note that Berge's theorem of the maximum applies (as both the constraints and the objective of the principal's problem are continuous in the choice variables and the parameter $\kappa$), implying that the solution value varies continuously with $\kappa$.

                We first assume that the principal needs to keep the agent attentive or inattentive and solve the optimal mechanism for each scenario. The optimal mechanism then selects the maximum of both. By Lemma~\ref{efifc}, when holding the agent attentive, the principal does so by selling the firm to the agent.
Note that the solution value from selling the firm to the agent, $W_A^*-\kappa$, decreases in $\kappa$ on $(\kappa^*,\kappa_I)$ while attaining it's maximum, $W_I^*$, at $\kappa=\kappa^*$.

                Next, suppose that the principal keeps the agent inattentive. That is, the principal maximizes welfare conditional on the constraint that $\nu(q)-\kappa<0$. This constraint necessarily binds as $\kappa \leq \kappa_I$. Moreover, the larger $\kappa$, the less the constraint binds as in the limit $\nu(q_I^*)=\kappa_I$. Thus, the solution value from keeping the agent inattentive increases continuously (by Berge's theorem of the maximum) in $\kappa$ over $(\kappa^*,\kappa_I)$ and attains it's maximum, $W_I^*$ ar $\kappa=\kappa_I$.

Thus, by the intermediate value theorem, the solution value of the principal's problem when keeping the agent attentive and that when keeping the agent inattentive cross exactly once at some  $\kappa \in (\kappa^*,\kappa_I)$, which we call $\bar \kappa$.

			\end{proof}
            
			\textbf{Proof of Proposition~\ref{prop:carrotstick}}
            \begin{proof}
            The proof centers around two lemmas. Lemma~\ref{lemma:unbiased} characterizes the optimal allocation rule when the agent's PGP is unbiased (i.e., $\rho_U$), while Lemma~\ref{lemma:biased} characterizes that for the case of an unbiased (i.e., $\rho_C)$. After proving each lemma, we use these allocation rules to derive the respective comparative static results on $V_A$ and $V_I$.
            
			\begin{appxlem}\label{lemma:unbiased}
	Suppose the agent's PGP is $\rho_U$. There exist a thresholds $0 < \underline{\kappa}_U < \overline{\kappa}_U$ such that the optimal mechanism, $q^*_U$, is
	\begin{enumerate}[(a)]
		\item if $\kappa \le \underline{\kappa}_U$, $q^*_U = q_A$ and the agent is attentive;
		\item if $\kappa \in (\underline{\kappa}_U,\overline{\kappa}_U)$,
		$$
		q^*_U(\pi) = \begin{cases}
			0 & \text{if $\pi \le \frac{1}{2}$}\\
			(2 + \lambda)\pi - (1 + \lambda/2) & \text{if $\pi \in (1/2, 3/4)$}\\
			(2 - \lambda)\pi - (1 - \lambda) & \text{if $\pi \ge 3/4$}
		\end{cases}
		$$
		where $\lambda = 96\kappa - 3$ and the agent is attentive; and
		\item if $\kappa \ge \overline{\kappa}_U$, $q^*_U = q_I$ and the agent is inattentive.
	\end{enumerate}
\end{appxlem}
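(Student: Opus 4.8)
The plan is to split the firm's problem according to the cognitive state it wishes to induce, solve each piece, and compare. Since the agent is attentive exactly when $\nu(q)\ge\kappa$, the firm's value is
\[
\max\Bigl\{\sup_{q:\,\nu(q)\ge\kappa}\Pi_A(q),\;\sup_{q:\,\nu(q)\le\kappa}\Pi_I(q)\Bigr\},
\]
where $\Pi_A(q)=\int_0^1[q(\theta)(2\theta-1)-q(\theta)^2/2]\,d\theta$ is profit under attention and $\Pi_I(q)=2\int_0^1[q(\pi)(2\pi-\tfrac34)-q(\pi)^2/2]\,d\pi$ is profit under inattention, whose exogenous maximum is $9/128$. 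Because $\rho_U$ is unbiased ($e_I(\pi)=\pi$), the bias term in Proposition~\ref{prop:gen Z} vanishes, so $\nu(q)=\int_0^1 q(\pi)\,v(\pi)\,d\pi$ with $v(\pi)=F_I(\pi)-F(\pi)$ the tent-shaped function of Figure~\ref{fig:Info} (equal to $-\pi$, $\pi-\tfrac12$, and $1-\pi$ on $[0,\tfrac14]$, $[\tfrac14,\tfrac34]$, $[\tfrac34,1]$). A short computation gives $\nu(q_A)=1/32$, which will be the lower threshold.

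For the attention subproblem, I would first note that $\Pi_A$ is strictly concave in $q$, the constraint $\nu(q)\ge\kappa$ is linear, and the feasible set (non-decreasing $q$ valued in $[0,1]$ with $\nu(q)\ge\kappa$) is convex; hence KKT is necessary and sufficient and the maximizer is unique. Forming the Lagrangian with multiplier $\lambda\ge 0$ and maximizing pointwise (momentarily dropping monotonicity) yields
\[
q_\lambda(\pi)=\min\bigl\{1,\max\{0,\,(2\pi-1)+\lambda v(\pi)\}\bigr\}.
\]
I would then verify that this equals the three-branch formula in the statement, that it is continuous at $\pi=\tfrac34$ (both branches equal $\tfrac12+\lambda/4$) with $q_\lambda(1)=1$, and crucially that it is non-decreasing whenever $\lambda\le 2$ (the branch slopes are $2+\lambda$ and $2-\lambda$). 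Thus on the relevant range the monotonicity constraint is slack, no ironing is needed, and the relaxed pointwise solution is the true optimum.

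Next I would pin down $\lambda$ from the binding constraint. Writing the middle branch as $(2+\lambda)(\pi-\tfrac12)$ and the top branch as $1-(2-\lambda)(1-\pi)$ makes the integral transparent, and $\nu(q_\lambda)$ evaluates to $\lambda/96+1/32$; setting this equal to $\kappa$ gives $\lambda=96\kappa-3$, exactly as claimed. The constraint is slack precisely when $\lambda\le 0$, i.e. $\kappa\le\underline{\kappa}_U:=1/32$, yielding regime (a) with $q_U^*=q_A$ (the $\lambda=0$ allocation). For $\kappa>1/32$ I would obtain the attentive profit by the envelope theorem: with $L(\lambda):=\Pi_A(q_\lambda)+\lambda\nu(q_\lambda)$ one has $\tfrac{d}{d\lambda}L(\lambda)=\nu(q_\lambda)=\lambda/96+1/32$ and $L(0)=\Pi_A(q_A)=1/12$, so integrating and subtracting $\lambda\nu(q_\lambda)$ gives $\Pi_A(q_\lambda)=1/12-\lambda^2/192$.

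Finally I would locate $\overline{\kappa}_U$ and assemble the regimes. Since any inattentive profit is at most $9/128$, attention is optimal exactly while $1/12-\lambda^2/192>9/128$, i.e. $\lambda<\sqrt{5/2}$, which gives $\overline{\kappa}_U=1/32+\sqrt{5/2}/96$ (consistent with Figure~\ref{fig:du}). On $(\underline{\kappa}_U,\overline{\kappa}_U)$ the attentive profit strictly exceeds the inattentive cap, so $q_U^*=q_\lambda$ and, with the binding constraint $\nu(q_\lambda)=\kappa$ resolved in the firm's favor, the agent is attentive (regime (b)); as $\sqrt{5/2}<2$, monotonicity indeed holds throughout. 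For $\kappa\ge\overline{\kappa}_U$ the inattentive optimum attains $9/128\ge\Pi_A(q_\lambda)$, so $q_U^*=q_I$ and the agent is inattentive (regime (c)). The main obstacle is the feasibility bookkeeping: (i) confirming the pointwise Lagrangian solution is monotone so that ironing never binds, and (ii) handling the off-support freedom of $q_I$ in regime (c), where one extends it to minimize its value of attention (feasible since $v\ge 0$ on $[\tfrac34,1]$) and checks $\nu(q_I)<\overline{\kappa}_U$ so that $q_I$ genuinely keeps the agent inattentive.
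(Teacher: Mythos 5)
Your proposal is correct and follows essentially the same route as the paper's proof: split the problem by induced cognitive state, solve the attention subproblem with a Lagrangian on the constraint $\nu^U(q)=\kappa$ to obtain $\lambda = 96\kappa-3$ and attentive profit $1/12 - \lambda^2/192$, compare with the inattentive cap $9/128$ to get $\overline{\kappa}_U = 1/32 + \sqrt{5/2}/96$, and extend $q_I$ off-support to minimize its value of attention (the paper computes $\nu^U(q_I)=21/512 < \overline{\kappa}_U$) so that $q_I$ indeed induces inattention in regime (c). Your added verifications (KKT sufficiency, the monotonicity bound $\lambda \le 2$ with $\sqrt{5/2}<2$, and the envelope-theorem derivation of the profit formula) merely make explicit steps the paper leaves implicit.
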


			\begin{proof}
				Recall from the main text that profits are higher when the agent is attentive (with allocation rule $q_A$) versus when inattentive (with allocation rule $q_I$). With $\rho_U$, the value of attention given allocation rule $q$ is $\nu^U(q) = \int \limits_0^{1/4} q(\pi)[-\pi]d\pi + \int \limits_{1/4}^{3/4} q(\pi)\left[\pi - \frac{1}{2}\right] d\pi + \int \limits_{3/4}^1 q(\pi)[1 - \pi].
				$
				Thus, $\nu^U(q_A) = 1/32$ and $q_A$ is optimal if $\kappa \le 1/32 \equiv \underline{\kappa}_U$. If $\kappa > 1/32$, the designer compares the optimal mechanism that induces each cognitive state. The optimal mechanism that induces attention solves $\max \limits_{q} \int \limits_{0}^1 q(\pi)(2\pi - 1) - \frac{q(\pi)^2}{2}$, subject to $\nu^U(q) = \kappa$. If $\lambda$ is the Lagrange multiplier on the constraint, the optimal solution takes the form as stated in part (b) of the lemma. Substituting this into the constraint that $\nu^U(q) = \kappa$, we get that $\lambda = 96\kappa - 3$. We compare this solution to the optimal mechanism that induces inattention, and note that profits from inattention are bounded above by 9/128, which are achieved by $q_I$. Instead, profits under the optimal mechanism that induces attention are $1/12 - (96\kappa - 3)^2/192$, which is higher than 9/128 if and only if $\kappa \le \sqrt{5/2}/96 + 1/32 \equiv \overline{\kappa}_U$. Hence, this mechanism is optimal for $\kappa \in (\underline{\kappa}_U, \overline{\kappa}_U)$. 
				
				Finally, note that the lowest value of attention achievable by an inattentive-optimal mechanism is $\nu^U(q_I) = 21/512$ (done by setting $q_I(\pi) = 3/4$ for all $\pi > 3/4$). Since $21/512 < \overline{\kappa}_U$, it follows that the inattentive-optimal mechanism $q_I$ induces inattention for $\kappa \ge \overline{\kappa}_U$ and, as such, is the optimal mechanism. 
			\end{proof}
Using Lemma~\ref{lemma:unbiased} it is straightforward to observe that $V_A(q_U^*)=\int_{1/2}^1 q_U^*(\pi)(1- \pi)d\pi$ and $V_I(q_U^*)=V_A(q_U^*)-\nu^U(q_U^*)=\int_{1/2}^{3/4} q_U^*(\pi)(3/2-2\pi)d\pi$ increase in $\lambda$ (and thus in $\kappa$) on $\kappa \in (\underline{\kappa}_U, \bar{\kappa}_U)$.
            \begin{appxlem}\label{lemma:biased}
	Suppose the agent's PGP is $\rho_C$. There exist a thresholds $0 < \underline{\kappa}_C < \overline{\kappa}_C$ such that the optimal mechanism, $q^*_C$, is
	\begin{enumerate}[(a)]
		\item if $\kappa \le \underline{\kappa}_C$, $q^*_C = q_A$ and the agent is attentive;
		\item if $\kappa \in (\underline{\kappa}_C,\overline{\kappa}_C)$,
		$$
		q^*_C(\pi) = \begin{cases}
			0 & \text{if $\pi \le \frac{1}{2}$}\\
			(2 - \lambda)\pi - (1 - \lambda/2) & \text{if $\pi \in (1/2, 3/4)$}\\
			(2 - \lambda)\pi - (1 - \lambda) & \text{if $\pi \ge 3/4$}
		\end{cases}
		$$
		where $\lambda = 96\kappa - 1$ and the agent is attentive; and
		\item if $\kappa \ge \overline{\kappa}_U$, $q^*_U = q_I$ and the agent is inattentive.
	\end{enumerate}
\end{appxlem}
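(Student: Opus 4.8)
The plan is to mirror the proof of Lemma~\ref{lemma:unbiased}, replacing the unbiased functional $\nu^U$ by its conservatism-bias counterpart $\nu^C$. First I would apply Proposition~\ref{prop:gen Z} to the primitives of $\rho_C$ — namely $F_I$ uniform on $[1/4,3/4]$ and $e_I(\pi)=2\pi-1/2$ — to obtain
$$
\nu^C(q)=\int_0^{1/4} q(\pi)(-\pi)\,d\pi+\int_{1/4}^{3/4} q(\pi)\Bigl(\tfrac12-\pi\Bigr)d\pi+\int_{3/4}^1 q(\pi)(1-\pi)\,d\pi,
$$
the functional in the left panel of Figure~\ref{fig:Con}. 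The only structural change from the unbiased case is that the marginal contribution on $(1/2,3/4)$ is now \emph{negative}, which is exactly what turns the optimal distortion into a ``stick.'' Evaluating at $q_A(\pi)=\max\{0,2\pi-1\}$ gives $\nu^C(q_A)=1/96\equiv\underline{\kappa}_C$; since $q_A$ is the unconstrained profit maximizer and already supplies value of attention $1/96$, it is optimal and induces attention precisely when $\kappa\le\underline{\kappa}_C$, which is part (a).

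For part (b), and $\kappa>\underline{\kappa}_C$, I would solve the attention-inducing problem $\max_q\int_0^1[q(\pi)(2\pi-1)-q(\pi)^2/2]\,d\pi$ subject to the binding constraint $\nu^C(q)=\kappa$ (binding because $q_A$ now has too small a value of attention and further distortion is costly). The pointwise first-order condition gives $q(\pi)=(2\pi-1)+\lambda\,mc(\pi)$, with $mc$ the marginal contribution above and $\lambda\ge 0$ the multiplier; on $(1/2,3/4)$ this reads $(2-\lambda)\pi-(1-\lambda/2)$ and on $(3/4,1)$ it reads $(2-\lambda)\pi-(1-\lambda)$, the stated form. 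Substituting into $\nu^C(q)=\kappa$ and using $\int_{1/2}^1 mc(\pi)^2\,d\pi=1/96$ yields $\nu^C(q^*_C)=(1+\lambda)/96$, hence $\lambda=96\kappa-1$. I would then determine $\overline{\kappa}_C$ by a profit comparison: since the profit density equals $\tfrac12(2\pi-1)^2-\tfrac12(q-(2\pi-1))^2$, the distortion costs $\tfrac12\int_{1/2}^1(\lambda\,mc)^2\,d\pi=\lambda^2/192$, so the attention-optimal profit is $1/12-\lambda^2/192$; this exceeds the inattention-optimal profit $9/128$ iff $\lambda\le\sqrt{5/2}$, i.e. iff $\kappa\le \sqrt{5/2}/96+1/96\equiv\overline{\kappa}_C$.

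Part (c) requires showing that for $\kappa\ge\overline{\kappa}_C$ the inattention-optimal allocation $q_I$ actually leaves the agent inattentive. Here I would exploit the freedom in specifying $q$ off the support $[1/4,3/4]$ of $F_I$, where profit is unaffected: extending $q_I$ as the constant $q_I(3/4)=3/4$ on $(3/4,1]$ minimizes $\nu^C$ among inattention-optimal rules (that region carries positive marginal contribution $1-\pi$), and a direct computation gives this minimal value as $3/512<\overline{\kappa}_C$. Hence $q_I$ induces inattention for every $\kappa\ge\overline{\kappa}_C$, giving part (c).

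Two feasibility checks run throughout and are where the argument could slip. First, the claimed $q^*_C$ sets $q=0$ on $(1/4,1/2)$ even though $mc=1/2-\pi>0$ there; this is correct only if the unconstrained pointwise solution is negative, i.e. $\lambda<2$, which I would confirm since $\lambda=96\kappa-1$ ranges in $(0,\sqrt{5/2})\subset(0,2)$ on $(\underline{\kappa}_C,\overline{\kappa}_C)$. Second, unlike the unbiased case, the downward distortion on $(1/2,3/4)$ creates a jump of size $\lambda/2$ at $\pi=3/4$; I would verify this jump is upward and that each piece has positive slope $2-\lambda$, so $q^*_C$ is nondecreasing (hence incentive compatible) with $q^*_C(1)=1$. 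The main obstacle is precisely this monotonicity/feasibility bookkeeping together with the off-support minimization in part (c); by contrast, the threshold computations are routine and parallel Lemma~\ref{lemma:unbiased}.
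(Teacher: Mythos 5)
Your proposal is correct and follows essentially the same route as the paper's proof: the same functional $\nu^C$, the same Lagrangian with binding constraint yielding $\lambda = 96\kappa - 1$, the same profit comparison $1/12 - \lambda^2/192$ versus $9/128$ giving $\overline{\kappa}_C = \sqrt{5/2}/96 + 1/96$, and the same off-support extension $q_I(\pi) = 3/4$ on $(3/4,1]$ giving $\nu^C(q_I) = 3/512 < \overline{\kappa}_C$ for part (c). Your explicit feasibility bookkeeping (checking $\lambda < 2$ so that $q^*_C = 0$ on $(1/4,1/2)$, the upward jump of size $\lambda/2$ at $\pi = 3/4$, and monotonicity of each piece) makes rigorous steps the paper leaves implicit in its phrase ``the optimal solution takes the form as stated in part (b).''
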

		 
			\begin{proof}
				Recall that profits are higher when the agent is attentive (with allocation rule $q_A$) versus when inattentive (with allocation rule $q_I$). With $\rho_C$, the value of attention given allocation rule $q$ is $\nu^C(q) = \int \limits_0^{1/4} q(\pi)[-\pi]d\pi + \int \limits_{1/4}^{3/4} q(\pi)\left[\frac{1}{2} - \pi\right] d\pi + \int \limits_{3/4}^1 q(\pi)[1 - \pi].
				$ 
				Thus, $\nu^C(q_A) = 1/96$ and $q_A$ is optimal if $\kappa \le 1/96 \equiv \underline{\kappa}_U$. If $\kappa > 1/96$, the designer compares the optimal mechanism that induces each cognitive state. The optimal mechanism that induces attention solves $ 
				\max \limits_{q} \int \limits_{0}^1 q(\pi)(2\pi - 1) - \frac{q(\pi)^2}{2},
				$ 
				subject to $\nu^C(q) = \kappa$. If $\lambda$ is the Lagrange multiplier on the constraint, the optimal solution takes the form as stated in part (b) of this lemma. Substituting this into the constraint that $\nu^U(q) = \kappa$, we get that $\lambda = 96\kappa - 1$. We compare this number to the generated profits of the optimal mechanism that induces inattention, which are bounded from below by  9/128 and implemented by $q_I$. Instead, profits under the optimal mechanism that induces attention are $1/12 - (96\kappa - 1)^2/192$, which is higher than 9/128 if and only if $\kappa \le \sqrt{5/2}/96 + 1/96 \equiv \overline{\kappa}_C$. Hence, this mechanism is optimal for $\kappa \in (\underline{\kappa}_C, \overline{\kappa}_C)$.

				Further note that the lowest value of attention achievable by an inattentive-optimal mechanism is $\nu^C(q_I) = 3/512$ (done by setting $q_I(\pi) = 3/4$ for all $\pi > 3/4$). Since $3/512 < \overline{\kappa}_C$, it follows that the inattentive-optimal mechanism $q_I$ induces inattention for $\kappa \ge \overline{\kappa}_U$ and, as such, is the optimal mechanism. 
			\end{proof}
Finally, consider $\kappa \in (\underline{\kappa}_C, \bar{\kappa}_C)$. Lemma~\ref{lemma:biased} implies that $dV_A(q_C^*)/d\lambda=\int_{1/2}^{3/4} (1/2-\pi)(1- \pi)d\pi+\int_{3/4}^1 (1-\pi)^2 d\pi<0$ and $dV_I(q_C^*)/d\lambda=d\Big(V_A(q_C^*) -\nu^C(q_C^*)\Big)/d\lambda=\int_{1/2}^{3/4} dq_U^*(\pi)/\lambda (1/2)d\pi=\int_{1/2}^{3/4}  (1/2-\pi)/2 d\pi<0$. Thus, both $V_A$ and $V_I$ decrease in $\lambda$ (and thus in $\kappa$) on $(\underline{\kappa}_C, \bar{\kappa}_C)$. 
            \end{proof}

			\textbf{Proof of Proposition~\ref{prop:optimal_hype}}
\begin{proof}
            
			We first derive the seller's optimal price for any $\kappa$ and level of hype $h$ (Lemma~\ref{lemma:optimalprice1}). Using this pricing behavior, we then characterize the seller's and buyer's preferred level of hype (Lemma~\ref{prop:optimal_hype1}).

             Recall that the seller's optimal price under exogenous inattention is $p_I(h)=\min \lbrace 1, (2(1-h))^{-1} \rbrace$, and, for fixed $h$, define $\bar \kappa(h) \equiv \frac{1}{2} h p_I(h)^2  $ (the black curve in Figure~\ref{fig:hype}) and $\underline{\kappa}(h)$ (the blue curve in Figure~\ref{fig:hype}) as solution to $\left[1 - (1 - h)\left(\frac{2\kappa}{h}\right)^{1/2}\right]\left(\frac{2\kappa}{h}\right)^{1/2} =\frac{1}{4}$. Note that $\underline{\kappa}(h)<\bar \kappa(h)$. The following lemma describes the seller's optimal price for any $h$ and $\kappa$.
\begin{appxlem}\label{lemma:optimalprice1}
	The seller's optimal price, $p^*(h)$, is given by
	$$
	p^*(h) = \begin{cases}
		p_I(h) & \text{if $\kappa > \overline{\kappa}(h)$}\\
		\left(\frac{2\kappa}{h}\right)^{1/2} &  \text{if $\kappa \in [\underline{\kappa}(h), \overline{\kappa}(h)]$} \\
		\frac{1}{2} & \text{if $\kappa < \underline{\kappa}(h)$}.
	\end{cases} 
	$$
	The consumer is inattentive if $\kappa \ge \underline{\kappa}(h)$ and attentive if $\kappa < \underline{\kappa}(h)$.
\end{appxlem}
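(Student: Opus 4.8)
The plan is to reduce the seller's problem to a one-dimensional choice of price and exploit that the induced cognitive state is monotone in the price. Since the value of attention is $\nu(p) = \tfrac{1}{2} h p^2$, which is strictly increasing in $p$ for $h>0$, the buyer can be held inattentive exactly when $p \le \hat p$, where $\hat p \equiv \left(\tfrac{2\kappa}{h}\right)^{1/2}$, and is attentive when $p > \hat p$ (the indifference at $p=\hat p$, where $\nu(\hat p)=\kappa$, is broken in favour of inattention, consistent with the $\alpha\in[0,1]$ convention at $\nu=\kappa$). The seller's revenue is therefore $R_I(p) \equiv p[1-(1-h)p]$ on the inattentive branch $p\le \hat p$ and $R_A(p)\equiv p(1-p)$ on the attentive branch $p>\hat p$. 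Both branches are strictly concave, with $R_I$ maximised at $p_I(h)=\min\{1,(2(1-h))^{-1}\}$ and $R_A$ maximised at $p=\tfrac12$ with value $\tfrac14$. First I would record that $R_I(p) = R_A(p) + h p^2 > R_A(p)$ for all $p>0$, so hype strictly raises revenue and the seller prefers the inattentive branch whenever it is unconstrained.

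Next I would solve each branch. On the inattentive branch the constraint $p\le\hat p$ binds if and only if $\hat p < p_I(h)$, i.e. if and only if $\kappa < \overline{\kappa}(h)=\tfrac12 h\, p_I(h)^2$; for $\kappa\ge\overline{\kappa}(h)$ the unconstrained inattentive optimum $p_I(h)$ is feasible, so $p^*=p_I(h)$ and the buyer is inattentive, establishing the top case. For $\kappa<\overline{\kappa}(h)$, concavity of $R_I$ together with $\hat p<p_I(h)$ implies $R_I$ is increasing on $[0,\hat p]$, so the best inattentive price is $p=\hat p$ with revenue $R_I(\hat p)$, while the best attentive revenue is at most $\tfrac14$, attained at $p=\tfrac12$ when feasible.

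The crux is the comparison in this intermediate range. Define $g(\kappa)\equiv R_I(\hat p) = \hat p\,[1-(1-h)\hat p]$; because $\hat p$ is increasing in $\kappa$ and $R_I$ is increasing on $[0,p_I(h)]$, the map $g$ is continuous and strictly increasing on $[0,\overline{\kappa}(h))$ with $g(0)=0<\tfrac14$ and $g(\overline{\kappa}(h))=R_I(p_I(h))>\tfrac14$ (again from $h>0$). Hence $g(\kappa)=\tfrac14$ has a unique root, which is precisely the defining equation for $\underline{\kappa}(h)$, and this simultaneously yields $\underline{\kappa}(h)\in(0,\overline{\kappa}(h))$. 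Monotonicity of $g$ then gives the clean threshold: for $\kappa\in[\underline{\kappa}(h),\overline{\kappa}(h))$ we have $R_I(\hat p)\ge\tfrac14$, so $p^*=\hat p$ and the buyer is inattentive; for $\kappa<\underline{\kappa}(h)$ we have $R_I(\hat p)<\tfrac14$, so $p^*=\tfrac12$ and the buyer is attentive.

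I expect the main obstacle to be handling the boundary and tie-breaking carefully rather than the optimization itself. Two checks are needed: that $p=\tfrac12$ genuinely lies on the attentive branch when $\kappa<\underline{\kappa}(h)$ (which follows since $R_I(\tfrac12)=\tfrac{1+h}{4}>\tfrac14$ forces $\hat p<\tfrac12$ at $\underline{\kappa}(h)$, hence $\hat p<\tfrac12$ for all smaller $\kappa$ by monotonicity of $\hat p$ in $\kappa$), and that the indifference at $p=\hat p$ is resolved in favour of inattention so the supremum on the inattentive branch is attained. The degenerate case $h=0$, where $\nu\equiv 0$, perception is de facto correct, and the formulas reduce to $p^*=\tfrac12$, should be noted separately.
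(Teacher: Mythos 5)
Your proposal is correct and follows essentially the same route as the paper's proof: identify the inattention cutoff price $\hat p=(2\kappa/h)^{1/2}$, note the unconstrained inattentive optimum $p_I(h)$ is feasible iff $\kappa\ge\overline{\kappa}(h)$, and otherwise compare the constrained inattentive revenue $R_I(\hat p)$ (increasing in $\kappa$) against the attentive benchmark $1/4$, with $\underline{\kappa}(h)$ defined by the indifference equation. Your write-up is in fact somewhat more careful than the paper's on the pointwise dominance $R_I(p)=R_A(p)+hp^2$, the verification that $1/2$ lies on the attentive branch when $\kappa<\underline{\kappa}(h)$, and the tie-breaking at $\nu(p)=\kappa$, but the underlying argument is the same.
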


\begin{proof}
				The seller's most preferred outcome is to sell to an inattentive buyer at a price of $p_I(h)$. Moreover, if $\kappa \ge \nu(p_I(h))= \frac{1}{2} h p_I(h)^2 =\overline{\kappa}(h)$, a price of $p_I(h)$ induces inattention.\\
				
				If $\kappa < \overline{\kappa}(h)$, the seller needs to compare the optimal price that induces inattention to that which induces attention. For inattention, since the buyer's value of attention is decreasing in $p$ and profits are convex in $p$ (maximized at $p_I(h)$), it follows that the optimal price that induces inattention is implicitly defined by $\nu(p) = \kappa$ or $p = (2\kappa/h)^{1/2}$. The agent becomes attentive for any price above $(2\kappa/h)^{1/2}$ and, since the $1/2$ is the unconstrained optimal price to offer an attentive buyer, the optimal price that induces attention is $\max\left\{(2\kappa/h)^{1/2}, 1/2\right\}$. Since inattention implies a higher probability of sale for a given price, it follows that the designer chooses $(2\kappa/h)^{1/2}$ to induce inattention as long us
				\begin{equation}\label{eqn:lowestk}
					\left[1 - (1 - h)\left(\frac{2\kappa}{h}\right)^{1/2}\right]\left(\frac{2\kappa}{h}\right)^{1/2} \ge \frac{1}{4},
				\end{equation}
				where the left-hand side of the inequality increases in $\kappa$. Thus, by definition, $\underline{\kappa}(h)$ is the smallest $\kappa$ such that \eqref{eqn:lowestk} holds. Moreover, note that $(2\underline{\kappa}/h)^{1/2} < 1/2$. Hence, $p^*(h) = (2\kappa/h)^{1/2}$, inducing inattention, for $\kappa \in [\underline{\kappa}(h), \overline{\kappa}(h)]$ and $p^*(h) = 1/2$, inducing attention, for $\kappa < \underline{\kappa}(h)$.
			\end{proof}

            \begin{appxlem}\label{prop:optimal_hype1}
	The seller-optimal level of hype, $h_s^*$, and the buyer-optimal level of hype, $h_b^*$, are
	$$
	h_s^* = \begin{cases}
		1 & \text{if $\kappa \ge 1/8$}\\
		8\kappa & \text{if $\kappa < 1/8$}
	\end{cases};\quad
	h_b^* = \begin{cases}
		0 & \text{if $\kappa \ge 9/128$}\\
		1 & \text{if $\kappa \in (1/32, 9/128)$}\\
		\frac{16\kappa}{(\sqrt{2} - 4\sqrt{\kappa})^2} & \text{if $\kappa \le 1/32$}.
	\end{cases}
	$$
	If and only if $\kappa < 9/128$, the buyer strictly prefers a higher degree of hype than the seller. Moreover, for fixed $\kappa$ the seller's revenues and the buyer's utility vary with $h$ as depicted in Figure~\ref{fig:hype}.
     
            \end{appxlem}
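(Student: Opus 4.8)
The plan is to fix $\kappa$ and treat both the seller's revenue $R(h)$ and the buyer's utility $U(h)$ as functions of the degree of hype $h$, substituting the seller's optimal price $p^*(h)$ from Lemma~\ref{lemma:optimalprice1}. Since $\underline{\kappa}(h)$ and $\overline{\kappa}(h)$ are increasing in $h$, for fixed $\kappa$ the interval $[0,1]$ splits, as $h$ grows, into three consecutive regions: a low-$h$ region where $\kappa>\overline{\kappa}(h)$ and the seller posts the unconstrained inattentive price $p_I(h)$; an intermediate region where $\kappa\in[\underline{\kappa}(h),\overline{\kappa}(h)]$ and the seller posts the attention-binding price $(2\kappa/h)^{1/2}$; and a high-$h$ region where $\kappa<\underline{\kappa}(h)$, the buyer is attentive, and the price is $1/2$. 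I would compute $R$ and $U$ separately on each region and then optimize over $h$.

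For the seller, the simplification is that on the intermediate region, writing $p=(2\kappa/h)^{1/2}$, revenue collapses to $R=p-(1-h)p^2$; reparametrizing by $1/h$ makes this concave with interior maximizer $h=8\kappa$, where $p=1/2$ and $R=1/4+2\kappa$. I would then check that this peak dominates: revenue equals $1/(4(1-h))$ (increasing) on the low-$h$ region and the attentive value $1/4$ on the high-$h$ region, both below $1/4+2\kappa$; and the peak lies inside the intermediate region since $\underline{\kappa}(8\kappa)<8\kappa/8=\kappa\le\overline{\kappa}(8\kappa)$. The constraint $h\le 1$ binds exactly when $8\kappa>1$, giving $h_s^*=8\kappa$ for $\kappa<1/8$ and $h_s^*=1$ otherwise.

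For the buyer, inattentive utility simplifies to $U=1/2-p+(1-h)p^2/2$. On the low-$h$ region this reduces to $(1-4h)/(8(1-h))$, which is decreasing, so $h=0$ (utility $1/8$) beats every interior point there. On the intermediate region, reparametrizing by $h^{-1/2}$ shows $U$ is U-shaped with minimum at $h=2\kappa$; the key observation is that the region's left endpoint $h_1$, solving $h_1=8\kappa(1-h_1)^2$, satisfies $2\kappa<h_1<8\kappa$, so $U$ is increasing throughout the inattentive region and the buyer pushes $h$ as high as keeps the seller inducing inattention. Solving the $\underline{\kappa}$-defining equation as a quadratic in $p$ gives boundary price $p_2=1/2-\sqrt{2\kappa}$, hence boundary $h_2=16\kappa/(\sqrt{2}-4\sqrt{\kappa})^2$, and via the binding revenue condition $p-(1-h)p^2=1/4$ the boundary utility $U=3/8-p_2/2=1/8+\sqrt{2\kappa}/2>1/8$. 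The buyer's optimum is therefore the interior boundary $h_2$ when $h_2<1$ (i.e.\ $\kappa<1/32$); $h=1$ when $h=1$ still lies in the inattentive region ($\kappa\ge 1/32$) and $1/2-\sqrt{2\kappa}\ge 1/8$, i.e.\ $\kappa\in(1/32,9/128)$; and $h=0$ once $1/2-\sqrt{2\kappa}<1/8$, i.e.\ $\kappa>9/128$. The threshold $9/128$ is precisely where $1/2-\sqrt{2\kappa}=1/8$.

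The comparison then follows by bookkeeping across the three $\kappa$-ranges: for $\kappa\ge 9/128$, $h_s^*>0=h_b^*$; for $\kappa\in(1/32,9/128)$, $h_b^*=1>8\kappa=h_s^*$; and for $\kappa\le 1/32$, $h_b^*=8\kappa/(1-2\sqrt{2}\sqrt{\kappa})^2>8\kappa=h_s^*$ since the denominator is below one. Thus the buyer prefers strictly more hype iff $\kappa<9/128$. Finally, $(h_s^*,h_b^*)=(8\kappa,h_b^*)$ lies inside the inattentive region, where the seller's revenue decreases past its peak at $8\kappa$ and the buyer's utility increases, yielding the monotonicity claim. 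The main obstacle is organizational rather than analytic: establishing the ordering $2\kappa<h_1<8\kappa<h_2$ of the revenue peak, utility trough, and region endpoints, so that each objective is single-peaked or monotone on the relevant region, and correctly handling the discontinuity in buyer utility at the inattention/attention boundary where the seller switches strategies.
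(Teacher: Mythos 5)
Your proof is correct and takes essentially the same approach as the paper's: fix $\kappa$, apply the three-regime price characterization of Lemma~\ref{lemma:optimalprice1}, establish that seller revenue is single-peaked at $h=8\kappa$ and that buyer utility is decreasing, increasing, and constant on the three regimes respectively, and then compare the candidate optima $h=0$, $h=1$, and the boundary $h$ solving $\underline{\kappa}(h)=\kappa$. Your write-up is in places more explicit than the paper's (the closed forms $p_2 = 1/2-\sqrt{2\kappa}$ and boundary utility $1/8+\sqrt{2\kappa}/2$, and the derivation of $h_b^* = 16\kappa/(\sqrt{2}-4\sqrt{\kappa})^2$, which the paper's proof asserts only as the solution of $\underline{\kappa}(h_b^*)=\kappa$), and your U-shape argument correctly locates the sign change of the buyer's derivative at $h=2\kappa$, i.e., at $p=1$, which lies outside the inattentive regime.
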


            \begin{proof}
            We show the result on a case-by-case basis. First, consider $\kappa$ and $h$ such that $\kappa \ge \overline{\kappa}(h)$ (the area above the black dotted curve in Figure~\ref{fig:hype}). Then, this is equivalent to the case of exogenous inattention, so that $R^*(h)$ increases and, the agent's indirect utility as a function of $h$ and $p^*(h)$, say $V^*(h)$, decreases as $h$ increases. Next, suppose that $\kappa$ and $h$ are such that $\kappa < \underline{\kappa}(h)$ (the area below the blue solid curve in Figure~\ref{fig:hype}). In this case, the seller charges a price of $1/2$ and the buyer is attentive. Hence, $R^*(h) = 1/4$ and $V^*(h) = 1/8 - \kappa$, both of which are independent of $h$.\\
				
				Now, consider the intermediate case in which $\kappa$ and $h$ satisfy $\kappa \in [\underline{\kappa}(h), \overline{\kappa}(h)]$. In this case, the seller's revenue can be written as $[1 - (1 - h)p^*(h)]p^*(h)$ where $p^*(h) = (2\kappa/h)^{1/2}$. Using that the derivative of $p^*(h)$ in $h$ is $-(1/2h)p^*(h)$, the derivative of seller revenue in $h$ is $(1/h)p^*(h)[p^*(h) - 1/2]$, which is increasing if $p^*(h) > 1/2$ and decreasing if $p^*(h) < 1/2$. If $\kappa \ge 1/8$, then $p^*(1) \ge 1/2$. If $\kappa < 1/8$, then $p^*(8\kappa) = 1/2$. Since $\underline{\kappa}(h) < (1/8)h < \overline{\kappa}(h)$, it follows that the seller-preferred degree of hype is $h_s^* = \min\{1, 8\kappa\}$ and the seller's revenue decreases in $h$ for $h>h^*_s$ (i.e., to the right of the red dashed curve depicting $(1/8) h$ in Figure~\ref{fig:hype}) while it increases in $h$ for $h<h_s^*$ (i.e., to the left of that red dashed curve).\\
				
				The buyer's indirect utility takes the form $V^*(h) = 1/2 - p^*(h) + (1/2)(1 - h)p^*(h)^2$ if $\kappa$ and $h$ satisfy $\kappa \in [\underline{\kappa}(h), \overline{\kappa}(h)]$. Note that $V^*(h)$ has derivative equal to $\frac{1}{2h}p^*(h)\left[1 - (1 - h)p^*(h)\right] > 0$. 
				Hence, for fixed $\kappa$, the buyer either prefers $h = 0$ (as $V^*$ decreases in $h$ as long as $\kappa > \overline{\kappa}(h)$) or the largest feasible $h$ if $\kappa < \overline{\kappa}(h)$ (either $h = 1$ or the unique $h$ such that $\underline{\kappa}(h) = \kappa$). The former is preferred if $1/8 > 1/2 - (2\kappa)^{1/2}$ or $\kappa > 9/128$. If $\kappa < 9/128$ the latter case is preferred. This gives $h_b^* = 1$ if $\kappa \in (1/32, 9/128)$ and $h_b^*$ is the unique solution to $\underline{\kappa}(h_b^*) = \kappa$ for $\kappa < 1/32$ (i.e., the inverse of the solid blue curve in Figure~\ref{fig:hype}).

			\end{proof}

            \end{proof}
			
		\end{document}